\definecolor{myurlcolor}{rgb}{0,0,0.7}
\definecolor{myrefcolor}{rgb}{0.1,0,0.9}
\newcommand{\Gen}[0]{\mathsf{Gen}}
\newcommand{\bit}{\{0,1\}}
\newcommand{\SM}{Appendix}
\newcommand{\revA}[1]{{#1}}
\newcommand{\revB}[1]{{#1}}
\newcommand{\revC}[1]{{#1}}
\newcommand{\challenger}[0]{\mathcal{C}}
\newcommand{\adversary}[0]{\mathcal{A}}
\newtheorem{theorem}{Theorem}
\renewcommand{\eqref}[1]{Eq.~(\ref{#1})} %
\def\app#1#2{%
  \mathrel{%
    \setbox0=\hbox{$#1\sim$}%
    \setbox2=\hbox{%
      \rlap{\hbox{$#1\propto$}}%
      \lower1.1\ht0\box0%
    }%
    \raise0.25\ht2\box2%
  }%
}
\newenvironment{proof}[1][\protect\proofname]{\par
	\normalfont\topsep6\p@\@plus6\p@\relax
	\trivlist
	\itemindent\parindent
	\item[\hskip\labelsep\scshape #1]\ignorespaces
}{%
	\endtrivlist\@endpefalse
}
\providecommand{\proofname}{Proof}
\newtheorem{claim}{\protect\claimname}
\newcommand{\bra}[1]{\langle #1|}
\newcommand{\ket}[1]{|#1 \rangle}
\newcommand{\braket}[2]{\langle #1 \vert #2 \rangle}
\newcommand{\ketbra}[2]{|#1 \rangle\!\langle #2 |}
\newcommand{\abs}[1]{\left|#1\right|}
\newcommand{\tr}{\mathrm{tr}}
\newcommand\numberthis{\addtocounter{equation}{1}\tag{\theequation}}
\providecommand{\factname}{Fact}
\providecommand{\theoremname}{Theorem}
\providecommand{\claimname}{Claim}
\providecommand{\lemmaname}{Lemma}
\providecommand{\definitionname}{Definition}
\providecommand{\corollaryname}{Corollary}
\newtheorem{corollary}{\protect\corollaryname}
\def\d{\mathrm{d}}
\newcommand{\subfigimg}[3][,]{%
	\setbox1=\hbox{\includegraphics[#1]{#3}}%
	\leavevmode\rlap{\usebox1}%
	\rlap{\hspace*{2pt}\raisebox{\dimexpr\ht1-0.5\baselineskip}{{\bfseries \large\textsf{#2}}}}%
	\phantom{\usebox1}%
}
\definecolor{KB}{rgb}{0.4,0.3,0.9}
\definecolor{THc}{rgb}{0.9,0.3,0.2}
\newcommand{\sectionMain}[1]{
\let\oldaddcontentsline\addcontentsline%
\renewcommand{\addcontentsline}[3]{}%
\section{#1}
\let\addcontentsline\oldaddcontentsline
}
\newcommand{\prlsection}[1]{{\section{#1}}}
\newcommand{\emphsection}[1]{{\em {#1}.---}}
\newcommand{\eq}[1]{\hyperref[eq:#1]{(\ref*{eq:#1})}}
\newcommand{\ihpc}{Institute of High Performance Computing (IHPC), Agency for Science, Technology and Research (A*STAR), 1 Fusionopolis Way, $\#$16-16 Connexis, Singapore 138632, Republic of Singapore\looseness=-1}
\newcommand{\sutd}{Science, Mathematics and Technology Cluster, Singapore University of Technology and Design, 8 Somapah Road, Singapore 487372, Singapore\looseness=-1}
\newcommand{\qinc}{Quantum Innovation Centre (Q.InC), Agency for Science, Technology and Research (A*STAR), 2 Fusionopolis Way, Innovis \#08-03, Singapore 138634, Republic of Singapore\looseness=-1}
\begin{document}

\title{Pseudorandom unitaries are neither real nor sparse nor noise-robust}

\author{Tobias Haug}
\email{tobias.haug@u.nus.edu}
\affiliation{Quantum Research Center, Technology Innovation Institute, Abu Dhabi, UAE}
\affiliation{Blackett Laboratory, Imperial College London SW7 2AZ, UK}

\author{Kishor Bharti}
\email{kishor.bharti1@gmail.com}
\affiliation{\qinc}
\affiliation{\ihpc}

\author{Dax Enshan Koh}
\email{dax\_koh@ihpc.a-star.edu.sg}
\affiliation{\qinc}
\affiliation{\ihpc}
\affiliation{\sutd}

\begin{abstract}
Pseudorandom quantum states (PRSs) and pseudorandom unitaries (PRUs) possess the dual nature of being efficiently constructible while appearing completely random to any efficient quantum algorithm. In this study, we establish fundamental bounds on pseudorandomness. We show that PRSs and PRUs exist only when the probability that an error occurs is negligible, ruling out their generation on noisy intermediate-scale and early fault-tolerant quantum computers. 
Further, we show that PRUs need imaginarity while PRS do not have this restriction. This implies that quantum randomness requires in general a complex-valued formalism of quantum mechanics, while for random quantum states real numbers suffice.
Additionally, we derive lower bounds on the coherence of PRSs and PRUs, ruling out the existence of sparse PRUs and PRSs. 
We also show that the notions of PRS, PRUs and pseudorandom scramblers (PRSSs) are distinct in terms of resource requirements. 
We introduce the concept of pseudoresources, where states which contain a low amount of a given resource masquerade as high-resource states. 
We define pseudocoherence, pseudopurity and pseudoimaginarity, and identify three distinct types of pseudoresources in terms of their masquerading capabilities.
Our work also establishes rigorous bounds on the efficiency of property testing, demonstrating the exponential complexity in distinguishing real quantum states from imaginary ones, in contrast to the efficient measurability of unitary imaginarity. Further, we show an exponential advantage in imaginarity testing when having access to the complex conjugate of the state.
Lastly, we show that the transformation from a complex to a real model of quantum computation is inefficient, in contrast to the reverse process, which is efficient. 
Our results establish fundamental limits on property testing and provide valuable insights into quantum pseudorandomness.
\end{abstract}

\maketitle

 \let\oldaddcontentsline\addcontentsline%
\renewcommand{\addcontentsline}[3]{}%

Randomness is a key resource for quantum information processing~\cite{ji2018pseudorandom,emerson2003pseudo,brandao2016efficient,brandao2016local,nakata2017efficient,scarani2010guaranteed,law2014quantum,herrero2017quantum,acin2016certified,bera2017randomness,calude2008quantum}. However, a simple counting argument reveals that preparing completely random quantum states is computationally intractable. Thus, a core challenge lies in identifying quantum states and unitaries that can be efficiently prepared while remaining indistinguishable from truly random ones.

To solve this problem, the concepts of pseudorandom quantum states (PRSs), pseudorandom unitaries (PRUs)~\cite{ji2018pseudorandom,schuster2024randomunitariesextremelylow,ma2024construct} and pseudorandom scramblers (PRSSs)~\cite{lu2023quantum,ananth2024pseudorandom} have been introduced. There is no efficient quantum algorithm that can distinguish Haar random states and unitaries from PRSs and PRUs, respectively, yet we can efficiently prepare them. 
Conveniently, PRSs can be generated with much shorter circuits compared to state designs~\cite{ji2018pseudorandom,brakerski2019pseudo,bouland2022quantum} and have been shown to be a weaker primitive than one-way functions~\cite{kretschmer2021quantum}. They have also been used to design cryptographic functionalities such as commitment schemes~\cite{ananth2022cryptography,ananth2022pseudorandom,morimae2022quantum}, pseudo one-time pads~\cite{ananth2022cryptography,ananth2022pseudorandom,morimae2022quantum} and pseudorandom strings~\cite{ananth2023pseudorandom}. 
There is an ongoing search to identify PRSs with as simple a structure as possible. Recent work showed that PRSs can have logarithmic entanglement~\cite{bouland2022quantum} and require at least a linear number of non-Clifford gates~\cite{grewal2023improved}.
However, for most other properties, the minimal requirements to generate PRSs, PRSSs and PRUs are unknown.

Imaginarity and coherence are key resources in quantum information processing, playing crucial roles in achieving a quantum advantage in various tasks. Imaginarity characterizes the degree to which states and operations require non-real numbers for their description. Surprising differences between real and complex descriptions of quantum mechanics can arise~\cite{wu2021resource,kondra2022real,chen2022ruling} and there are tasks that can be accomplished only with operations described by non-real numbers~\cite{renou2021quantum,wu2021operational,li2022testing}.
On the other hand, coherence, which measures the degree of being in a superposition, is a fundamental resource in quantum processing~\cite{baumgratz2014quantifying,streltsov2017colloquium,bu2017cohering,zanardi2017coherence,bu2017maximum,bu2017note} that is intricately linked to entanglement~\cite{streltsov2015measuring,chitambar2016relating}, circuit complexity~\cite{bu2022complexity} and serves as a vital ingredient in quantum algorithms~\cite{hillery2016coherence,matera2016coherent,shi2017coherence,wang2021minimizing}. 

Given their importance, can we verify whether coherence or imaginarity are actually present in a quantum state or unitary? For such tasks, the concept of property testing 
has been developed~\cite{rubinfeld1996robust,goldreich1998property,buhrman2008quantum}. This concept has been studied extensively for other properties~\cite{montanaro2013survey,harrow2013testing,gross2021schur,chen2023testing,she2022unitary,brakerski2021unitary,soleimanifar2022testing,bouland2022quantum}.

Here, we provide fundamental bounds on PRSs, PRSSs, PRUs, as well as testing for imaginarity and coherence as a function of the number of qubits $n$. We show that the number of copies needed to test the imaginarity of states scales as $\Omega(2^{n/2})$ with the number of qubits $n$, while for unitaries we require only $O(1)$ queries. \revB{Further, we show an exponential advantage in imaginarity testing with access to the complex conjugate of states.} For coherence, we provide bounds on the number of copies needed for testing as a function of coherence of the state and unitary. This lower bounds the coherence of PRUs as $\omega(\log(n))$ and their imaginarity as $1-\text{negl}(n)$. With our results, we rule out the existence of real and sparse PRUs. Further, we show that PRSs and PRUs are not robust to noise, with a maximal depolarizing noise probability $p=\text{negl}(n)$. We introduce pseudoresource ensembles that mimic states with high resource value using states with a only a low amount of the given resource, and identify three classes of pseudoresources.
\revA{We also show that unitaries that prepare PRS, PRSS and PRUs are separated in resource complexity, proving a clear separation between the three notions of pseudorandomness.}
Finally, we apply our results to show that there is no efficient way to transform from a complex mode of quantum computing to a real one. Our work establishes rigorous limits on property testing and the generation of pseudorandomness.

\prlsection{Preliminaries}%
\emph{Pseudorandom state (PRS):} An ensemble of quantum states indexed by a key (commonly referred to as a keyed family of states~\cite{ji2018pseudorandom}) is considered pseudorandom if it can be generated efficiently and
appears statistically indistinguishable from Haar random states to an observer with limited computational resources. More precisely, a family of PRSs is a set of \revC{pure}
states $\{\ket{\phi_k}\}_{k \in \mathcal{K}}$ of some key space $\mathcal{K}$  which satisfies two conditions: First, they can be efficiently prepared on quantum computers. Second, given $t=\text{poly}(n)$ copies of the state, they are indistinguishable from Haar random quantum states $S(\mathcal{H})$ of an $n$-qubit Hilbert space $\mathcal{H}$, for any efficient quantum algorithm~\cite{ji2018pseudorandom}. \revC{The notion of pseudorandom states was recently generalized to mixed states~\cite{bansal2024pseudorandom}, which we do not address in this work.}

\emph{Pseudorandom unitary (PRU):} Similarly, one can also define PRUs~\cite{ji2018pseudorandom,schuster2024randomunitariesextremelylow,ma2024construct}. A family of unitary operators is considered pseudorandom if it is both efficiently computable and indistinguishable from Haar random unitaries for an observer with bounded computational power. 
\revC{Moreover, variants of PRUs with restricted security have been proposed~\cite{chen2024efficient,metger2024pseudorandom,brakerski2024real,metger2024simple}.}

\emph{Pseudorandom scrambler (PRSS):} Recently, the notion of PRSS was introduced, which lies in between PRS and PRU: PRSS are an ensemble of unitaries that applied to an arbitrary, fixed state generate PRSs~\cite{lu2023quantum,ananth2024pseudorandom}. \revC{It is straightforward to see that any PRU must also be a PRSS, though it had hitherto been unclear whether all PRSSs are also PRU~\cite{lu2023quantum,ananth2024pseudorandom}. In this work, we provide an explicit counterexample of a PRSS that is not a PRU.}

\revC{
\emph{Quantum resources:} 
In quantum information, quantum resources can be regarded as the expensive fuel needed to carry out quantum processing tasks~\cite{chitambar2019quantum}. For example, a key resource is entanglement, which characterizes the non-local correlations in quantum systems. In fault-tolerant quantum computing, the resource theory of magic plays a significant role~\cite{veitch2014resource}. Furthermore, quantum information processing often requires coherence, which describes the degree to which a state exists as a superposition of computational basis states~\cite{baumgratz2014quantifying,streltsov2017colloquium}. Finally, the degree to which complex numbers are needed to describe states is quantified by the resource theory of imaginarity~\cite{wu2021resource}.

Given a state $\rho$, one characterizes the resource content via quantum resource monotones $Q(\rho)$, which satisfy $Q(\rho)\geq0$~\cite{chitambar2019quantum}. These resource monotones are defined by free operations $F_Q$ that cannot increase the resource, i.e.\ $Q(F_Q(\rho))\leq Q(\rho)$, and free states with $Q(\rho)=0$. While free operations are easy to implement, performing non-trivial tasks within the resource theory typically requires resource states with $Q(\rho)>0$. 
}

\emph{Property Testing:} Property testing checks whether a given state has a particular property or resource $Q$. In particular, given $t$ copies of $\ket{\psi}$, we want to determine whether $\ket{\psi}$ has $Q(\ket{\psi})\le \beta$, where $\beta$ is the threshold for the property and $Q(\ket{\psi})$ is a resource measure with $Q(\ket{\psi})\ge0$. The property tester accepts when $Q(\ket{\psi})\le \beta$, and rejects when $Q(\ket{\psi})\ge \delta$, where $\delta$ is the rejection threshold, which is chosen as $\delta>\beta$.

For further details and formal definitions on cryptography and property testing, refer to the \SM~\ref{sec:basics} and~\ref{sec:propert_test}.%

\begin{table}[htbp]\centering
\begin{tabular}{| c|c|c| }
    \hline
  Resource $Q$  & Pure states
            & Unitaries\\
    \hline
    Purity   &   $O(1)$~\cite{barenco1997stabilization}   &   $O(1)$       \\
\hline
Entanglement   &    $O(1)$~\cite{ekert2002direct}  &    $O(1)$~\cite{harrow2013testing}     \\
    \hline
Magic   &   $O(1)$~\cite{gross2021schur}    &   $O(1)$~\cite{gross2021schur}       \\
    \hline
Coherence   &   $O(1)$ [Prop.~\ref{prop:test_coh_low_sup}]   &   $O(1)$~\cite{zanardi2017coherence}       \\
    \hline
Imaginarity   &    $\Omega(2^{n/2})$ [Thm.~\ref{thm:no_test_imag}]  &   $O(1)$~\cite{huang2021demonstrating}\\
    \hline
\end{tabular}
\caption{Copies $M$ needed for property tester $\mathcal{A}_Q$ for the resources of imaginarity, entanglement, magic, coherence and purity, where we regard pure states and unitaries (except for purity where we regard mixed states and channels). Here, we assume $\beta=0$, $\delta=\Omega(1)$ for the tester and choose a representative measure with the best known protocol for each property. Further details in \SM~\ref{sec:comparison}.}
\label{tab:comparison}
\end{table}

\prlsection{Noise} 
PRSs and PRUs physically prepared on quantum computers are subject to noise originating from the environment or imperfect experimental control. One can characterize the amount of noise via the  property of purity $\gamma(\rho)=\text{tr}(\rho^2)$ of a state $\rho$~\cite{nielsen2002quantum}, where $\gamma=1$ for any pure state and $\gamma=2^{-n}$ for maximally mixed states. It has been shown that PRSs must have maximal purity up to a negligible perturbation~\cite{ananth2022pseudorandom}.
As we now show, PRSs and PRUs are very fragile with respect to noise channels:
\begin{theorem}[PRSs and PRUs are not robust to noise (\SM~\ref{sec:noise})]\label{thm:noise}
Any ensemble of unitaries or states subject to a single-qubit depolarizing noise channel $\Lambda(X)=(1-p) X+p\tr(X)I_1/2$ can be PRSs and PRUs only with at most $p=\mathrm{negl}(n)$ depolarizing probability.
\end{theorem}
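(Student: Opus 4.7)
The plan is to combine the known fact~\cite{ananth2022pseudorandom} that any PRS must have purity $1-\mathrm{negl}(n)$ with a short calculation showing that the tensor-product depolarizing channel $\Lambda^{\otimes n}$ maps every pure state to a state with purity at most $(1-p)^2 + 2^{-n}$. Equating the two bounds forces $p=\mathrm{negl}(n)$.

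For the PRS case, I would expand any pure input in the Pauli basis, $\ketbra{\psi}{\psi} = 2^{-n}\sum_P a_P P$ with $a_I=1$ and $\sum_P a_P^2 = 2^n$. The channel acts diagonally as $\Lambda(I)=I$ and $\Lambda(P)=(1-p)P$ for $P\in\{X,Y,Z\}$, so the noisy state is $\rho' = 2^{-n}\sum_P (1-p)^{w(P)} a_P P$, where $w(P)$ counts the non-identity tensor factors. Using $(1-p)^{2w(P)}\le (1-p)^2$ for $w(P)\ge 1$, its purity satisfies
\begin{equation}
\tr(\rho'^2) = \frac{1}{2^n}\sum_P (1-p)^{2w(P)} a_P^2 \le \frac{1}{2^n} + (1-p)^2\left(1-\frac{1}{2^n}\right).
\end{equation}
Combined with the PRS purity lower bound, this yields $(1-p)^2 \ge 1 - \mathrm{negl}(n)$, forcing $p=\mathrm{negl}(n)$. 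Alternatively one can skip the cited bound and argue directly: the SWAP test with $O(\epsilon^{-2})$ copies distinguishes purity $\ge 1-\epsilon/2$ from purity $\le 1-\epsilon$, so the purity gap $\Omega(p)$ above produces a $\mathrm{poly}(n)$-copy efficient distinguisher from Haar random states (which have purity $1$) whenever $p$ is non-negligible.

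For the PRU case, I would reduce to the PRS case by preparing the efficiently constructible input $\ket{0}^{\otimes n}$, applying the noisy oracle, and feeding the resulting state $\Lambda^{\otimes n}(U\ketbra{0^n}{0^n}U^\dagger)$ into the SWAP test. The bound above applies unchanged, and comparison to a Haar-random unitary acting on $\ket{0}^{\otimes n}$ (which gives a Haar-random pure state with purity $1$) yields the same conclusion.

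The one subtlety is pinning down the noise model for unitary oracles: the adversary's query should return $\Lambda^{\otimes n}\circ U$ rather than $U$, so that depolarization is folded into every invocation. Once this is in place, the argument is a short algebraic calculation together with the efficiency of the SWAP test, and I do not anticipate a serious obstacle.
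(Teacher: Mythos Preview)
Your approach is essentially the paper's: bound the purity of the noisy output and invoke the SWAP test to distinguish from Haar-random states (which are pure). The paper's direct route is your ``alternatively'' branch, without the detour through the cited purity lower bound.

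There is one discrepancy worth flagging. The theorem, as proved in the paper, applies the depolarizing channel to a \emph{single} qubit only, i.e.\ $\Lambda_p(\rho)=(1-p)\rho+\tfrac{p}{2}\,I_1\otimes\tr_1(\rho)$, not $\Lambda^{\otimes n}$ as you assume. This is a weaker noise hypothesis and therefore a stronger statement. Your Pauli-basis computation adapts with one extra line: with only qubit~1 depolarized, the purity is $S_0+(1-p)^2 S_1$ where $S_j=2^{-n}\sum_{P:\,[P_1\neq I]=j}a_P^2$, $S_0+S_1=1$, and $S_0=\tfrac{1}{2}\tr(\tr_1(\rho)^2)\le\tfrac{1}{2}$ for pure $\rho$. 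Hence $S_1\ge\tfrac{1}{2}$ and the purity is at most $1-p+p^2/2$, which is exactly the paper's bound. With that adjustment your argument is correct and matches the paper; the Pauli-basis bookkeeping is a pleasant alternative to the paper's direct expansion of $\tr(\Lambda_p(\rho)^2)$.
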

Theorem~\ref{thm:noise} is proven via the SWAP test~\cite{barenco1997stabilization}. In particular, any state or unitary subject to a local depolarizing channel with $p=\Omega(1/\text{poly}(n))$ noise probability can be efficiently distinguished from Haar random states or unitaries.

\prlsection{Imaginarity of states}%
Next, we discuss the property of imaginarity. An $n$-qubit quantum state $\rho$ is non-imaginary or real when all the coefficients in the expansion
\begin{equation}\label{eq:state}
\rho=\sum_{k,\ell} \rho_{k,\ell}\ket{k}\bra{\ell}\,
\end{equation}
are real, i.e.~$\rho_{k,\ell}\in \mathbb{R}$ for all $k,\ell$.
Note that here we do not allow for flag qubits that serve as an indicator of imaginarity as used in the rebit model of quantum computation~\cite{rudolph20022}.
Further note that imaginarity depends on the chosen canonical basis of the system. For example, redefining the computational basis from $\ket{0},\ket{1}$ to $\ket{+y},\ket{-y}$, where $\ket{\pm y}=\frac{1}{\sqrt{2}}(\ket{0}\pm i\ket{1})$,  changes a real state into an imaginary one. 
Here, we make the usual choice to define imaginarity with respect to the canonical computational basis $\ket{0}$ and $\ket{1}$.

The degree to which a state $\rho$ requires complex numbers can be quantified using measures of imaginarity such as the robustness of imaginarity~\cite{hickey2018quantifying,wu2021resource}
\begin{equation}\label{eq:imag_robust}
    R(\rho)=\min_{\tau}\left\{s\ge0 : \frac{\rho+s\tau}{1+s}\, \text{real}\right\}= \frac{1}{2}\| \rho- \rho^T\|_1
\end{equation}
where the minimization is taken over all quantum states $\tau$ and $\| A \|_1=\tr(\sqrt{AA^\dagger})$ denotes the trace norm.
We now consider the imaginarity of pure states $\rho=\ket{\psi}\bra{\psi}$ 
\begin{equation}\label{eq:imag_rho}
\mathcal{I}(\ket{\psi})\equiv R(\ket{\psi})^2=1-\vert\braket{\psi}{\psi^*}\vert^2\,
\end{equation}
where $\mathcal{I}=0$ only for real states and $0\le\mathcal{I}\le1$. Here, $\ket{\psi^\ast}$ indicates the complex conjugate of $\ket{\psi}$ and we used $\frac{1}{2}\left\| \ket{\psi}\bra{\psi}- \ket{\phi}\bra{\phi}\right\|_1=\sqrt{1-\vert\braket{\psi}{\phi}\vert^2}$.

When given a polynomial number of copies of a state $\ket{\psi}$ which is promised to be either a non-imaginary state or a state with high imaginarity, can we tell those two cases apart?
The answer to this question is negative, i.e. testing the
imaginarity of states requires in general exponentially
many copies.

\begin{theorem}[Imaginarity cannot be efficiently tested for states (\SM~\ref{sec:proof_imag})]\label{thm:no_test_imag}
Any tester $\mathcal{A}_{\mathcal{I}}$ for imaginarity according to Def.~\ref{def:prop-tester} for an \revC{arbitrary} $n$-qubit state $\ket{\psi}$ requires $t=\Omega(2^{n/2})$ copies of $\ket{\psi}$ for $\delta<1-n^22^{-n/2}$ and any $\beta<\delta$. 
\end{theorem}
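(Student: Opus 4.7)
The plan is a standard quantum distinguishability argument using two hard ensembles. Let $\mathcal{E}_0$ be the real Haar ensemble $\ket{\psi}=O\ket{0}$ with $O$ drawn Haar-uniformly from the orthogonal group $O(d)$, $d=2^n$, and $\mathcal{E}_1$ the corresponding complex Haar ensemble with $O$ replaced by a Haar-random unitary. Every state in $\mathcal{E}_0$ has $\mathcal{I}(\ket{\psi})=0$ and so satisfies the accept condition for any $\beta\geq 0$. Using the complex-Haar Weingarten formula, $\mathbb{E}_{\mathcal{E}_1}|\braket{\psi}{\psi^*}|^2=2/(d+1)$, and since $\sum_k\psi_k^2$ behaves like a two-dimensional complex Gaussian of this variance on the unit sphere, a constant fraction of $\mathcal{E}_1$ has $\mathcal{I}(\ket{\psi})\geq 1-2^{-n+1}$, and hence $\geq\delta$ for any $\delta<1-2^{-n+1}$. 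Consequently, any tester using $t$ copies would induce a procedure distinguishing $\mathcal{E}_0$ from $\mathcal{E}_1$ with constant advantage, and by the Helstrom bound this is controlled by the trace distance $\|\rho_R^{(t)}-\rho_C^{(t)}\|_1$ of the ensemble-averaged $t$-copy states.

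Next I would evaluate these averages. Standard Haar integration yields $\rho_C^{(t)}=\Pi_\mathrm{sym}^t/\binom{d+t-1}{t}$, the maximally mixed state on the symmetric subspace of dimension $D_t$. The operator $\rho_R^{(t)}$ also lies in this subspace but is only $O(d)$-invariant, so by Schur's lemma it decomposes as a weighted sum of projections onto the irreducible $O(d)$-subrepresentations $W_{t-2k}$ of harmonic symmetric tensors appearing in $\mathrm{Sym}^t(\mathbb{C}^d)=\bigoplus_{k=0}^{\lfloor t/2\rfloor}W_{t-2k}$. The weight on $W_{t-2k}$ equals $\|\Pi_{W_{t-2k}}\ket{e_1}^{\otimes t}\|^2$ and is given in closed form by the Gegenbauer expansion of $x_1^t$; comparing each such weight against the uniform $\dim(W_{t-2k})/D_t$ in $\rho_C^{(t)}$ and summing absolute differences should yield $\|\rho_R^{(t)}-\rho_C^{(t)}\|_1=O(t^2/d)=O(t^2\cdot 2^{-n})$, matching the birthday-paradox intuition that the real structure is invisible until the $\binom{t}{2}$ pair-overlaps of size $\sim 1/d$ accumulate to $\Theta(1)$. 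A direct calculation for $t=2,3$ already gives trace distances $2(d-1)/(d(d+1))$ and $6(d-1)/((d+1)(d+2))$, consistent with $t(t-1)/d$ to leading order. Imposing $\|\rho_R^{(t)}-\rho_C^{(t)}\|_1=\Omega(1)$ then forces $t=\Omega(\sqrt{d})=\Omega(2^{n/2})$.

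The main obstacle is controlling the combinatorial sum from the Gegenbauer coefficients uniformly in $t$ and $d$; a naive Hilbert--Schmidt or Cauchy--Schwarz bound picks up a factor $\sqrt{D_t}$ and becomes useless even for modest $t$. A fallback, if the representation-theoretic bookkeeping is unwieldy, is to compare the collision moments $\mathbb{E}_{\psi,\phi\sim O(d)}\braket{\psi}{\phi}^{2t}$ and $\mathbb{E}_{\psi,\phi\sim U(d)}|\braket{\psi}{\phi}|^{2t}$ via Wick-type pair contractions, retaining only non-crossing pairings at leading order in $1/d$, and then converting the resulting $\chi^2$-style estimate restricted to each isotypic block into a trace-distance bound via Pinsker on the symmetric subspace.
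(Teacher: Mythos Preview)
Your approach is sound and would yield the claimed bound, but it takes a genuinely different route from the paper. The paper sidesteps the representation-theoretic computation entirely: for the real ensemble it uses the binary phase states $\ket{\psi_f}=2^{-n/2}\sum_x(-1)^{f(x)}\ket{x}$ (the $K=2^n$ subset phase states of~\cite{bouland2022quantum}), which are manifestly real, and then simply imports the ready-made bound $\text{TD}\big(\mathbb{E}_f[\ket{\psi_f}\bra{\psi_f}^{\otimes t}],\,\rho_C^{(t)}\big)=O(t^2/2^n)$ from Theorem~2.1 of that reference. A Levy-lemma concentration step (with an explicitly computed Lipschitz constant $\eta=4$ for $\mathcal{I}$) shows that the high-imaginarity slice $\mathcal{E}_{\mathcal{I}\ge\delta}$ is $2^{-\Omega(2^n)}$-close to full Haar for any $\delta<1-2^{-n+1}$, and the triangle inequality plus Helstrom finish the argument. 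Your route trades this citation for a direct analysis of $\|\rho_R^{(t)}-\rho_C^{(t)}\|_1$ via the $O(d)$-isotypic decomposition of $\mathrm{Sym}^t(\mathbb{C}^d)$; the payoff is a self-contained statement about the real Haar ensemble that is of independent interest, but the paper's shortcut is considerably less work and avoids precisely the Gegenbauer-coefficient obstacle you flag.

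One point to tighten in your reduction: the tester-to-distinguisher conversion gives advantage at least $2p/3-1/3$ where $p=\Pr_{\psi\sim\mathcal{E}_1}[\mathcal{I}(\ket{\psi})\ge\delta]$, so you need $p>1/2$, not merely ``a constant fraction''. Your Gaussian heuristic for $\sum_k\psi_k^2$ does in fact deliver $p\approx 1-e^{-1}>1/2$ at the extremal threshold $\delta=1-2^{-n+1}$, but it is cleaner---and this is exactly what the paper does---to invoke Levy's lemma directly and obtain $p=1-2^{-\Omega(2^n)}$, which makes the reduction uniform in $\delta$ without appealing to distributional approximations.
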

The  proof idea is that Haar random states are highly imaginary, while being indistinguishable from an ensemble of real states. 

For the real states, we use the recently introduced $K$-subset phase states~\cite{bouland2022quantum}
\begin{equation}\label{eq:phaseKstates}
   \ket{\psi_{S,f}}=\frac{1}{\sqrt{K}}\sum_{x\in S}(-1)^{f(x)}\ket{x} 
\end{equation}
where $S$ is a set of binary strings $\{0,1\}^{n}$ with $K=\vert S\vert$ elements and  $f:\{0,1 \}^n \to \{ 0, 1\}$ a binary phase function. The ensemble of~\eqref{eq:phaseKstates} over all $f$ has been shown to be pseudorandom~\cite{brakerski2019pseudo,bouland2022quantum}. 
The ensemble of Haar random states and the ensemble of phase states of all binary phase functions with $K=\omega(\log(n))$ are indistinguishable with polynomial copies~\cite{bouland2022quantum}, which we apply to prove Theorem~\ref{thm:no_test_imag}.

\revA{Thus, PRSs can assume any value of imaginarity as it cannot be tested efficiently, which is evidenced by the existence of both real~\cite{brakerski2019pseudo} and imaginary PRSs~\cite{ji2018pseudorandom}.}
\revC{Note that alternative constructions of real-valued PRSs exist, which do not require negative amplitudes~\cite{giurgica2023pseudorandomness,jeronimo2024pseudorandom}.}

While testing imaginarity is inefficient in general, the scaling is indeed better than tomography as we show with an explicit algorithm in \SM~\ref{sec:meas_imag_ineff}.

\revB{Also note that when we have access to both $\ket{\psi}$ and its complex conjugate $\ket{\psi^\ast}$, we can simply use the SWAP test to efficiently evaluate~\eqref{eq:imag_rho} using $O(1)$ samples. Thus, access to the complex conjugate gives an exponential advantage in imaginarity testing.
\begin{corollary}[Exponential separation in testing imaginarity with conjugate states]\label{cor:seperation}
    Testing the imaginarity of \revC{arbitrary} state $\ket{\psi}$ with access to its complex conjugate $\ket{\psi^\ast}$ requires $O(1)$ copies, while without $\ket{\psi^\ast}$ one requires $t=\Omega(2^{n/2})$ copies.  
\end{corollary}}

\revC{Now, let us assume that we are not given arbitrary states, but states drawn from a restricted class of states. In this case, efficient algorithms to test imaginarity can exist.} In particular, we can efficiently test the imaginarity of pure stabilizer states, i.e.~states that are generated by compositions of the CNOT, Hadamard, and S gates applied to computational basis states~\cite{gottesman1998heisenberg}: 
\begin{claim}[Efficient measurement of imaginarity for stabilizer states (\SM~\ref{sec:stabMeasimag})]\label{thm:meas_imag_stab}
For an $n$-qubit stabilizer state $\ket{\chi}\in\mathrm{STAB}$, imaginarity is given by
\begin{equation}
\mathcal{I}_\mathrm{STAB}(\ket{\chi})=1-2^{-n}\sum_{\sigma\in\mathcal{P}}\vert\bra{\chi}\sigma\ket{\chi^*}\vert^2\vert\bra{\chi}\sigma\ket{\chi}\vert^2
\end{equation}
where $\mathcal{P}$ is the set of all $n$-qubit Pauli strings with phase $+1$.
$\mathcal{I}_\mathrm{STAB}(\ket{\chi})$ can be efficiently measured within additive precision $\delta$ with a failure probability $\nu$ using at most $t=O(\delta^{-2}\log(1/\nu))$ copies.
\end{claim}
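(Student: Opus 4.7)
The plan has two parts: verify that the claimed formula equals $\mathcal{I}(\ket{\psi}) = 1 - |\braket{\psi}{\psi^*}|^2$ on stabilizer states, and give a four-copies-per-trial sampling protocol achieving the stated complexity.

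For the formula, I would exploit the Pauli structure of stabilizer states. Let $G \subset \mathcal{P}$ denote the $2^n$ unsigned Paulis $\sigma$ for which $\pm \sigma$ lies in the stabilizer group of $\ket{\psi}$. Then $|\bra{\psi}\sigma\ket{\psi}|^2$ equals $1$ for $\sigma \in G$ and $0$ otherwise, so only $G$ contributes to the sum. For $\sigma \in G$ we have $\sigma \ket{\psi} = a_\sigma \ket{\psi}$ with $a_\sigma \in \{\pm 1\}$, hence $\bra{\psi} \sigma \ket{\psi^*} = a_\sigma \braket{\psi}{\psi^*}$ and $|\bra{\psi} \sigma \ket{\psi^*}|^2 = |\braket{\psi}{\psi^*}|^2$ independently of the particular choice of $\sigma$. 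Summing this constant over the $|G| = 2^n$ contributing terms and multiplying by $2^{-n}$ recovers $|\braket{\psi}{\psi^*}|^2 = 1 - \mathcal{I}(\ket{\psi})$, matching the right-hand side of the formula.

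For the measurement protocol, I would combine Bell sampling with a Pauli verification step. Performing the Bell measurement on two copies of $\ket{\psi}$ across matched qubits produces a Pauli label $\sigma$ with probability $p_\sigma = |\bra{\psi} \sigma \ket{\psi^*}|^2 / 2^n$, which follows from $\bra{\phi^+} (\ket{\alpha} \otimes \ket{\beta}) = 2^{-n/2} \braket{\alpha^*}{\beta}$ applied to the $2n$-qubit maximally entangled state $\ket{\phi^+} = 2^{-n/2} \sum_x \ket{x}\ket{x}$. By the stabilizer identity above, the total mass of this distribution on $G$ is exactly $1 - \mathcal{I}(\ket{\psi})$. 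Without a priori access to $G$, I would decide $\sigma \in G$ implicitly by measuring the observable $\sigma$ on two further copies of $\ket{\psi}$, obtaining $r_1, r_2 \in \{\pm 1\}$: if $\sigma \in G$ both outcomes equal $a_\sigma$ deterministically and $r_1 r_2 = 1$; if $\sigma \notin G$ then $\bra{\psi}\sigma\ket{\psi} = 0$, so the outcomes are independent unbiased $\pm 1$ and $\mathbb{E}[r_1 r_2 \mid \sigma] = 0$. The four-copy statistic $X = 1 - r_1 r_2 \in \{0, 2\}$ therefore satisfies $\mathbb{E}[X] = \mathcal{I}(\ket{\psi})$.

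Averaging $X$ over $T$ independent trials and applying Hoeffding's inequality (since $X$ is bounded) gives additive error $\delta$ with failure probability $\nu$ for $T = O(\delta^{-2} \log(1/\nu))$, i.e.\ $O(\delta^{-2} \log(1/\nu))$ copies in total. The step I expect to require most care is the Pauli verification trick itself: we implicitly test membership in $G$ without ever learning the stabilizer description, relying crucially on the sharp stabilizer dichotomy that $\sigma \notin G$ forces $\bra{\psi} \sigma \ket{\psi} = 0$. Without this exact zero the $r_1 r_2$ estimator would pick up a state-dependent bias and no longer track $\mathcal{I}$, which is also why the same protocol is not expected to yield $\mathcal{I}$ for generic (non-stabilizer) states.
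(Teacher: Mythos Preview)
Your proposal is correct and matches the paper's proof essentially line for line: the paper also reduces the formula via the stabilizer dichotomy $|\bra{\psi}\sigma\ket{\psi}|^2\in\{0,1\}$ together with $\sigma\ket{\psi}=\pm\ket{\psi}$ for $\sigma\in G$, and its Algorithm~\ref{alg:imagwitness} is precisely your four-copy protocol (Bell-sample $\sigma$ from two copies, then measure $\sigma$ on two fresh copies and record $\lambda_1\lambda_2$), with Hoeffding giving the sample complexity. The only cosmetic difference is that the paper phrases the unbiasedness of the estimator directly as $\mathbb{E}[\lambda_1\lambda_2\mid\sigma]=|\bra{\psi}\sigma\ket{\psi}|^2$ (valid for any state, by independence of the two copies), whereas you split into the $\sigma\in G$ / $\sigma\notin G$ cases; both arrive at the same conclusion.
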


\prlsection{Imaginarity of unitaries}%
A channel is real if and only if its associated Choi-state is a real state~\cite{hickey2018quantifying}. Using this fact, we define imaginarity for an $n$-qubit unitary $U$ via its Choi-state $U\otimes I \ket{\Phi}$ where $\ket{\Phi}=2^{-n/2}\sum_{k=0}^{2^n-1}\ket{k}\ket{k}$ is the maximally entangled state
\begin{equation}
\mathcal{I}_\text{p}(U) \equiv  \mathcal{I}(U\otimes I \ket{\Phi})= 1-4^{-n}\left\vert\tr(U^\dagger U^\ast)\right\vert^2\,.
\end{equation}
In contrast to states,  imaginarity of unitaries can be efficiently tested:
\begin{claim}[Imaginarity of unitaries can be measured efficiently (\SM~\ref{sec:meas_imag_U} or~\cite{huang2021demonstrating})]\label{clm:meas_unitary}
Measuring $\mathcal{I}_\text{p}(U)$ within additive precision $\delta$ with a failure probability $\nu$ requires at most $t=O(\delta^{-2}\log(1/\nu))$ queries to $U$.
\end{claim}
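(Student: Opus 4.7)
The plan is to reduce the estimation of $\mathcal{I}_{\mathrm{p}}(U)$ to a SWAP test between two efficiently preparable pure states. First, I would rewrite the quantity of interest purely as a state overlap. By definition, $\mathcal{I}_{\mathrm{p}}(U)=1-|\braket{U}{U^\ast}|^2$, where $\ket{U}\defeq (U\otimes I)\ket{\Phi}$ is the Choi state and $\ket{U^\ast}\defeq (U^\ast \otimes I)\ket{\Phi}$ is the Choi state of the complex conjugate channel. A direct calculation gives $\braket{U}{U^\ast}=\bra{\Phi}(U^\dagger U^\ast \otimes I)\ket{\Phi}=\tr(U^\dagger U^\ast)/2^n$, recovering $\mathcal{I}_{\mathrm{p}}(U)=1-4^{-n}|\tr(U^\dagger U^\ast)|^2$.

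Next I would show that both Choi states can be prepared with a single query to $U$ or $U^\dagger$, respectively. Using the identity $(A\otimes I)\ket{\Phi}=(I\otimes A^T)\ket{\Phi}$ applied to $A=U^\ast$, we obtain $\ket{U^\ast}=(I\otimes U^\dagger)\ket{\Phi}$. Thus, starting from two independent copies of $\ket{\Phi}$ (prepared from $2n$ ancilla qubits by a depth-one layer of Hadamard and CNOT gates), one query to $U$ produces $\ket{U}$ and one query to $U^\dagger$ produces $\ket{U^\ast}$, with no need to implement $U^\ast$ itself.

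With the two Choi states in hand, I would run the SWAP test between $\ket{U}$ and $\ket{U^\ast}$ on their $2n$-qubit registers. The acceptance probability equals $p_{\mathrm{acc}}=\tfrac{1}{2}(1+|\braket{U}{U^\ast}|^2)=\tfrac{1}{2}(2-\mathcal{I}_{\mathrm{p}}(U))$, so an estimator $\hat{p}$ of $p_{\mathrm{acc}}$ yields the estimator $\widehat{\mathcal{I}}_{\mathrm{p}}=2-2\hat{p}$ of $\mathcal{I}_{\mathrm{p}}(U)$. Since each SWAP-test outcome is a Bernoulli random variable, Hoeffding's inequality implies that $t=O(\delta^{-2}\log(1/\nu))$ independent repetitions suffice to guarantee $|\widehat{\mathcal{I}}_{\mathrm{p}}-\mathcal{I}_{\mathrm{p}}(U)|\le \delta$ with probability at least $1-\nu$. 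Each repetition uses one query to $U$ and one to $U^\dagger$, establishing the claimed $t=O(\delta^{-2}\log(1/\nu))$ scaling.

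There is no real obstacle here beyond the bookkeeping: the key observation is the conversion $\ket{U^\ast}=(I\otimes U^\dagger)\ket{\Phi}$, which avoids the need to realize the complex-conjugate channel. The stark contrast with Theorem~2, where testing imaginarity of states required $\Omega(2^{n/2})$ copies, is explained precisely by this trick — having query access to a unitary allows us to construct the conjugate Choi state via $U^\dagger$, whereas for a single unknown state $\ket{\psi}$ one cannot in general produce $\ket{\psi^\ast}$.
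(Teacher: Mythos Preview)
Your argument is correct, but the route differs from the paper's. The paper prepares a single $2n$-qubit state $\ket{\eta}=(U\otimes U)\ket{\Phi}$ and measures the Bell projector $\ket{\Phi}\bra{\Phi}$; via the ricochet property this gives $P_\Phi(U)=4^{-n}\lvert\tr(U U^{T})\rvert^{2}=1-\mathcal{I}_{\mathrm p}(U)$ directly, after which Hoeffding yields the same $O(\delta^{-2}\log(1/\nu))$ bound. Your approach instead prepares the two Choi states $\ket{U}$ and $\ket{U^\ast}=(I\otimes U^\dagger)\ket{\Phi}$ and runs a SWAP test between them.

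Both are valid and both hinge on the same ricochet identity, but they trade off resources differently. The paper's Bell-projection circuit uses only \emph{forward} queries to $U$ and acts on $2n$ qubits, which is convenient when one has black-box access to $U$ but not to its inverse (as in some PRU models). Your SWAP-test version needs one call to $U$ and one to $U^\dagger$ per repetition and works on $4n{+}1$ qubits; on the other hand, it makes the identification $\mathcal{I}_{\mathrm p}(U)=\mathcal{I}(\ket{U})$ with the state-imaginarity formula fully explicit and highlights precisely why the unitary case escapes the $\Omega(2^{n/2})$ lower bound of Theorem~\ref{thm:no_test_imag}: query access to the unitary lets you manufacture the conjugate Choi state from $U^\dagger$. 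If you want to match the paper's slightly stronger resource model, you could simply note that $\lvert\braket{U}{U^\ast}\rvert^{2}=\lvert\bra{\Phi}(U\otimes U)\ket{\Phi}\rvert^{2}$ and replace the SWAP test by a single Bell measurement on $(U\otimes U)\ket{\Phi}$.
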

The efficient algorithm to measure $\mathcal{I}_\text{p}$ is found in \SM~\ref{sec:meas_imag_U} or~\cite{huang2021demonstrating}. 
\revA{An earlier quantum interactive proof to distinguish Haar random complex and orthogonal unitaries is given in Ref.~\cite{aharonov2021quantum}.}
This surprising difference between unitaries and states can be intuitively understood with the following argument: Efficient measurement of imaginarity for a state $\ket{\psi}$ requires the complex conjugate $\ket{\psi^\ast}$ which cannot be efficiently prepared when having only access to copies of $\ket{\psi}$ (see also Corollary~\ref{cor:efficientstar}). In contrast, for unitary $U$ one can use the ricochet property $(U\otimes I_n)\ket{\Phi}=(I_n\otimes (U^\ast)^{\dagger})\ket{\Phi}$ to gain indirect access to the complex conjugate $U^\ast$.

\begin{theorem}[\SM~\ref{sec:imag_PRU}]\label{thm:imag_unitary}
    PRUs require imaginarity $\mathcal{I}_\text{p}=1-\mathrm{negl}(n)$. 
\end{theorem}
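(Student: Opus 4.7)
The plan is to contrapose via the efficient imaginarity estimator of Claim~\ref{clm:meas_unitary}. If a purported PRU family had $1 - \mathcal{I}_\text{p}(U_k) \ge 1/\mathrm{poly}(n)$ for a non-negligible fraction of keys $k$, then the tester would give an efficient distinguisher from Haar random unitaries, contradicting pseudorandomness.

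First I would show that Haar random unitaries concentrate at $\mathcal{I}_\text{p}(U) = 1 - \mathrm{negl}(n)$. Since $|\tr(U^\dagger U^\ast)|^2 = |\tr(U^T U)|^2$, I would compute $\mathbb{E}_\mathrm{Haar}|\tr(U^T U)|^2$ via the second-order Weingarten formula on $U(N)$ with $N=2^n$. Expanding $|\tr(U^T U)|^2 = \sum_{i,j,k,l} U_{ji} U_{ji} U^\ast_{lk} U^\ast_{lk}$, the repeated row/column indices make the $S_2$ permutation structure degenerate, and the Weingarten sum collapses to $\sum_{\sigma,\tau \in S_2}\mathrm{Wg}_N(\sigma\tau^{-1}) = 2/(N(N+1))$. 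With $N^2$ surviving index tuples this gives $\mathbb{E}_\mathrm{Haar}|\tr(U^T U)|^2 = 2N/(N+1)$, hence $\mathbb{E}_\mathrm{Haar}[1-\mathcal{I}_\text{p}(U)] = 2/(N(N+1)) = O(4^{-n})$. Markov's inequality then yields $\Pr_\mathrm{Haar}[1-\mathcal{I}_\text{p}(U) \ge \epsilon] = O(4^{-n}/\epsilon)$, negligible for any $\epsilon = 1/\mathrm{poly}(n)$.

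Next I would build the distinguisher. Suppose polynomials $q,q'$ satisfy $\Pr_k[1-\mathcal{I}_\text{p}(U_k) \ge 1/q(n)] \ge 1/q'(n)$. The distinguisher $\mathcal{D}$ receives oracle access to an unknown unitary and runs the tester of Claim~\ref{clm:meas_unitary} with precision $\delta = 1/(3q(n))$ and failure probability $1/n$, costing $O(q(n)^2 \log n)$ oracle calls; it outputs ``pseudo'' iff the estimate lies below $1 - 1/(2q(n))$. Under the Haar distribution the estimate exceeds the threshold except with negligible probability, while under the contrary PRU assumption at least a $1/q'(n)$ fraction of keys yield estimates below the threshold except with probability $1/n$. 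Averaging over keys gives an $\Omega(1/q'(n))$ distinguishing advantage, contradicting the PRU property.

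The main technical point to verify is that the Claim~\ref{clm:meas_unitary} tester respects the black-box access model available to a PRU adversary, i.e.\ that it invokes only $U$ (and possibly $U^\dagger$ or controlled-$U$) rather than matrix entries or $U^\ast$ directly. This is ensured by the ricochet identity $(U \otimes I)\ket{\Phi} = (I \otimes (U^\ast)^\dagger)\ket{\Phi}$ emphasised in the text: the tester accesses $U^\ast$ only implicitly, through ordinary applications of $U$ to halves of a maximally entangled state, followed by standard overlap estimation. With this, $\mathcal{D}$ is a legitimate PRU adversary and the reduction goes through.
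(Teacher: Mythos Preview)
Your proposal is correct and follows essentially the same approach as the paper: both use the efficient estimator of Claim~\ref{clm:meas_unitary}, compute $\mathbb{E}_{\mathrm{Haar}}[1-\mathcal{I}_\text{p}(U)]=2/(N(N+1))=O(4^{-n})$ (the paper records this as a separate lemma via the same Weingarten computation), and conclude by contraposition that any family with $1-\mathcal{I}_\text{p}=\Omega(1/\mathrm{poly}(n))$ is efficiently distinguishable from Haar. Your version is somewhat more explicit---you invoke Markov's inequality on the Haar side, set a concrete threshold $1-1/(2q(n))$, and verify that the tester uses only black-box calls to $U$ via the ricochet identity---whereas the paper's proof simply compares the single-shot acceptance probabilities $\Omega(n^{-c})$ versus $O(4^{-n})$ of the projector $\ket{\Phi}\!\bra{\Phi}$ without spelling out these details; but the underlying argument is the same.
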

Theorem~\ref{thm:imag_unitary} is proven using Claim~\ref{clm:meas_unitary}. In particular, one can efficiently distinguish Haar random unitaries from low-imaginarity unitaries, which excludes them from being PRUs.

\begin{corollary}\label{cor:real_U}
    Real unitaries are not pseudorandom. 
\end{corollary}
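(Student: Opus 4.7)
The plan is to derive Corollary~\ref{cor:real_U} as an essentially immediate consequence of Theorem~\ref{thm:imag_unitary}. First, I would compute the imaginarity $\mathcal{I}_\text{p}$ of a generic real unitary $U$. By definition, a real $n$-qubit unitary has all matrix entries in the canonical computational basis equal to their complex conjugates, so $U^* = U$. Substituting this into the unitary imaginarity measure defined earlier in the excerpt yields
\begin{equation}
\mathcal{I}_\text{p}(U) = 1 - 4^{-n}\left|\tr(U^\dagger U^*)\right|^2 = 1 - 4^{-n}\left|\tr(U^\dagger U)\right|^2 = 1 - 4^{-n}(2^n)^2 = 0,
\end{equation}
using $U^\dagger U = I$ and $\tr(I) = 2^n$. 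Thus every real unitary, regardless of the ensemble it is drawn from, attains the minimal value $\mathcal{I}_\text{p} = 0$.

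Next, I would invoke Theorem~\ref{thm:imag_unitary}, which states that any PRU ensemble must satisfy $\mathcal{I}_\text{p} = 1 - \mathrm{negl}(n)$. Since $0$ is not of the form $1 - \mathrm{negl}(n)$, a real unitary cannot lie in any PRU family. By contrapositive this means that an ensemble composed entirely of real unitaries cannot be pseudorandom, establishing the corollary.

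The only substantive thing to verify is the consistency of conventions: the notion of ``real unitary'' here refers to real-valued matrix entries in the same canonical basis $\{\ket{0},\ket{1}\}^{\otimes n}$ used to define $\mathcal{I}_\text{p}$, which is the convention fixed in the preliminaries on imaginarity. As an alternative route with the same logical content, one could argue directly from Claim~\ref{clm:meas_unitary}: because $\mathcal{I}_\text{p}$ can be estimated to additive precision $\delta = \Omega(1)$ using $O(1)$ oracle queries, and because a Haar random unitary concentrates near $\mathcal{I}_\text{p} = 1$, an efficient distinguisher can separate any real unitary (value $0$) from a Haar random one with constant advantage, contradicting the PRU indistinguishability requirement.

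I do not anticipate any technical obstacle; the corollary is a direct specialization of Theorem~\ref{thm:imag_unitary} to the deterministic case $\mathcal{I}_\text{p}=0$, and the only bookkeeping is the one-line substitution $U^*=U$ into the imaginarity formula.
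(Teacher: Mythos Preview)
Your proposal is correct and follows essentially the same approach as the paper: the paper's proof is a single sentence stating that the corollary follows from Theorem~\ref{thm:imag_unitary} by considering the special case $\mathcal{I}_\text{p}(U)=0$. Your explicit computation $\mathcal{I}_\text{p}(U)=1-4^{-n}|\tr(U^\dagger U)|^2=0$ for real $U$ just spells out what the paper leaves implicit, and your alternative route via Claim~\ref{clm:meas_unitary} is precisely how Theorem~\ref{thm:imag_unitary} itself is established.
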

Corollary~\ref{cor:real_U} follows from Theorem~\ref{thm:imag_unitary} by considering the special case when $U$ satisfies $\mathcal{I}_\text{p}(U)=0$.
\revC{In particular, pseudorandom permutations, which were proposed as candidates for PRUs in Ref.~\cite{ji2018pseudorandom}, cannot be PRUs due to Corollary~\ref{cor:real_U}. 
Additionally, we note that Corollary~\ref{cor:real_U} does not apply to pseudorandom isometries, which can be implemented using only real operations~\cite{ananth2024pseudorandom,metger2024simple}. Moreover, we note that by assuming a weaker notion of security where the unitary is applied on non-entangled input states, one can achieve somewhat secure PRUs that are real-valued~\cite{brakerski2024real}.
}

\prlsection{Coherence of states}%
Coherence measures the ``amount'' of off-diagonal coefficients of a state $\rho$. One measure of coherence is the relative entropy of coherence, which is defined as~\cite{baumgratz2014quantifying,streltsov2017colloquium}
\begin{align*}
    \mathcal{C}(\rho)&=\text{tr}(\rho\ln(\rho))-\text{tr}(\rho_\text{d}\ln(\rho_\text{d}))\\
    &=-\sum_k{\vert c_k\vert^2}\ln(\vert c_k\vert^2)\,,\numberthis
\end{align*}
where $\rho_\text{d} = \sum_k \bra k \rho  \ket k \ket k \! \bra k
$ is obtained by $\rho$ by keeping only the diagonal coefficients and setting all other entries to zero, and $c_k$ are the amplitudes of $\ket{\psi}=\sum_k c_k\ket{k}$ in the computational basis. Coherence is minimal for computational basis states $\mathcal{C}(\ket{k})=0$ and maximal for $\ket{+}^{\otimes n}$, $\ket{+}=\frac{1}{\sqrt{2}}(\ket{0}+\ket{1})$ with $\mathcal{C}(\ket{+}^{\otimes n})=n\ln(2)$.
We now test for the property of coherence:
\begin{theorem}[Lower bound on coherence testing for states (\SM~\ref{sec:coherenceTesting})]\label{eq:no_test_coh}
Any tester $\mathcal{A}_{\mathcal{C}}$ for relative entropy of coherence $\mathcal{C}$ according to Def.~\ref{def:prop-tester} for an \revC{arbitrary} $n$-qubit state $\ket{\psi}$ requires $t=\Omega(2^{\beta/2})$ for $\delta<n\ln(2)-1$ and any $\beta<\delta$.
\end{theorem}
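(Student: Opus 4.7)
The plan is to mirror the strategy of Theorem~\ref{thm:no_test_imag} by exhibiting two ensembles of $n$-qubit pure states that satisfy the promise of the tester on opposite sides but are statistically indistinguishable given $o(2^{\beta/2})$ copies. For the low-coherence ensemble I would take the $K$-subset phase states $\ket{\psi_{S,f}}$ of \eqref{eq:phaseKstates} with a random subset $S\subset\{0,1\}^n$ of cardinality $K$ and a random binary phase function $f$. Each such state has uniform diagonal entries $|c_x|^2=1/K$ on $S$ and zero elsewhere, so $\mathcal{C}(\ket{\psi_{S,f}})=\ln K$ exactly. Choosing $K=\lfloor e^{\beta}\rfloor$ forces $\mathcal{C}(\ket{\psi_{S,f}})\le\beta$ for every member, so a valid tester must accept this ensemble with probability at least $2/3$.

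For the high-coherence ensemble I would use Haar random pure states. Here I need (i) that the diagonal distribution $\{|\braket{k}{\psi}|^2\}_k$ of a Haar random $\ket{\psi}$ has expected Shannon entropy $H_{2^n+1}-H_2=n\ln 2-1+\gamma_{\text{EM}}+o(1)$, and (ii) that this functional concentrates around its mean via Levy's lemma on $S(\mathcal{H})$. Together these imply that for any fixed $\delta<n\ln 2-1$ one has $\Pr[\mathcal{C}(\ket{\psi})\ge\delta]=1-\negl(n)$, so a valid tester must reject the Haar ensemble with probability at least $2/3-\negl(n)$.

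The final ingredient is the pseudorandomness bound of Ref.~\cite{bouland2022quantum}: the $t$-copy mixtures of random $K$-subset phase states and of Haar random states are $O(t^2/K)$-close in trace distance. Hence the acceptance probabilities of any $t$-copy tester on the two ensembles differ by at most $O(t^2/K)$. Combining with the $\ge 1/3$ completeness/soundness gap established above gives $t=\Omega(\sqrt{K})=\Omega(e^{\beta/2})=\Omega(2^{\beta/2})$, absorbing the factor $1/\ln 2$ into the hidden constant.

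The main obstacle I expect is step~(ii): the relative entropy of coherence has unbounded derivatives near basis states, so Levy's lemma cannot be applied to $\mathcal{C}$ directly. The natural workaround is to truncate the logarithm at an additive scale $\varepsilon$ decaying polynomially in $2^{-n}$, control the contribution from amplitudes below the cutoff by a union bound against the known marginal distribution of $|\braket{k}{\psi}|^2$, and apply Levy's lemma only to the Lipschitz part. This should give a deviation bound $\exp(-\Omega(2^{n}/\mathrm{poly}(n)))$, more than enough to make the rejection probability $1-\negl(n)$. Once this concentration is secured, the rest of the argument tracks the imaginarity proof, with the only difference being that $K$ is now tied to $\beta$ instead of being a fixed superpolynomial function of $n$.
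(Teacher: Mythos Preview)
Your proposal is correct and follows essentially the same route as the paper: low-coherence $K$-subset phase states with $\mathcal{C}=\ln K$, high-coherence (restriction of) Haar random states, and the $O(t^2/K)$ trace-distance bound from Ref.~\cite{bouland2022quantum} to force $t=\Omega(\sqrt{K})$. The only substantive difference is your step~(ii): rather than deriving concentration of $\mathcal{C}$ from scratch via a truncated Levy argument, the paper simply imports both the mean $\mathcal{C}_{\mathrm{Haar}}=\sum_{k=2}^{2^n}1/k$ and the concentration inequality $\Pr(|\mathcal{C}(\ket{\psi})-\mathcal{C}_{\mathrm{Haar}}|\ge\epsilon)\le 2\exp\bigl(-2^n\epsilon^2/(36\pi^3\ln 2\,(n\ln 2)^2)\bigr)$ directly from Ref.~\cite{singh2016average}, so the Lipschitz obstacle you flag never needs to be confronted. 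The paper also packages the high-coherence side slightly differently, defining $\mathcal{E}_{\mathcal{C}\ge\delta}$ and showing it is $2^{-\Omega(2^n)}$-close to Haar before invoking the triangle inequality, but this is equivalent to your ``reject Haar with probability $\ge 2/3-\mathrm{negl}(n)$'' formulation.
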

The proof idea for Theorem~\ref{eq:no_test_coh} is similar to the one for imaginarity testing. First, we construct an ensemble of low coherence states via the ensemble of $K$-subset phase states with $\beta=\ln(K)$. 
Next, we build the ensemble of highly coherent states $\ket{\psi}\in \mathcal{E}_{\mathcal{C}\ge \delta}$ with coherence $\mathcal{C}(\ket{\psi})\ge \delta$ which cannot be efficiently distinguished from Haar random states~\cite{singh2016average}. 
We then use the closeness between Haar random states and subset phase states~\cite{bouland2022quantum} to prove Theorem~\ref{eq:no_test_coh}.

\begin{claim}[\SM~\ref{sec:coherenceTesting}]\label{clm:pseudo_coh}
PRSs require $\mathcal{C}=\omega(\log(n))$ relative entropy of coherence.
\end{claim}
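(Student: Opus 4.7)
The plan is to argue by contrapositive: if an ensemble $\{\ket{\phi_k}\}_{k\in\mathcal{K}}$ had $\mathcal{C}(\ket{\phi_k}) = O(\log(n))$ for a non-negligible fraction of keys, I would build an efficient quantum distinguisher from Haar random states, contradicting pseudorandomness. The first step is to convert the bound on relative entropy of coherence into a lower bound on the computational-basis collision probability. Writing $p_k = |\braket{k}{\phi}|^2$, we have $\mathcal{C}(\ket{\phi}) = H_1(p)$; combined with the standard inequality $H_2(p)\le H_1(p)$, any bound $\mathcal{C}(\ket{\phi})\le c\log(n)$ forces $\sum_k p_k^2 \ge 2^{-c\log(n)} = n^{-c}$. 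In contrast, for Haar random $n$-qubit states $\mathbb{E}[\sum_k p_k^2] = 2/(2^n+1)$, with Levy-type concentration giving $\mathcal{C} = n\ln(2) - O(1)$ with overwhelming probability~\cite{singh2016average}, so $\sum_k p_k^2 = O(2^{-n})$ except with negligible probability.

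Given this gap, the distinguisher is immediate: measure $t = \text{poly}(n)$ copies in the computational basis and count collisions among the outcomes. For $t = n^{c/2+1}$, the expected number of collisions is $\Omega(n^2)$ under the low-coherence hypothesis but only $\text{poly}(n)\cdot 2^{-n} = \text{negl}(n)$ under the Haar hypothesis; a Chernoff bound then gives a constant distinguishing advantage with negligible failure probability. An equivalent implementation is to dephase each copy in the computational basis and apply a SWAP test to pairs of the resulting diagonal states $\rho_\text{d}$, which directly estimates $\tr(\rho_\text{d}^2) = \sum_k p_k^2$; this is essentially the efficient test from Prop.~\ref{prop:test_coh_low_sup} in the Appendix, extended to a polynomial threshold.

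Since such an efficient distinguisher would contradict the PRS property, any PRS must satisfy $\mathcal{C}(\ket{\phi_k}) = \omega(\log(n))$. The main subtlety I anticipate is the quantification over keys: PRS security is an average-case notion, so one must argue that the coherence bound holds for all but a negligible fraction of keys. This is not a serious obstacle, because even an inverse-polynomial fraction of ``bad'' keys with $\mathcal{C} = O(\log(n))$ would already translate into an inverse-polynomial distinguishing advantage and hence break pseudorandomness. Apart from this quantifier chasing, the argument proceeds in direct parallel to the earlier impossibility results such as Theorem~\ref{thm:imag_unitary} and Corollary~\ref{cor:real_U}.
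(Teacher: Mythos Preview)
Your proposal is correct and follows essentially the same approach as the paper: both convert an $O(\log n)$ bound on $\mathcal{C}=H_1(p)$ into an inverse-polynomial lower bound on the collision probability $\sum_k p_k^2$ via the R\'enyi/Jensen inequality $H_2\le H_1$, and then distinguish from Haar random states by estimating this quantity, for which $\mathbb{E}_{\text{Haar}}[\sum_k p_k^2]=2/(2^n+1)$. Your collision-counting and dephase-then-SWAP tests are operationally equivalent to the paper's two-copy projector $\Pi_{\mathcal{C}}=\bigotimes_{k=1}^n(\ket{00}\bra{00}+\ket{11}\bra{11})$ from Prop.~\ref{prop:test_coh_low_sup}, all of which measure $\sum_k |c_k|^4$; your additional care about the quantification over keys is a welcome clarification that the paper leaves implicit.
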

To prove Claim~\ref{clm:pseudo_coh}, we provide an efficient measurement scheme for the coherence measure based on the Hilbert-Schmidt norm $\sum_k\vert c_k\vert^4$~\cite{baumgratz2014quantifying}. Together with Jensen's inequality, this implies that one can efficiently distinguish Haar random states from states with $\mathcal{C}=O(\log(n))$.

\begin{corollary}\label{cor:sparse_state}
States with polynomial support on the computational basis, i.e.~$\ket{\psi}=\sum_{k\in \vert S \vert } c_k \ket{k}$ with $\vert S \vert =\mathrm{poly}(n)$, are not pseudorandom.
\end{corollary}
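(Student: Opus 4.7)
The plan is to reduce Corollary~\ref{cor:sparse_state} directly to Claim~\ref{clm:pseudo_coh}, which already says PRSs must carry relative entropy of coherence $\mathcal{C}=\omega(\log(n))$. So all I need to show is that any state $\ket{\psi}=\sum_{k\in S}c_k\ket{k}$ with $\vert S\vert=\mathrm{poly}(n)$ has $\mathcal{C}(\ket{\psi})=O(\log(n))$, which is not $\omega(\log(n))$, and therefore such states cannot be pseudorandom.

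First I would observe that the distribution $p_k=\vert c_k\vert^2$ is supported on at most $\vert S\vert$ points, since $c_k=0$ whenever $k\notin S$. The relative entropy of coherence is precisely the Shannon entropy of this distribution, $\mathcal{C}(\ket{\psi})=-\sum_k p_k\ln p_k$. By the standard fact that the Shannon entropy of a distribution supported on at most $M$ outcomes is upper bounded by $\ln(M)$ (achieved by the uniform distribution on the support), we immediately get
\begin{equation}
\mathcal{C}(\ket{\psi})\le \ln(\vert S\vert)=\ln(\mathrm{poly}(n))=O(\log(n)).
\end{equation}

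Next I invoke Claim~\ref{clm:pseudo_coh}: any PRS must satisfy $\mathcal{C}=\omega(\log(n))$. Since the bound above shows that every polynomially supported state has coherence $O(\log(n))$, which by definition fails the condition $\mathcal{C}=\omega(\log(n))$, no such ensemble can be a PRS. Concretely, the efficient coherence estimator underlying Claim~\ref{clm:pseudo_coh} will distinguish polynomially supported states from Haar random ones with non-negligible advantage in polynomial time, violating the pseudorandomness requirement.

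There is essentially no obstacle here beyond the entropy bound, which is elementary; the content of the corollary is packaged entirely in Claim~\ref{clm:pseudo_coh}. The only point worth being careful about is that ``polynomial support'' is stated for the computational basis, which is the same basis in which $\mathcal{C}$ is defined, so no basis mismatch arises and the entropy bound applies verbatim.
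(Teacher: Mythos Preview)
Your proposal is correct and essentially identical to the paper's argument: both bound $\mathcal{C}(\ket{\psi})$ by $\ln(\vert S\vert)=O(\log(n))$ via the standard entropy-versus-support inequality (the paper phrases this as the max relative entropy $\ln(\text{card}(\ket{\psi}))\ge \mathcal{C}(\ket{\psi})$) and then invoke Claim~\ref{clm:pseudo_coh}.
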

To show Corollary~\ref{cor:sparse_state}, assume a state  $\ket{\psi}=\sum_{k\in \vert S \vert } c_k \ket{k}$ with $\vert S \vert =O(n^c)$ with $c>0$. Then, we have for the max relative entropy $\ln(\text{card}(\ket{\psi}))=O(\log(n))$, where $\text{card}(.)$ counts the non-zero coefficients of $\ket{\psi}$ in the computational basis. The state cannot be a PRS due to $\ln(\text{card}(\ket{\psi}))\ge \mathcal{C}(\ket{\psi})$ and Claim~\ref{clm:pseudo_coh}.

\prlsection{Coherence power of unitaries}%
Similar to the relative entropy of coherence for states, one can define the coherence power of unitaries~\cite{zanardi2017coherence}. 
We introduce the relative entropy of coherence power as
\begin{equation}
\mathcal{C}_\text{p}(U)=-\sum_{k,\ell} 2^{-n}\vert \bra{k}U\ket{\ell}\vert ^2\ln(\vert \bra{k}U\ket{\ell} \vert ^2)
\end{equation}
with $0\le \mathcal{C}_\text{p} \le n\ln(2)$. It measures the amount of off-diagonal coefficients of unitaries, where diagonal unitaries have $\mathcal{C}_\text{p}(U_\text{diag})=0$ and for the Hadamard gate $H$ we have $\mathcal{C}_\text{p}(H^{\otimes n})=n\ln(2)$.

\begin{theorem}[PRUs need coherence (\SM~\ref{sec:coh_unitary})]\label{thm:pseudo_coh_unitary}
PRUs require $\mathcal{C}_\text{p}=\omega(\log(n))$ relative entropy of coherence power.
\end{theorem}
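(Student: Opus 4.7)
The plan is to parallel the proof of Claim~\ref{clm:pseudo_coh} for states, transplanting it to unitaries by exploiting the fact that $\mathcal{C}_\text{p}(U)$ is the average of $\mathcal{C}(U\ket{\ell})$ over the computational basis. Concretely, I would introduce the Hilbert--Schmidt coherence power
\begin{equation}
T(U)=2^{-n}\sum_{k,\ell}\vert\bra{k}U\ket{\ell}\vert^4,
\end{equation}
which is the unitary analog of the Hilbert--Schmidt coherence used in the state proof, and show that (i) it lower bounds $e^{-\mathcal{C}_\text{p}(U)}$, (ii) it can be efficiently estimated with $O(1)$ queries to $U$, and (iii) it is exponentially small for Haar random unitaries.

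For (i), I would apply Jensen's inequality to the convex function $-\ln$ twice: once on the inner sum to obtain $\mathcal{C}(U\ket{\ell})\ge -\ln(\sum_k\vert\bra{k}U\ket{\ell}\vert^4)$, and once more on the uniform average over $\ell$, yielding $\mathcal{C}_\text{p}(U)\ge -\ln T(U)$. For (ii), I would note that $\sum_k\vert\bra{k}U\ket{\ell}\vert^4$ is the collision probability of measuring $U\ket{\ell}$ in the computational basis, which is estimated by preparing two copies of $U\ket{\ell}$, measuring both in the computational basis, and checking for equality; sampling $\ell$ uniformly and averaging gives an unbiased estimator of $T(U)$ with $O(\delta^{-2}\log(1/\nu))$ samples via a Chernoff bound. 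For (iii), I would compute $\mathbb{E}_{U\sim\text{Haar}}[T(U)]=2/(2^n+1)$ using the standard Haar moment $\mathbb{E}[\vert U_{ij}\vert^4]=2/(2^n(2^n+1))$, and invoke Markov's inequality to conclude $T(U)=O(2^{-n/2})$ with overwhelming probability.

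Combining these three pieces, any unitary with $\mathcal{C}_\text{p}(U)=O(\log n)$ satisfies $T(U)=\Omega(1/\text{poly}(n))$, which is efficiently distinguished from the Haar value $O(2^{-n})$ by the estimator in (ii). This rules out such $U$ being a PRU, proving the claimed lower bound $\mathcal{C}_\text{p}=\omega(\log n)$.

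The main obstacle I expect is verifying (iii) rigorously: the expectation calculation itself is a standard Weingarten/second-moment computation, but establishing high-probability concentration of $T(U)$ around its mean for Haar random $U$ requires either a fourth-moment calculation or a Levy-style concentration argument on the unitary group. If direct concentration is delicate, I would instead argue in expectation, using a Yao-style averaging over a putative PRU family: an adversary can simply estimate $T(U)$ and threshold, and the average distinguishing advantage is already $\Omega(1)$ because $\mathbb{E}[T]$ differs by a $\text{poly}(n)$ factor between the two ensembles, which is enough to contradict pseudorandomness.
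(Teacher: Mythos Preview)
Your proposal is correct and follows essentially the same line as the paper: both identify the quantity $T(U)=2^{-n}\sum_{k,\ell}\vert\bra{k}U\ket{\ell}\vert^4$, bound it via Jensen by $e^{-\mathcal{C}_\text{p}(U)}$, compute its Haar expectation $2/(2^n+1)$, and argue that any ensemble with $T(U)=\Omega(1/\mathrm{poly}(n))$ is efficiently distinguished from Haar.

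The only substantive difference is the estimation protocol for $T(U)$. The paper follows \cite{zanardi2017coherence} and uses a SWAP test on the dephased Choi state $\chi(U)=(\mathcal{D}_n U)^{\otimes 2}(\ket{\Phi}\bra{\Phi})$, whose acceptance probability is $\tfrac{1}{2}(1+T(U))$. Your collision test---sample $\ell$ uniformly, prepare two copies of $U\ket{\ell}$, measure both in the computational basis and check for equality---is more elementary, avoids entanglement and the SWAP primitive, and yields an unbiased $\{0,1\}$-valued estimator of $T(U)$ directly. Both cost $O(1)$ queries per sample. Your concern about concentration of $T(U)$ over the Haar measure is unnecessary: the paper does not establish concentration either and argues purely in expectation, exactly the fallback you describe, so you can drop the Markov/Levy step.
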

The proof idea is to efficiently measure $\sum_{kl}2^{-n}\vert \bra{k}U\ket{\ell} \vert ^4$ (see \SM~\ref{sec:coh_unitary} or~\cite{zanardi2017coherence}). Using Jensen's inequality, this implies that one can efficiently distinguish unitaries with $\mathcal{C}_\text{p}=O(\log(n))$ from Haar random unitaries.

\begin{corollary}\label{cor:sparse_U}
Sparse unitaries, i.e. unitaries where rows or columns have at most $\text{poly}(n)$ non-zero entries, cannot be pseudorandom. 
\end{corollary}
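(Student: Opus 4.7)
The plan is to apply Theorem~\ref{thm:pseudo_coh_unitary} contrapositively: I will show that any unitary whose rows or columns each contain at most $\mathrm{poly}(n)$ non-zero entries has coherence power $\mathcal{C}_\text{p}=O(\log n)$, which falls below the $\omega(\log n)$ lower bound required of a PRU.

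First I would unpack $\mathcal{C}_\text{p}$ as a column-wise average of Shannon entropies. Writing $p_{k\ell}=|\bra{k}U\ket{\ell}|^2$, unitarity gives $\sum_k p_{k\ell}=1$ for every $\ell$, so
\begin{equation}
\mathcal{C}_\text{p}(U)=2^{-n}\sum_\ell H(p_{\cdot,\ell}),\qquad H(p_{\cdot,\ell})=-\sum_k p_{k\ell}\ln p_{k\ell}.
\end{equation}
If every column of $U$ has at most $s=\mathrm{poly}(n)$ non-zero entries, then for each $\ell$ the probability vector $p_{\cdot,\ell}$ is supported on at most $s$ indices, so by the standard entropy bound $H(p_{\cdot,\ell})\le\ln s$. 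Averaging over $\ell$ yields $\mathcal{C}_\text{p}(U)\le \ln s=O(\log n)$.

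To handle the row-sparse case, I would simply observe that $\mathcal{C}_\text{p}(U)=\mathcal{C}_\text{p}(U^\dagger)$, since $|\bra{k}U\ket{\ell}|^2=|\bra{\ell}U^\dagger\ket{k}|^2$ and $\mathcal{C}_\text{p}$ is symmetric under this swap. Row-sparsity of $U$ is column-sparsity of $U^\dagger$, so the previous bound applies and again gives $\mathcal{C}_\text{p}(U)=O(\log n)$. Combining with Theorem~\ref{thm:pseudo_coh_unitary}, which forces $\mathcal{C}_\text{p}=\omega(\log n)$ for any PRU, establishes the corollary.

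There is no real obstacle here; the statement reduces to a one-line entropy estimate once $\mathcal{C}_\text{p}$ is recognized as an average of column entropies. The only subtlety worth flagging is the row/column symmetry, which I would justify by the identity $\mathcal{C}_\text{p}(U)=\mathcal{C}_\text{p}(U^\dagger)$ rather than reproving the bound from scratch.
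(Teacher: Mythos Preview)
Your proof is correct and essentially matches the paper's: both bound $\mathcal{C}_\text{p}(U)$ by the logarithm of the column sparsity via an entropy--support inequality and then invoke Theorem~\ref{thm:pseudo_coh_unitary}. The paper phrases the bound as a single global R\'enyi inequality $\mathcal{C}_\text{p}(U)\le\ln\bigl(2^{-n}\mathrm{card}(U)\bigr)$ rather than your per-column average, and does not spell out the row case via $\mathcal{C}_\text{p}(U)=\mathcal{C}_\text{p}(U^\dagger)$, but the substance is identical.
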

To prove Corollary~\ref{cor:sparse_U}, we assume sparse unitaries $U_\text{sp}$ with $O(n^{c})$ non-zero entries per column and $c>0$. Thus, the max relative entropy is given by $\ln(2^{-n}\text{card}(U_\text{sp}))=O(\log(n))$ where $\text{card}(\cdot)$ counts the number of non-zero coefficients. From the inequalities between R\'enyi entropies, we have $\ln(2^{-n}\text{card}(U_\text{sp}))\ge \mathcal{C}_\text{p}(U_\text{sp})$, which according to Theorem~\ref{thm:pseudo_coh_unitary} precludes PRUs.

\section{Pseudoresource}
Quantum resources such as entanglement or coherence are essential ingredients to run quantum computers and devices.
We ask now whether `high'-resource state ensembles can be mimicked using another 'low´-resource ensemble? 
This concept was first introduced for the resource of entanglement as pseudoentanglement in Ref.~\cite{bouland2022quantum}.
Pseudoentanglement describes ensembles of states which are computationally indistinguishable, yet have widely different entanglement. %

We generalise this concept to \emph{pseudoresources} for arbitrary resource measures $Q(\ket{\psi})$ with $0\le Q\le Q_\text{max}(n)$, where $Q_\text{max}(n)$ is the maximal value of $Q$. For formal definition see Def.~\ref{def:pseudoresource}
A pseudoresource ensemble consists of two state ensembles: The first ensemble $\ket{\psi_k}$ has a high amount of the resource $Q(\ket{\psi_k})=f(n)$. The second ensemble $\ket{\phi_k}$ has a lower amount of the resource $Q(\ket{\psi_k})=g(n)$ while being computationally indistinguishable from the first ensemble, where $f(n)\ge g(n)$ are two functions. Thus,  a pseudoresource ensemble can masquerade as having $f(n)$ of the resource, while actually containing only $g(n)$. The difference in resource between the two ensembles is called the pseudoresource gap $f(n)$ vs $g(n)$. We summarize results for various pseudoresources in Table~\ref{tab:pseudoresource}.

Next, we introduce three new types of pseudoresources: Pseudocoherence, pseudoimaginarity and pseudopurity.

We find pseudocoherent ensembles with a pseuodcoherence gap of $f(n)=\Theta(n)$ vs $g(n)=\log(n)$, by tuning the rank of the subset phase states. This shows that one can mimic extensive coherence with only logarithmic coherent states. As coherence can be enhanced independently of entanglement and magic via LOCC and Clifford operations, pseudocoherence can be tuned independently of pseudoentanglement and pseudomagic.

In contrast, for purity $\text{tr}(\rho^2)$, the pseudopurity gap is exponentially small with $g(n)=1-O(2^{-n})$ vs $f(n)=1$ as any state with non-negligible noise is not pseudorandom.

For imaginarity, we find that the pseudoimaginarity gap is maximal with $f(n)=1-O(2^{-n})$ vs $g(n)=0$. In particular, subset phase states have zero imaginarity, while a class of imaginary pseudorandom states~\cite{ji2018pseudorandom} have asymptotically maximal imaginarity.

We categorize three classes of pseudoresources using the pseudoresource gap ratio $\Delta=(f(n)-g(n))/Q_\text{max}$, where $0\le \Delta\le 1$. $\Delta$ indicates the degree to which a resource can be masqueraded using pseudorandom states.
First, pseudopurity has $\Delta=O(2^{-n})$, indicating that only maximally pure states can be pseudorandom. 
Second, for pseudoentanglement, pseudomagic and pseudocoherence we find $\Delta=1-\omega(\log(n)/n)$. For this class of resources a low, but non-negligible amount of the resource is needed such that pseudorandom states can masquerade as high-resource states.
The third class, which consists of  pseudoimaginarity, any amount of imaginarity can be mimicked by pseudorandom states. Thus, this resource can be freely chosen for pseudorandom states.

\begin{table*}[htbp]\centering
\begin{tabular}{| c|c|c|c| }
    \hline    
  Resource $Q$  & $f(n)$ & $g(n)$
            & $\Delta$\\
    \hline\hline
    Purity   & $1$& $1-O(2^{-n})$ & $O(2^{-n})$      \\ \hline
    Entanglement~\cite{bouland2022quantum}   &   $\Theta(n)$  &   $\omega(\log(n))$ &   $1-\omega(\log(n)/n)$ \\
        \hline
    Magic~\cite{gu2023little}  &$\Theta(n)$  &   $\omega(\log(n))$ &   $1-\omega(\log(n)/n)$  \\ \hline
    Coherence  & $\Theta(n)$  &   $\omega(\log(n))$ &   $1-\omega(\log(n)/n)$       \\
    \hline
Imaginarity   &    $1-O(2^{-n})$& $0$ & $1-O(2^{-n})$ \\
    \hline
\end{tabular}
\caption{Maximal pseudoresource gap $f(n)$ vs $g(n)$ of pure states (except for purity) for different resources $Q$. We also show the pseudoresource gap ratio $\Delta=(f(n)-g(n))/Q_\text{max}$ with respect to the maximal value of the resource $Q_\text{max}$. 
}
\label{tab:pseudoresource}
\end{table*}

\prlsection{Separation between PRSS and PRUs}
\revA{Any PRU is also a PRSS~\cite{lu2023quantum,ananth2024pseudorandom}. However, the reverse question whether any PRSS is also a PRU has been left as an open question~\cite{lu2023quantum,ananth2024pseudorandom}, which we will proceed to answer negatively:
\begin{theorem}[Separation between PRSS and PRU]\label{thm:PRSS_proof}
PRSS are strictly weaker than PRSS. In particular, there are PRSS which are not PRUs.
\end{theorem}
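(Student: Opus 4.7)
The plan is to exhibit a concrete PRSS family all of whose members are real orthogonal unitaries, and then invoke Corollary~\ref{cor:real_U} (equivalently, Theorem~\ref{thm:imag_unitary}) to rule out its being a PRU, thereby establishing the strict separation.

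The construction I would use starts from a known real PRS ensemble, for instance the binary-phase subset states $\ket{\psi_{S,f}}$ of~\eqref{eq:phaseKstates}, which are pseudorandom by Brakerski--Shmueli and are manifestly real since their amplitudes lie in $\{0,\pm 1/\sqrt{|S|}\}$. The standard preparation circuit for $\ket{\psi_{S,f}}$ is assembled from a pseudorandom permutation $P_\pi$, a binary phase oracle $F_f:\ket{x}\mapsto(-1)^{f(x)}\ket{x}$, Hadamards, CNOTs, and Toffolis---each a real orthogonal gate---so the efficient unitary $V_{S,f}$ satisfying $V_{S,f}\ket{0^n}=\ket{\psi_{S,f}}$ obeys $V_{S,f}^\ast = V_{S,f}$ and hence $\mathcal{I}_\text{p}(V_{S,f})=0$. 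By construction the ensemble $\{V_{S,f}\}$ applied to the fixed state $\ket{0^n}$ reproduces the PRS family $\{\ket{\psi_{S,f}}\}$, so $\{V_{S,f}\}$ qualifies as a PRSS in the sense recalled in the Preliminaries. Since every $V_{S,f}$ is real, Corollary~\ref{cor:real_U} immediately forbids $\{V_{S,f}\}$ from being a PRU.

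The main obstacle I expect is reconciling the construction with the precise PRSS definition in play, since the Preliminaries phrasing ``applied to an arbitrary, fixed state'' can be read either as ``for some fixed input'' or as ``for every fixed input''. Under the weaker reading the argument above is essentially immediate. Under the stronger reading, one instead appeals to the explicit PRSS constructions of Lu--Qian--Tang~\cite{lu2023quantum} and Ananth et al.~\cite{ananth2023pseudorandom}, which are themselves built exclusively from pseudorandom permutations, binary phase oracles, and Hadamard-layer spreading---each a real orthogonal operation---so the resulting scramblers form real orthogonal families. Either way, Corollary~\ref{cor:real_U} disqualifies them from being PRUs, completing the strict separation.
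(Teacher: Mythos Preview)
Your overall strategy---exhibit a real-valued PRSS and invoke Corollary~\ref{cor:real_U}---is exactly the paper's argument. However, your first concrete candidate does not work: the formal PRSS definition in \SM~\ref{sec:basics} requires pseudorandomness for \emph{every} fixed input state $\ket{\phi}$, not merely for $\ket{0^n}$, and the paper explicitly remarks (immediately after that definition) that the binary-phase preparation unitaries are \emph{not} PRSS precisely because they scramble only one particular input. So the ``weaker reading'' you entertain is not the operative one, and that branch of your argument should be dropped rather than presented as a possible route.

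Your fallback is essentially what the paper actually does: it cites the Kac's-walk PRSS of~\cite{lu2023quantum} (their Definition~5.4), which composes permutations in $S_{2^n}$ with discretized real planar rotations and is therefore real orthogonal; Theorem~\ref{thm:imag_unitary} then excludes it from being a PRU. One caution: your description of the PRSS building blocks (``pseudorandom permutations, binary phase oracles, and Hadamard-layer spreading'') does not match the Kac's-walk scrambler, so you should establish real-valuedness by pointing directly to the cited construction rather than to an assumed gate set.
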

In Ref.~\cite{lu2023quantum} (Def. 5.4), a class of real-valued PRSS based on Kac's walk is proposed. It uses the permutation $\sigma \in S_{2^n}$ over $n$ qubits, 
and real-valued rotations with discretized rotation angles. This PRSS is not a PRU due to a lack of imaginarity, which follows directly from Thm.~\ref{thm:imag_unitary_sup}.}

\prlsection{Transformation limits}
Universal quantum computation with qubits usually requires complex-valued operations and states. 
However, one can also realize the same universal quantum computations using $n+1$ rebits, i.e.~qubits with only real coefficients which support only real states and real unitaries~\cite{rudolph20022,koh2018quantum,mckague2013power,delfosse2015wigner}. For rebits, the role of imaginarity is mimicked by adding an ancilla rebit which acts as a flag that records the real and imaginary parts of the amplitudes of the quantum states separately. In particular, while a complex $n$-qubit state with coefficients $a_k,b_k\in\mathbb{R}$ is written as 
\begin{equation}\label{eq:qubit}
    \ket{\psi}=\sum_{k=0}^{2^n-1}(a_k+ib_k)\ket{k}\,
\end{equation}
the corresponding $(n+1)$-rebit state is given by 
\begin{equation}\label{eq:rebit}
    \ket{\psi_\text{R}}=\sum_{k=0}^{2^n-1}( a_k\ket{k}\ket{R}+b_k\ket{k}\ket{I})
\end{equation}
where $\ket{R}$, $\ket{I}$ are the computational flag states for real and imaginary value respectively. 
The rebit representation can naturally perform additional non-linear operations not commonly associated with the usual qubit quantum computing model, and thus it would be highly interesting to transform between the qubit and rebit representations~\cite{koh2018quantum}. 
However, as we show below, qubit-to-rebit transformations are inefficient, in contrast to the reverse process, for which we give an efficient protocol.
\begin{corollary}\label{cor:rebit}
There is no algorithm $\mathcal{T}(\ket{\psi}^{\otimes t}) \rightarrow \ket{\psi_\text{R}}$
 using $t=\mathrm{poly}(n)$ copies of $n$-qubit state~\eqref{eq:qubit} 
 that returns the $(n+1)$-rebit state~\eqref{eq:rebit} 
 with non-negligible probability. 
\end{corollary}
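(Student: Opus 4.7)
The plan is to derive a contradiction with Theorem~\ref{thm:no_test_imag} by showing that an efficient $\mathcal{T}$ would yield an efficient tester for the imaginarity of pure states. The key observation is that although the rebit output $\ket{\psi_\mathrm{R}}$ is itself a real state (and so has zero imaginarity of its own), the imaginarity of the \emph{input} $\ket{\psi}$ is explicitly encoded in the flag qubit of $\ket{\psi_\mathrm{R}}$ and can be read out by two single-qubit Pauli measurements.

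Concretely, introduce the unnormalized real vectors $\ket{\alpha}=\sum_k a_k\ket{k}$ and $\ket{\beta}=\sum_k b_k\ket{k}$, so that $\ket{\psi_\mathrm{R}}=\ket{\alpha}\ket{R}+\ket{\beta}\ket{I}$ matches~\eqref{eq:rebit}. Identifying $\ket{R}=\ket{0}$, $\ket{I}=\ket{1}$ and writing $X,Z$ for the Pauli operators on the flag qubit (with identity understood on the system register), one computes directly
\begin{equation}
\bra{\psi_\mathrm{R}}Z\ket{\psi_\mathrm{R}}=\|\alpha\|^2-\|\beta\|^2,\quad \bra{\psi_\mathrm{R}}X\ket{\psi_\mathrm{R}}=2\braket{\alpha}{\beta}.
\end{equation}
Since $\braket{\psi}{\psi^\ast}=(\|\alpha\|^2-\|\beta\|^2)-2i\braket{\alpha}{\beta}$ follows from~\eqref{eq:qubit}, these combine into the exact identity
\begin{equation}
\mathcal{I}(\ket{\psi})=1-(\bra{\psi_\mathrm{R}}Z\ket{\psi_\mathrm{R}})^2-(\bra{\psi_\mathrm{R}}X\ket{\psi_\mathrm{R}})^2.
\end{equation}
Both flag expectations can be estimated to constant additive precision from $O(1)$ copies of $\ket{\psi_\mathrm{R}}$ via a Hoeffding bound, so $\mathrm{poly}(n)$ copies suffice to estimate $\mathcal{I}(\ket{\psi})$. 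For real $\ket{\psi}$, $\ket{\beta}=0$ gives $\mathcal{I}=0$; for Haar random $\ket{\psi}$, Gaussian concentration of amplitudes yields $\|\alpha\|^2\approx\|\beta\|^2\approx 1/2$ and $\braket{\alpha}{\beta}\approx 0$, so $\mathcal{I}\approx 1$, and the two regimes are separated by a constant additive gap.

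Composing this flag readout with $\mathcal{T}$ gives an algorithm distinguishing $\mathcal{I}=0$ from $\mathcal{I}\ge 1-\mathrm{negl}(n)$ using only $\mathrm{poly}(n)$ copies of $\ket{\psi}$, contradicting the $\Omega(2^{n/2})$ lower bound of Theorem~\ref{thm:no_test_imag}. The step I expect to be the main obstacle in a formal write-up is accounting for the non-negligible (rather than overwhelming) success probability $p=1/\mathrm{poly}(n)$ of $\mathcal{T}$: its output density matrix has fidelity only $p$ with the ideal $\ket{\psi_\mathrm{R}}$, so the successful branch contributes only an $\Omega(p)$ shift to the estimator, and one must verify that the adversarial failure branch of $\mathcal{T}$ cannot systematically cancel this shift across the two regimes. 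This is handled by a standard pre-amplification of $\mathcal{T}$ to constant success probability at polynomial cost in input copies, after which the reduction goes through directly.
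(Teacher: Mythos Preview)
Your argument is correct and follows the same reduction as the paper: extract $\mathcal{I}(\ket{\psi})$ from the flag register of $\ket{\psi_\mathrm{R}}$ and invoke Theorem~\ref{thm:no_test_imag}. The paper phrases the readout as the purity of the reduced flag rebit, $\mathcal{I}(\ket{\psi})=2\bigl(1-\tr(\tr_{1:n}(\ketbra{\psi_\mathrm{R}}{\psi_\mathrm{R}})^2)\bigr)$, whereas you compute the flag Bloch components $\langle Z\rangle$ and $\langle X\rangle$ directly; since the flag reduced state is real ($\langle Y\rangle=0$), these are the same identity, as $\tr(\rho_{\mathrm{flag}}^2)=\tfrac12(1+\langle X\rangle^2+\langle Z\rangle^2)$.

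One remark on your last paragraph: the ``pre-amplification'' step tacitly assumes $\mathcal{T}$ heralds success, since without a flag you cannot boost a quantum-output algorithm by repetition. This is the natural reading of ``returns $\ket{\psi_\mathrm{R}}$ with non-negligible probability,'' and the paper makes the same implicit assumption, so there is no discrepancy; just be explicit about it in a formal write-up.
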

Corollary~\ref{cor:rebit} can be easily seen from the following: In the rebit representation $\ket{\psi_\text{R}}$, we can efficiently measure the corresponding imaginarity of the qubit representation $\ket{\psi}$ via the purity of the flag rebit~(\SM~\ref{sec:rebits})
\begin{equation}\label{eq:rebit_imag}
\mathcal{I}(\ket{\psi})\equiv 2(1-\tr(\tr_{1:n}(\ketbra{\psi_\text{R}}{\psi_\text{R}})^2))\,,
\end{equation}
where $\tr_{1:n}(\cdot)$ is the partial trace over all rebits except the (${n+1}$)-th rebit. 
\revC{Now, suppose for the sake of contradiction that $\mathcal{T}$ is efficient. Then, we could efficiently transform $\ket{\psi}$ into $\ket{\psi_\text{R}}$ and efficiently measure the imaginarity of $\ket{\psi}$ via $\ket{\psi_\text{R}}$. However, the efficient measurement of imaginarity contradicts Theorem~\ref{thm:no_test_imag}. Therefore, $\mathcal{T}$ must be inefficient.}

\begin{claim}\label{cor:rebit_inv}
There is an efficient algorithm $\mathcal{T}_\text{R}(\ket{\psi_\text{R}}^{\otimes {t}})\rightarrow \ket{\psi}$ $t$ copies of $(n+1)$-rebit state~\eqref{eq:rebit}
that returns $n$-qubit state~\eqref{eq:qubit}
\revC{with exponentially small failure probability $O(2^{-t})$}. 
\end{claim}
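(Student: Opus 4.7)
The plan is to define a single-qubit unitary $V$ acting on the flag rebit that coherently combines the real and imaginary branches with a phase of $i$, followed by a measurement of the flag. With the identification $\ket{R}\equiv\ket{0}$, $\ket{I}\equiv\ket{1}$, I would take $V=HS$, where $S=\mathrm{diag}(1,i)$ and $H$ is the Hadamard gate, so that $V\ket{R}=\tfrac{1}{\sqrt{2}}(\ket{0}+\ket{1})$ and $V\ket{I}=\tfrac{i}{\sqrt{2}}(\ket{0}-\ket{1})$. This is a standard one-qubit Clifford+$S$ gate and is trivially efficient to implement on a qubit computer.

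Applying $I^{\otimes n}\otimes V$ to $\ket{\psi_\text{R}}$ from \eqref{eq:rebit} and regrouping by flag outcome yields
\begin{equation}
(I^{\otimes n}\otimes V)\ket{\psi_\text{R}}=\tfrac{1}{\sqrt{2}}\bigl(\ket{\psi}\ket{0}+\ket{\psi^*}\ket{1}\bigr),
\end{equation}
which is verified by writing $a_k(\ket{0}+\ket{1})/\sqrt{2}+b_k\,i(\ket{0}-\ket{1})/\sqrt{2}=[(a_k+ib_k)\ket{0}+(a_k-ib_k)\ket{1}]/\sqrt{2}$ and using \eqref{eq:qubit}. I would then measure the flag in the computational basis. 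Conditioned on outcome $0$ (probability $1/2$), the remaining $n$ qubits are in the target state $\ket{\psi}$; on outcome $1$ the residual state is $\ket{\psi^*}$, which is discarded, and the procedure is retried with a fresh copy of $\ket{\psi_\text{R}}$.

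Because each trial succeeds independently with probability $1/2$, after $t$ trials the failure probability is at most $2^{-t}$; thus for any desired success probability $1-\nu$ it suffices to take $t=\lceil\log_2(1/\nu)\rceil=O(1)$ copies, which establishes the claim. There is no serious technical obstacle here; the conceptual point is that the flag rebit explicitly carries what would otherwise be complex phase information, so a single one-qubit complex unitary is enough to extract it. This is consistent with the inefficiency of the reverse direction (Corollary~\ref{cor:rebit}) and with Corollary~\ref{cor:efficientstar}: we cannot salvage the failed branch by conjugating $\ket{\psi^*}$ back to $\ket{\psi}$ without efficient complex conjugation, so half the copies are wasted on average, but only a constant overhead is incurred.
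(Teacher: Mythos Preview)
Your proposal is correct and essentially identical to the paper's argument: the paper describes the algorithm as measuring the flag rebit in the basis $\{\tfrac{1}{\sqrt{2}}(\ket{R}+i\ket{I}),\tfrac{1}{\sqrt{2}}(\ket{R}-i\ket{I})\}$ and post-selecting the appropriate outcome with success probability $1/2$, then repeating---which is exactly your $V=HS$ rotation followed by a computational-basis measurement. Your additional observation that the rejected branch yields $\ket{\psi^*}$ (and hence cannot be salvaged, by Corollary~\ref{cor:efficientstar}) is a nice elaboration not made explicit in the paper.
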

Algorithm $\mathcal{T}_\text{R}$ uses the operator $\Pi=I_n\otimes \frac{1}{\sqrt{2}}(\bra{R}+i\bra{I})$ via $\ket{\psi}=\sqrt{2}\Pi\ket{\psi_\text{R}}$, where $I_n$ is the $n$-qubit identity operator~\cite{koh2018quantum}. $\Pi$ is implemented efficiently by measuring the flag rebit in the $\{\frac{1}{\sqrt{2}}(\ket{R}+i\ket{I}),\frac{1}{\sqrt{2}}(\ket{R}-i\ket{I})\}$ basis and post-selecting $\frac{1}{\sqrt{2}}(\ket{R}+i\ket{I})$ with success probability $\left\|\Pi\ket{\psi_\text{R}}\right\|=\frac{1}{2}$. \revC{By repeating the above algorithm with $t$ copies, one succeeds at least once with failure probability $2^{-t}$.}

Finally, we give a short proof that the complex conjugation of states is inefficient. This was previously proven using more involved techniques~\cite{yang2014certifying,miyazaki2019complex}:
\begin{corollary}\label{cor:efficientstar}
There is no efficient quantum algorithm with access to $\mathrm{poly}(n)$ copies of $\ket{\psi}$ that returns the complex conjugate $\ket{\psi^*}$ with non-negligible probability.
\end{corollary}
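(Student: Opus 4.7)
My approach is a direct reduction to Theorem~\ref{thm:no_test_imag}: an efficient complex-conjugation algorithm would yield an efficient imaginarity tester for pure states, contradicting the $\Omega(2^{n/2})$ copy lower bound. The key identity is $\mathcal{I}(\ket{\psi})=1-\vert\braket{\psi}{\psi^*}\vert^2$ from \eqref{eq:imag_rho}, so once copies of $\ket{\psi^*}$ are in hand, a standard SWAP test against $\ket{\psi}$ estimates the imaginarity to arbitrary polynomial precision.

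\textbf{Reduction.} Suppose for contradiction that an algorithm $\mathcal{T}_\ast$ exists which, on input $t_\ast=\mathrm{poly}(n)$ copies of $\ket{\psi}$, outputs $\ket{\psi^*}$ with non-negligible probability $p\ge 1/\mathrm{poly}(n)$, heralded by a success-flag register. I would build an imaginarity tester $\mathcal{A}_{\mathcal{I}}$ as follows. Given $T=\mathrm{poly}(n)$ fresh copies of $\ket{\psi}$, run $\mathcal{T}_\ast$ a total of $O(T/p)=\mathrm{poly}(n)$ independent times, post-select on the success flag to collect $T$ output states $\rho$ that are $1/\mathrm{poly}(n)$-close to $\ket{\psi^*}\!\bra{\psi^*}$, and perform $T$ SWAP tests each pairing one fresh copy of $\ket{\psi}$ with one heralded output of $\mathcal{T}_\ast$. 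A Chernoff bound on the empirical ``same'' rate estimates $\bra{\psi}\rho\ket{\psi}=\vert\braket{\psi}{\psi^*}\vert^2\pm 1/\mathrm{poly}(n)$, hence $\mathcal{I}(\ket{\psi})$, to within additive precision $1/\mathrm{poly}(n)$, using only $\mathrm{poly}(n)$ copies of $\ket{\psi}$ in total.

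\textbf{Obstacle and conclusion.} This tester distinguishes $\mathcal{I}(\ket{\psi})=0$ from $\mathcal{I}(\ket{\psi})\ge\delta$ for any constant $\delta$ using a polynomial number of copies, directly contradicting Theorem~\ref{thm:no_test_imag}, so no such $\mathcal{T}_\ast$ can exist. The main obstacle is how to interpret ``returns $\ket{\psi^*}$ with non-negligible probability''. Absent a heralding flag, the algorithm's output could be a mixed state $\rho=p\ket{\psi^*}\!\bra{\psi^*}+(1-p)\sigma$ whose residual $\sigma$ is adversarially correlated with $\ket{\psi}$ (for instance, $\sigma\approx\ket{\psi}\!\bra{\psi}$), which would mask the $p$-sized signal in a raw SWAP test. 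I would resolve this by adopting the standard convention that ``returning'' a target state includes an explicit success register, so that $O(1/p)=\mathrm{poly}(n)$ heralded repetitions amplify the success event to near-certainty within the polynomial copy budget and the reduction through Theorem~\ref{thm:no_test_imag} goes through cleanly.
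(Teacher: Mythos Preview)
Your proposal is correct and follows essentially the same approach as the paper: assume an efficient conjugation map exists, use the SWAP test between $\ket{\psi}$ and the output $\ket{\psi^*}$ to estimate $\vert\braket{\psi}{\psi^*}\vert^2=1-\mathcal{I}(\ket{\psi})$, and derive a contradiction with Theorem~\ref{thm:no_test_imag}. Your discussion of the heralding convention is a sensible clarification that the paper's one-line proof leaves implicit.
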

\revC{One can show this as follows: When given $t$ copies of $\ket{\psi}$ and $\ket{\psi^*}$, one can efficiently test imaginarity $\mathcal{I}(\ket{\psi})$ up to additive accuracy $O(1/\sqrt{t})$ via the SWAP test. This scheme can be used as a property tester for imaginarity, i.e. one can test with high probability whether any given state has zero imaginarity or high imaginarity. Thus, efficient access to the complex conjugate state contradicts Theorem~\ref{thm:no_test_imag}, implying that $\ket{\psi^*}$ cannot be efficiently prepared from $\ket{\psi}$. }

\prlsection{Conclusion}%
We have studied the limits of testing purity, imaginarity and coherence. This, in turn, places fundamental limits on the properties of PRSs and PRUs as a function of qubit number $n$. Our work builds upon the fundamental and widely acknowledged principle that a pseudorandom object possesses all the efficiently verifiable properties inherent in its random counterpart~\cite{katz2020introduction}.

PRSs and PRUs are not robust to noise, and can tolerate depolarizing noise with at most negligible probability. Thus, it is not possible to generate PRSs on noisy intermediate-scale quantum computers~\cite{bharti2021noisy}. Further, early fault-tolerant quantum computers~\cite{babbush2021focus,suzuki2022quantum}, i.e. quantum computers with limited error correction where not all errors are exponentially suppressed, are expected to be unable to generate PRSs. 

We further show that PRSs and PRUs must have a $\mathcal{C}=\omega(\log(n))$ relative entropy of coherence (power), which implies that PRUs are not sparse, requiring  $\omega(\log(n))$ coherent operations such as the Hadamard gates to be generated.

Does the description of quantum mechanics require complex numbers, or does it suffice to only use real numbers to achieve arbitrary tasks? For quantum communication tasks, it has been shown that complex numbers are indeed necessary~\cite{wu2021resource,kondra2022real,chen2022ruling}.
We now show that quantum randomness also requires complex numbers, however with a subtle catch.
In particular, we show that PRUs must not be real and require $\mathcal{I}_\text{p}=1-\text{negl}(n)$ imaginarity.  
This implies that a real-valued description of quantum information is unable to generate quantum randomness for unitaries, and complex numbers are indeed necessary. 
In contrast, PRSs (and PRSSs) have no restrictions on imaginarity. Thus, randomness on the level of quantum states does not need complex numbers, and a real-valued quantum theory suffices to generate states which look completely random. This reveals a fundamental difference of the role of complex numbers for operations and states. Given recent efficient constructions of real PRSs~\cite{brakerski2019pseudo}, some-what secure real PRUs~\cite{brakerski2024real}, and complex PRUs~\cite{schuster2024randomunitariesextremelylow,ma2024construct}, the relationship of randomness and complex numbers can be directly probed in experiment.

We introduce pseudoresource as the ability of a pseudorandom state ensemble with a low amount of a given resource to masquerade as a high-resource ensemble.
We observe that pseudoresource behaves fundamentally different depending on the resource: For pseudocoherence (as well as pseudoentanglement and pseudomagic), we find a large, but non-maximal pseudoresource gap between low and high coherence ensembles. 
For pseudoimaginarity, the gap is maximal, while for pseudopurity the gap is exponentially small. 
Thus, not all resources can be effectively masqueraded using low-resource states. It would be interesting to draw a fundamental connection between resource theories and the pseudoresource gap. \revC{Further, it would be interesting to relate pseudoresources to state preparation complexity~\cite{grewal2024pseudoentanglement}.}

Various notions of pseudorandomness exist, and their relative relationships is not well understood. Here, we establish two new results: First, we show that creating PRUs is more demanding than generating PRSs, which was conjectured in Ref.~\cite{ji2018pseudorandom}.
In particular, while real and diagonal unitaries applied to a product state can create the PRSs of~\eqref{eq:phaseKstates}~\cite{ji2018pseudorandom}, we show that these unitaries cannot be pseudorandom themselves, providing a separation in the generation complexity of PRSs and PRUs.
\revA{Second, we show that PRSS are not always PRUs, proving that there is indeed a clear gap between the two notions of pseudorandomness which was alluded in Refs.~\cite{lu2023quantum,ananth2024pseudorandom}.}

\revB{We show that access to the complex conjugate of a state gives an exponential advantage in imaginarity testing. Thus, the ability to conjugate (which in general is exponentially difficult) poses a powerful resource in property testing, which similarly has been shown for learning problems~\cite{wu2023quantum,king2024exponential}. }

Our work also sheds light on the relationship between rebit and qubit models of quantum computing. In particular, we show that one can efficiently transform from rebit to a qubit model of quantum computing, while the inverse transformation is not efficient.

Finally, we note that for pure states and unitaries, one can efficiently test quantum properties such as entanglement~\cite{ekert2002direct,bendersky2009general,harrow2013testing}, magic~\cite{gross2021schur,haug2022scalable}, coherence~\cite{zanardi2017coherence}, purity~\cite{barenco1997stabilization} and imaginarity~\cite{huang2021demonstrating}, with the sole exception being the imaginarity of states (see \SM~\ref{sec:comparison}). This curious outlier highlights the special character of the resource theory of imaginarity, which warrants further studies.

\medskip

\begin{acknowledgments}
D.E.K.\ acknowledges funding support from the National Research Foundation, Singapore, and the Agency for Science, Technology and Research (A*STAR), Singapore, under its Quantum Engineering Programme (NRF2021-QEP2-02-P03); A*STAR C230917003; and A*STAR under the Central Research Fund (CRF) Award for Use-Inspired Basic Research (UIBR) and the Quantum Innovation Centre (Q.InC) Strategic Research and Translational Thrust. We thank Nikhil Bansal, Naresh Goud Boddu, Atul Singh Arora and Rahul Jain for various discussions. We also extend our thanks to the anonymous referee for pointing out the candidate construction for PRUs from Ref.~\cite{ji2018pseudorandom}, which is ruled out as a result of our findings.
\end{acknowledgments}

\bibliographystyle{quantum}
\bibliography{imaginary}

\let\addcontentsline\oldaddcontentsline

\onecolumngrid
\newpage 

\setcounter{secnumdepth}{2}
\setcounter{equation}{0}
\setcounter{figure}{0}
\setcounter{section}{0}
\renewcommand{\thetable}{S\arabic{table}}
\renewcommand{\theequation}{S\arabic{equation}}
\renewcommand{\thefigure}{S\arabic{figure}}

\renewcommand{\thesection}{\Alph{section}}
\renewcommand{\thesubsection}{\arabic{subsection}}
\renewcommand*{\theHsection}{\thesection}

\clearpage
\begin{center}

\textbf{\large Appendix}
\end{center}
\setcounter{equation}{0}
\setcounter{figure}{0}
\setcounter{table}{0}

\makeatletter
\renewcommand{\theequation}{S\arabic{equation}}
\renewcommand{\thefigure}{S\arabic{figure}}
\newtheorem{thmS}{Theorem S\ignorespaces}

\newtheorem{claimS}{Claim S\ignorespaces}

\newtheorem{definitionS}{Definition S\ignorespaces}

\newtheorem{lemS}{Lemma S\ignorespaces}
\newtheorem{propositionS}{Proposition S\ignorespaces}

We provide proofs and additional details supporting the claims in the main text.

\makeatletter
\@starttoc{toc}
\makeatother

\section{Primer on PRSs, PRSSs and PRUs} \label{sec:basics}
We start with the basics of classical cryptography required to understand PRSs, PRSSs and PRUs. For details, refer to Ref.~\cite{katz2020introduction}.
\subsection{Classical cryptography basics}
First, we start with the concepts of negligible and noticeable functions.

\begin{definitionS}[Negligible Function]
A  function $\mu:\mathbb{N} \to \mathbb R$ is negligible if and only if $\forall c \in \mathbb{N}$, $\exists n_0 \in \mathbb{N}$ such that $\forall n > n_0$, $\mu(n) < n^{-c}$.    
\end{definitionS}

\begin{definitionS}[Noticeable Function]
A  function 
$\mu:\mathbb{N} \to \mathbb R$ is noticeable if and only if $\exists c \in \mathbb{N}$, $n_0 \in \mathbb{N}$ such that $\forall n \ge n_0$, $\mu(n) \ge n^{-c}$.
\end{definitionS}
\revC{Intuitively, a noticeable function is a function that at least scales as a polynomial $n^{-c}$ with $c>0$ being a constant.
In contrast, a negligible function must be strictly smaller than any polynomial function. }
For example, since $\mu(n) = 2^{-n}$ is exponentially small, it is a negligible function. Conversely, $\mu(n) = n^{-3}$ is only polynomially small and is therefore noticeable. 

Finally, a pseudorandom generator (PRG) is a deterministic algorithm that takes a short, truly random input called a seed and expands it into a longer sequence of seemingly random bits. The generated sequence should be indistinguishable from a truly random sequence to any computationally bounded observer or algorithm.

\begin{definitionS}[Pseudorandom Generator (PRG)]
Let $l(\cdot)$ be a polynomial function and let $G$ be a deterministic polynomial-time algorithm. $G$ takes an input $s$ from the set $\{0,1\}^n$ and produces a string of length $l(n)$. We define $G$ as a pseudorandom generator if it satisfies the following two conditions:
\begin{enumerate}
    \item \emph{Expansion:} For every $n$, the length of the output string, $l(n)$, is greater than $n$.
    \item \emph{Pseudorandomness:} For any probabilistic polynomial-time distinguisher $D$, there exists a negligible function $\mathrm{negl}(n)$ such that
    $$\vert \mathrm{Pr}[D(r)=1] - \mathrm{Pr}[D(G(s))=1] \vert \leq \mathrm{negl}(n). $$
    Here, $r$ is uniformly chosen at random from $\{0,1\}^{l(n)}$, the seed $s$ is uniformly chosen at random from $\{0, 1\}^n$, and the probabilities are computed over the random coins used by $D$ and the choices of $r$ and $s$.
\end{enumerate}
The function $l(\cdot)$ is referred to as the expansion factor of the pseudorandom generator $G$.
\end{definitionS}

Pseudorandom generators play a crucial role in cryptography, as they provide a way to generate seemingly random values from a small random seed. They are used in a variety of applications, such as key generation, encryption schemes, and secure communication protocols. PRGs are also used in other areas of computer science, including simulations, random sampling, and probabilistic algorithms. Finally, to illustrate the construction of a PRS 
from a PRG, we discuss  the concept of pseudorandom functions (PRF):
\begin{definitionS}[Pseudorandom Function (PRF)]
\revC{A keyed family of functions %
$\{F_k: \{0, 1\}^n \rightarrow \{0, 1\}^n\}_{k\in\mathcal{K}}$}
is called a pseudorandom function if $F_k$ is computable in polynomial time and for all probabilistic polynomial-time distinguishers $D$, there exists a negligible function $\mathrm{negl}$ such that:
$$ \vert \Pr[D^{F_k(\cdot)}(1^n)=1] - \Pr[D^{f(\cdot)}(1^n)=1]  \vert \leq \mathrm{negl}(n) $$
where $k$ is chosen uniformly at random from $\mathcal{K}=\{0, 1\}^n$, and $f$ is chosen uniformly at random from the set of all functions which map \revC{$n$-bit strings to $n$-bit strings}.
\end{definitionS}
Without knowing the key $k$, it should be computationally infeasible for an adversary to distinguish between the output of the PRF and a truly random function, even if the adversary can make adaptive queries.

\revC{PRFs and PRGs are equivalent primitives in cryptography, i.e. PRF $\iff$ PRG~\cite{katz2020introduction}. This extends to the case where the distinguisher has access to quantum computer, i.e. post-quantum PRF $\iff$ post-quantum PRG~\cite{zhandry2012how}. Furthermore, Ref.~\cite{ji2018pseudorandom} proved that post-quantum PRF $\implies$ PRS.}

\subsection{Definitions of pseudorandomness}
\begin{definitionS}[Pseudorandom Unitary Operators (PRUs)~\cite{ji2018pseudorandom}]
Consider a $n$-qubit Hilbert space $\mathcal{H}$ and a key space $\mathcal{K}$ with security parameter $\kappa=\operatorname{poly}(n)$.
A family of unitary operators $\{U_k \in U(\mathcal{H})\}_{k \in \mathcal{K}}$ is pseudorandom if the following two conditions hold:

\begin{enumerate}
\item \textit{Efficient computation}: There exists an efficient quantum algorithm $Q$ such that for all $k$ and any $\ket{\psi} \in \mathcal{S}(\mathcal{H})$, $Q(k, \ket{\psi}) = U_k \ket{\psi}$.
\item \textit{Pseudorandomness}: $U_k$ with a random key $k$ is computationally indistinguishable from a Haar random unitary operator. More precisely, for any efficient quantum algorithm $A$ that makes at most polynomially many queries to the oracle,
\[
\left| \Pr_{k \gets \mathcal{K}}[A^{U_k}(1^\kappa) = 1] - \Pr_{U \gets \mu}[A^U(1^\kappa) = 1] \right| \leq \operatorname{negl}(n).
\]
\end{enumerate}
\end{definitionS}
\revA{In the first term, the observer uses any efficient algorithm $A$ with access to the key size $1^{\kappa}$ in unary and oracle access to pseudorandom unitary $U_k$ with random key $k$, while in the second term the observer is given Haar random unitaries $U$.}
This definition states that a family of unitary operators is considered pseudorandom if it is both efficiently computable and indistinguishable from Haar random unitary operators for an observer with a bounded computational power.

\begin{definitionS}[Pseudorandom Quantum States (PRSs)]\label{def:PRS}
Consider an $n$-qubit Hilbert space $\mathcal{H}$ and a key space $\mathcal{K}$ with security parameter $\kappa=\operatorname{poly}(n)$. A keyed family of quantum states ${ \ket{\phi_k} \in \mathcal{S}(\mathcal{H}) }_{k \in \mathcal{K}}$ is defined as pseudorandom if it satisfies the following conditions:

\begin{enumerate}
\item \textit{Efficient generation}: There exists a polynomial-time quantum algorithm $G$ capable of generating the state $\ket{\phi_k}$ when given the input $k$. In other words, for every $k \in \mathcal{K}$, $G(k) = \ket{\phi_k}$.
\item \textit{Pseudorandomness}: When given the same random $k \in \mathcal{K}$, any polynomially bounded number of copies of $\ket{\phi_k}$ are computationally indistinguishable from the same number of copies of a Haar random state. More specifically, for any efficient quantum algorithm $A$ and any $m \in \operatorname{poly}(n)$,
\[\left| \Pr_{k \gets \mathcal{K}} [A(\ket{\phi_k}^{\otimes m}) = 1] - \Pr_{\ket{\psi} \gets \mu} [A(\ket{\psi}^{\otimes m}) = 1] \right| = \operatorname{negl}(n),\]
where $\mu$ represents the Haar measure on $\mathcal{S}(\mathcal{H})$.
\end{enumerate}
\end{definitionS}
In this definition, a keyed family of quantum states is considered pseudorandom if it can be generated efficiently and appears statistically indistinguishable from Haar random states to an observer with limited computational resources.

\revA{In between PRUs and PRS are pseudorandom state scramblers (PRSSs). They are an ensemble of $n$-qubit unitaries $\{U_i\}_i$ which applied to any (fixed) $n$-qubit input state $\ket{\psi}$ generate a PRS $\{U_i \ket{\psi}\}_i$~\cite{lu2023quantum,ananth2024pseudorandom}. 
Formally, we have~\cite{lu2023quantum}:
\begin{definitionS}[Pseudorandom State Scrambler (PRSS)]
  \label{def:prssb}
  Let $\kappa=\operatorname{poly}(n)$ be a security
  parameter. A \emph{pseudorandom state scrambler} (PRSS) is an ensemble of $n$-qubit unitary operators $\{ U_k \in \mathcal{U}(\mathcal{H}) \}_{k\in\mathcal{K}}$ 
satisfying:
  \begin{enumerate}
  \item \textit{Efficient computation}: There exists an efficient quantum algorithm $G$, such that for all $k$ and any $\ket{\psi} \in \mathcal{S}(\mathcal{H})$, $G(k, \ket{\psi}) = U_k \ket{\psi}$.
  \item Pseudorandomness: For any $m = \mathrm{poly}(n)$, any $n$-qubit state $\ket{\phi}\in \mathcal{S}(\mathcal{H})$, and any efficient quantum algorithm $A$ we have
\[\left| \Pr_{k \gets \mathcal{K}} [A((U_k\ket{ \phi})^{\otimes m}) = 1] - \Pr_{\ket{\psi} \gets \mu} [A(\ket{\psi}^{\otimes m}) = 1] \right| = \operatorname{negl}(n),\]
where $\mu$ is the Haar measure on $\mathcal{S}(\mathcal{H})$.

  \end{enumerate}

\end{definitionS}
Note that while PRSS are unitaries that generate PRS, not all unitary generators of PRS are also PRSS. In particular, there are PRS generators that produce PRS only when applied to one particular input state~\cite{ji2018pseudorandom}. For example, the recently proposed construction of the binary phase state requires a product state as input, while for general input states the same generators will not produce PRS~\cite{brakerski2019pseudo}. 
Also note that any PRU is a PRSS, while the converse is not true in general as we show in the main text.
While not discussed here, note that one can define more general notion of PRSS for isometries~\cite{lu2023quantum,ananth2024pseudorandom}.}

Finally, we discuss the notion of pseudoentanglement, as introduced in Ref.~\cite{bouland2022quantum}.
\begin{definitionS}[Pseudoentangled State Ensemble (PES)]
A \emph{pseudoentangled state ensemble (PES)} with gap $f(n)$ vs. $g(n)$ consists of two ensembles of $n$-qubit states $\ket{\Psi_k}$ and $\ket{\Phi_k}$, indexed by a secret key $k \in \mathcal{K}$ with the following properties:

\begin{enumerate}
\item \emph{Efficient Preparation}: Given $k$, $\ket{\Psi_k}$ (or $\ket{\Phi_k}$, respectively) is efficiently preparable by a uniform, poly-sized quantum circuit.

\item \emph{Entanglement Entropy}: With probability $\geq 1 - \frac{1}{\text{poly}(n)}$ over the choice of $k$, the entanglement entropy between the first $\frac{n}{2}$ and second $\frac{n}{2}$ qubits of $\ket{\Psi_k}$ (or $\ket{\Phi_k}$, respectively) is $\Theta(f(n))$ (or $\Theta(g(n))$, respectively).

\item \emph{Indistinguishability}: \revC{No polynomial-time quantum algorithm with access to $m=\mathrm{poly}(n)$ copies can distinguish between the two ensembles}  %
with more than negligible probability. That is, for any poly-time quantum algorithm $A$, we have that
\[\left| \Pr_{k \gets \mathcal{K}} [A(\ket{\Phi_k}^{\otimes m}) = 1] - \Pr_{k \gets \mathcal{K}} [A(\ket{\Psi_k}^{\otimes m}) = 1] \right| = \operatorname{negl}(n)\,.\]
\end{enumerate}
\end{definitionS}

We now generalize this concept to arbitrary resource theories:
\begin{definitionS}[Pseudoresource state ensemble]\label{def:pseudoresource}
With respect to resource $Q$,  pseudoresource state ensemble with
gap $f(n)$ vs. $g(n)$ (where $f(n)\ge g(n)$) consists of two ensembles with key $k\in\mathcal{K}$: A `high-resource' ensemble $\ket{\psi_k}$ which has $Q(\ket{\psi_k})=f(n)$ with high probablity over $k$, and `low-resource' ensemble $\ket{\phi_k}$ with $Q(\ket{\phi_k})=g(n)$. We demand following conditions for the pseudoresource state ensemble:
\begin{itemize}
\item $\ket{\psi_k}$ and $\ket{\phi_k}$ are efficiently preparable
\item No poly-time quantum algorithm $A$ can distinguish between the ensembles $\rho=\ket{\psi_k}^{\otimes m}\bra{\psi_k}^{\otimes m}$ and $\sigma=\ket{\phi_k}^{\otimes m}\bra{\phi_k}^{\otimes m}$ with $m\in\mathrm{poly}(n)$
with more than negligible probability:
\[\left| \Pr_{k \gets \mathcal{K}} [A(\ket{\psi_k}^{\otimes m}) = 1] - \Pr_{k \gets \mathcal{K}} [A(\ket{\phi_k}^{\otimes m}) = 1] \right| = \operatorname{negl}(n)\,.\]
 \end{itemize}
\end{definitionS}

\subsection{The indistinguishability experiment}
It is crucial to emphasize that a pseudorandom object possesses any efficiently verifiable property that its random counterpart exhibits. To illustrate this concept, we consider a two-player experiment $\mathcal{E}\left(\adversary,n  \right)$ involving a challenger $\challenger$ and a quantum polynomial time adversary $\adversary$. Let ${ \ket{\phi_k} \in \mathcal{S}(\mathcal{H}) }_{k \in \mathcal{K}}$  represent a  keyed family of $n$-qubit quantum states and $\mu$ represent the Haar measure on $\mathcal{S}(\mathcal{H})$. Let $m = \mathrm{poly}(n)$. \revA{The key generator algorithm $\mathrm{Gen}(1^n)$ takes the security parameter $n$ in unary as input, and outputs key $k$, which is a uniform random bitstring of size $\mathrm{poly}(n)$.} The experiment $\mathcal{E}\left(\adversary,n  \right)$  proceeds as follows.
\begin{enumerate}
    \item $\challenger$ generates a private key $k \leftarrow \Gen(1^n)$, where $\vert k \vert = \mathrm{poly}(n)$.
    \item $\challenger$ samples uniform random a bit $b \leftarrow \bit$.
    \begin{enumerate}
        \item If b = 0, $\challenger$ prepares a state $\ket{\psi_0} := \ket{\phi_k}^{\otimes m}$.
        \item If b = 1, $\challenger$ prepares a state $\ket{\psi_1} := \ket{\phi_\mu}^{\otimes m}$. Here, $\ket{\phi_\mu}$ is a Haar random state.
    \end{enumerate}
    \item $\challenger$ sends $\ket{\psi_b}$ to $\adversary$.
    \item $\adversary$ sends $b^{\prime} \in \bit$ to $\challenger$.
    \item The output of the experiment is defined to be $1$ if $b^{\prime} = b$, else $0$. If $\mathcal{E}\left(\adversary,n  \right)=1$, we say $\adversary$ succeeded.
\end{enumerate}
If ${ \ket{\phi_k} \in \mathcal{S}(\mathcal{H}) }_{k \in \mathcal{K}}$  is PRS, then $\text{Pr}[ \mathcal{E}\left(\adversary,n  \right)=1 ]  \leq \frac{1}{2} + \text{negl}(n)$. Here, the probability is taken over all the random coins used in the experiment. If $\ket{\phi_k}$ does not possess any efficiently verifiable property that its random counterpart exhibits, then it can be used by $\adversary$ to win the indistinguishability game $ \mathcal{E}\left(\adversary,n  \right)$ with a probability greater than $\frac{1}{2} + \text{negl}(n)$.

\section{Property Testing}\label{sec:propert_test}

\emphsection{Property testing}%
A property tester $\mathcal{A}_Q$ has to fulfill two conditions~\cite{rubinfeld1996robust,goldreich1998property,buhrman2008quantum,montanaro2013survey}: The completeness condition demands that the tester accepts with high probability if the state has the property within threshold $\beta$. The soundness condition states that the tester rejects with high probability if the state exceeds a threshold value $\delta$ of the property.

\begin{definitionS}[Property tester]\label{def:prop-tester}
An algorithm $\mathcal{A}_Q$ is a  tester for property $Q$ using $t=t(n,\delta,\beta)$ copies if, given $t$ copies of $\ket{\psi}\in\mathbb{C}^{2^n}$, constants $\beta>0$ and $\delta>\beta$,  it acts as follows:
\begin{itemize}
\item (Completeness) If $Q(\ket{\psi})\le \beta$, then
\begin{align}
\Pr[\mathcal{A}_Q\text{ accepts given }\ket{\psi}^{\otimes t}]\geq\frac{2}{3}.
\end{align}
\item (Soundness) If $Q(\ket{\psi})\ge \delta$, then 
\begin{align}
\Pr[\mathcal{A}_Q\text{ accepts given $\ket{\psi}^{\otimes t}$}]\leq\frac{1}{3}.
\end{align}
\end{itemize}
\end{definitionS}

\section{PRSs, PRUs and noise}\label{sec:noise}
Here, we show that PRSs and PRUs (and also PRSSs) can sustain only a negligible amount of noise.
First, we restate Theorem~\ref{thm:noise} of the main text:
\begin{thmS}[PRUs and PRSs are not robust to noise ]\label{thm:noise_sup}
Any ensemble of unitaries or states subject to a single-qubit depolarizing noise channel $\Lambda_p(X)=(1-p) X+p\tr(X)I_1/2$ can be PRUs and PRSs only for at most $p=\mathrm{negl}(n)$ depolarizing probability.
\end{thmS}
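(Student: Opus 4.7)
The plan is to construct an efficient distinguisher between the noisy ensemble and Haar random states using the SWAP test~\cite{barenco1997stabilization}. Given two copies of a state $\rho$, the SWAP test produces an unbiased Bernoulli estimator of the purity $\gamma(\rho)=\tr(\rho^2)$, so $O(1/\epsilon^2)$ pairs of copies estimate $\gamma$ to additive precision $\epsilon$. Haar random pure states (and hence any PRS, by indistinguishability) give $\gamma=1$; it therefore suffices to show that a pure state afflicted by single-qubit depolarizing noise of strength $p$ has purity bounded below $1$ by an $\Omega(p)$ gap. For $p=\Omega(1/\mathrm{poly}(n))$ this yields an efficient distinguisher with non-negligible advantage, contradicting the PRS property.

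The central computation is to bound $\tr(\rho'^2)$ for $\rho'=(\Lambda_p\otimes I_{n-1})(\ket{\psi}\bra{\psi})=(1-p)\ket{\psi}\bra{\psi}+p(I_1/2)\otimes\tr_1(\ket{\psi}\bra{\psi})$. Expanding the three cross terms and using the Schmidt decomposition of $\ket{\psi}$ across the bipartition qubit~$1$ vs the remaining $n-1$ qubits, one finds that both the interference term $\bra{\psi}(I_1/2\otimes\tr_1(\ket{\psi}\bra{\psi}))\ket{\psi}$ and the squared mixed-state term collapse to expressions in the Schmidt coefficients, yielding
\begin{equation}
\tr(\rho'^2)=(1-p)^2+p(1-p/2)\,\tr(\rho_1^2)\,,
\end{equation}
where $\rho_1$ is the reduced state on qubit~$1$. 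Since $\tr(\rho_1^2)\le 1$ for every single-qubit state, this gives $\tr(\rho'^2)\le 1-p+p^2/2$, an $\Omega(p)$ gap relative to the Haar random purity of $1$, uniformly in $\ket{\psi}$. Combined with the SWAP test sample complexity, $O(\mathrm{poly}(n))$ copies suffice to distinguish the noisy ensemble from Haar random with advantage $\Omega(1/\mathrm{poly}(n))$, so the ensemble cannot be a PRS unless $p=\mathrm{negl}(n)$.

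For the unitary version, the reduction is immediate: applying the noisy PRU on the fixed input $\ket{0}^{\otimes n}$ produces on each use a noisy output state of the above form, while an ideal Haar random unitary maps $\ket{0}^{\otimes n}$ to a Haar random pure state of purity $1$. The same SWAP-test distinguisher then separates the two cases efficiently. The main obstacle I expect is confirming that the purity drop holds uniformly for every pure $\ket{\psi}$ in the ensemble rather than on average, so that an adversary cannot hide behind pathological key choices; but because the inequality $\tr(\rho_1^2)\le 1$ is completely state-independent, this is direct, and everything else reduces to the standard Chernoff-style sample-complexity analysis for the SWAP test estimator.
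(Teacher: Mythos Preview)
Your proposal is correct and follows essentially the same approach as the paper: use the SWAP test to estimate purity, show that a single-qubit depolarizing channel on any pure state forces $\gamma\le 1-p+p^2/2$, and for unitaries reduce to the state case by feeding in $\ket{0}^{\otimes n}$. Your derivation of the purity formula $\tr(\rho'^2)=(1-p)^2+p(1-p/2)\tr(\rho_1^2)$ via the Schmidt decomposition is in fact more explicit than the paper, which simply asserts the resulting bound without computation.
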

\begin{proof}
The purity of a state $\rho$ is given by
\begin{equation}\label{eq:purity}
  \gamma(\rho)=\tr(\rho^2)\,
\end{equation}
where $2^{-n}\le \gamma\le 1$, $\gamma(\rho)=1$ for any pure state $\rho=\ket{\psi}\bra{\psi}$, and for maximally mixed state $\rho_\text{m}=I_n 2^{-n}$ we have $\gamma(\rho_\text{m})=2^{-n}$. 
For pure Haar random states we find trivially ${\mathbb{E}}_{\ket{\phi}\leftarrow S(\mathcal{H})}[\gamma(\ket{\phi})]=1$.
We can efficiently measure $\gamma(\rho)$ via the SWAP test acting on $\rho\otimes\rho$, which accepts with probability $\text{Pr}_\text{SWAP}(\rho)=\frac{1}{2}+\gamma(\rho)/2$ (see Fig.~\ref{fig:SWAPtest} or~\cite{barenco1997stabilization}). This follows from $\tr(\rho^2)=\tr(S \rho \otimes \rho)$, where $S$ is the SWAP operator $S=\sum_{k\ell}\ket{k\ell}\bra{\ell k}$.

\begin{figure*}[htbp]
	\centering	
	\subfigimg[width=0.4\textwidth]{}{SWAPtest.pdf}
	\caption{SWAP test.
	}
	\label{fig:SWAPtest}
\end{figure*}

Now, we regard the local depolarizing channel \begin{equation}\label{eq:depol}
\Lambda_p(\rho)=(1-p) \rho +\frac{p}{2} I_1\otimes \tr_1 (\rho)
\end{equation}
acting on the first qubit with probability $p$  where $\tr_1(.)$ is the partial trace over the first qubit. For any input pure state $\ket{\psi}$ we find $\gamma(\Lambda_p(\ket{\psi}))\le 1-p+p^2/2$.

Let us consider an ensemble of pure states $\ket{\psi}\in\mathcal{E}$. Now, we subject the states to noise, where we define the ensemble of corresponding noisy states $\rho\in \mathcal{E}_c$ where $\mathcal{E}_c=\{\Lambda_p(\ket{\psi}) : \ket{\psi}\in\mathcal{E}\}$ with $p=\Omega(n^{-c})$ with $c>0$. 
For the noisy states we have ${\mathbb{E}}_{\rho\leftarrow \mathcal{E}_c}[\gamma(\rho)]=1-\Omega(n^{-c})$ and thus
\begin{equation}
\left\vert \underset{\rho\leftarrow \mathcal{E}_c}{\mathbb{E}}[\text{Pr}_\text{SWAP}(\rho)]-\underset{\ket{\phi}\leftarrow S(\mathcal{H})}{\mathbb{E}}[\text{Pr}_\text{SWAP}(\ket{\phi})]\right\vert=\Omega(n^{-c})\,.
\end{equation}
We conclude that the SWAP test efficiently distinguishes Haar random states and any ensemble of states affected by $p=\Omega(n^{-c})$ depolarizing noise. 
Thus, PRSs can have at most $p=\text{negl}(n)$ depolarizing noise.

Next, we consider the case of noisy unitaries. 
Consider an ensemble of unitaries $U\in \mathcal{G}$. Now, the unitaries are affected by the depolarizing channel~\eqref{eq:depol}, giving us the ensemble of noisy channels $\Gamma\in \mathcal{G}_c$ with $\mathcal{G}_c=\{\Lambda_p \circ U : U\in \mathcal{G}\}$ where $p=\Omega(n^{-c})$. We now apply the noisy unitaries on a test state $\ket{0}$ and perform the SWAP test.
We find
\begin{equation}
\left\vert \underset{\Gamma\leftarrow \mathcal{G}_c}{\mathbb{E}}[\text{Pr}_\text{SWAP}(\Gamma(\ket{0})]-\underset{U\leftarrow \mathcal{U}(\mathcal{H})}{\mathbb{E}}[\text{Pr}_\text{SWAP}(U\ket{0})]\right\vert=\Omega(n^{-c})\,.
\end{equation}
The SWAP test efficiently distinguishes Haar random unitaries and any unitaries affected by $p=\Omega(n^{-c})$ depolarizing noise. 
Thus, PRUs can be subject to at most $p=\text{negl}(n)$ depolarizing noise. 
\end{proof}

\section{Proof of inefficient  testing of imaginarity for states}\label{sec:proof_imag}
Here we show that imaginarity cannot be efficiently tested for states. First, we restate the Theorem~\ref{thm:no_test_imag} of the main text:
\begin{thmS}[Imaginarity cannot be efficiently tested for states]\label{thm:no_test_imag_sup}
Any tester $\mathcal{A}_{\mathcal{I}}$ for imaginarity according to Def.~\ref{def:prop-tester} for an $n$-qubit state $\ket{\psi}$ requires $t=\Omega(2^{n/2})$ copies of $\ket{\psi}$ for $\delta<1-n^22^{-n/2}$ and any $\beta<\delta$.
\end{thmS}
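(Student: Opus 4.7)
The plan is to exhibit two ensembles that sit on opposite sides of the completeness/soundness thresholds yet whose $t$-copy mixtures are close in trace distance whenever $t=o(2^{n/2})$. On the low-imaginarity side I take the random binary phase states $\ket{\psi_f}=2^{-n/2}\sum_{x\in\{0,1\}^n}(-1)^{f(x)}\ket{x}$ with $f\colon\{0,1\}^n\to\{0,1\}$ drawn uniformly at random, i.e., the ensemble of \eqref{eq:phaseKstates} with $K=2^n$. Every $\ket{\psi_f}$ has real amplitudes, so $\mathcal{I}(\ket{\psi_f})=0\le\beta$ deterministically, and the completeness clause of Def.~\ref{def:prop-tester} forces $\mathcal{A}_{\mathcal I}$ to accept each such sample with probability at least $2/3$. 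On the high-imaginarity side I use the Haar ensemble itself.

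\textbf{Haar imaginarity.} Writing $\ket\phi=\sum_k(a_k+ib_k)\ket{k}$ with $(a_k,b_k)$ uniform on $S^{2^{n+1}-1}$, a direct expansion gives $|\braket{\phi}{\phi^*}|^2=(\sum_k(a_k^2-b_k^2))^2+4(\sum_k a_kb_k)^2$. Plugging in the standard sphere moments $\mathbb{E}[x_i^4]=3/(2N(2N+2))$ and $\mathbb{E}[x_i^2x_j^2]=1/(2N(2N+2))$ with $N=2^n$, the cross-terms cancel by the reflection symmetry $a_k\mapsto -a_k$ and I obtain $\mathbb{E}_{\ket\phi\sim\mu}|\braket{\phi}{\phi^*}|^2=2/(2^n+1)$, hence $\mathbb{E}_{\ket\phi\sim\mu}\mathcal{I}(\ket\phi)=1-2/(2^n+1)>1-2^{-n+1}$. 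For any fixed $\delta<1-2^{-n+1}$, Markov's inequality applied to the nonnegative quantity $|\braket{\phi}{\phi^*}|^2$ yields $\Pr_{\ket\phi\sim\mu}[\mathcal{I}(\ket\phi)<\delta]=O(2^{-n}/(1-\delta))=o(1)$, so by the soundness clause the expected acceptance probability of $\mathcal{A}_{\mathcal I}$ on a Haar sample is at most $1/3+o(1)$.

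\textbf{Indistinguishability and closing.} The quantitative crux is the moment-matching bound
\[
\Bigl\|\mathbb{E}_f(\ketbra{\psi_f}{\psi_f})^{\otimes t}-\mathbb{E}_{\ket\phi\sim\mu}(\ketbra{\phi}{\phi})^{\otimes t}\Bigr\|_1=O\!\bigl(t^2/2^n\bigr).
\]
I establish it by expanding the left density operator in the computational basis: averaging over $f$ kills every term unless each of the $2t$ basis labels appears an even number of times, and on the collision-free sector of the $2t$-fold product, which carries mass $1-O(t^2/2^n)$ in the symmetric subspace, the resulting operator coincides with the appropriately normalised Haar symmetric projector. This is the same collision argument that underlies the $K$-subset phase PRS constructions of Refs.~\cite{ji2018pseudorandom,brakerski2019pseudo,bouland2022quantum}, specialised to full support $K=2^n$. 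Feeding the two ensembles into $\mathcal{A}_{\mathcal I}$ now opens an acceptance-probability gap of at least $2/3-(1/3+o(1))=\Omega(1)$, while the operational meaning of trace distance caps the same gap at $O(t^2/2^n)$; equating the two forces $t=\Omega(2^{n/2})$.

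\textbf{Main obstacle.} The pseudorandomness statement of Ref.~\cite{bouland2022quantum} at $K=\omega(\log n)$ is qualitative (no polynomial-time tester succeeds), whereas Theorem~\ref{thm:no_test_imag_sup} demands a sharp sample-complexity lower bound with exponent $n/2$. The technical heart is therefore the collision-type count on the $2t$ multi-indices in the second-moment tensor that extracts precisely this exponent; the two surrounding steps (Markov for Haar imaginarity and operational-norm duality to convert trace distance into an acceptance gap) are standard bookkeeping.
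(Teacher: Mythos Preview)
Your proposal is correct and follows the same overall skeleton as the paper: the real ensemble is the full-support binary phase states, the high-imaginarity ensemble is (essentially) Haar, and the quantitative input is the $O(t^2/2^n)$ trace-distance bound between their $t$-fold mixtures, which is precisely Theorem~2.1 of Ref.~\cite{bouland2022quantum} specialised to $K=2^n$ (so your ``main obstacle'' is in fact already available as a citation rather than a new collision count).

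Where you genuinely diverge from the paper is on the Haar side. The paper invokes Levy's lemma, computes the Lipschitz constant of $\mathcal{I}$, obtains the doubly-exponential concentration $\Pr[\mathcal{I}<\delta]=2^{-\Omega(2^n)}$, and then builds the conditioned ensemble $\mathcal{E}_{\mathcal{I}\ge\delta}$, shows it is $2^{-\Omega(2^n)}$-close to Haar, and finishes by a triangle inequality between $\mathcal{E}_{\text p}^{2^n}$, Haar, and $\mathcal{E}_{\mathcal{I}\ge\delta}$. You instead apply Markov directly to $|\braket{\phi}{\phi^*}|^2$ (whose mean you compute to be $2/(2^n+1)$, matching the paper's Lemma in \SM~\ref{sec:imaginarity_Haar_states}), obtain only $\Pr[\mathcal{I}<\delta]=O(2^{-n})$, and feed Haar itself into the tester, absorbing the bad event into an additive $o(1)$ on the acceptance probability. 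This is more elementary---no Lipschitz estimate, no intermediate ensemble, no triangle inequality---and the weaker $O(2^{-n})$ tail is entirely sufficient since the acceptance gap you need to close is only $\Omega(1)$. The paper's Levy route buys a much stronger concentration statement that could be reused elsewhere (e.g.\ statistical indistinguishability of $\mathcal{E}_{\mathcal{I}\ge\delta}$ from Haar for \emph{all} $t$), but for the theorem as stated your shortcut suffices. Note that both arguments implicitly take $\delta$ to be a constant bounded away from $1$ (the paper says so explicitly in its proof); your $O(2^{-n}/(1-\delta))=o(1)$ step uses this just as the paper's $(\mathcal{I}_{\text{Haar}}-\delta)^2=\Omega(1)$ does.
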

\begin{proof}
Let us first recall Levy's lemma~\cite{popescu2005foundations}
\begin{lemS}[Levy's lemma]\label{def:Levy}
 Given a function $f : \mathbb{S}(d) \rightarrow\mathbb{R}$ defined on the $d$-dimensional hypersphere $\mathbb{S}(d)$, $\epsilon>0$ and a point $\phi \in \mathbb{S}(d)$ chosen uniformly at random,
\begin{equation}
    \mathrm{Pr}(\vert f(\phi)-\langle f \rangle\vert \ge \epsilon)\le 2\exp\left(\frac{-2C(d+1)\epsilon^2}{\eta^2} \right)
\end{equation}
where $\eta$ is the Lipschitz constant of f, given by $\eta = \mathrm{sup} \vert \nabla f\vert$, and $C$ is a positive constant (which can be taken
to be $C = (18\pi^3)^{-1}$.
\end{lemS}
Due to normalisation, pure states in a $2^n$-dimensional Hilbert space can be represented by points on the surface of a ($2\cdot2^n -1$)-dimensional
hypersphere $\mathbb{S}(2\cdot2^n -1)$, and hence we can apply Levy’s Lemma to functions of the randomly selected quantum state $\ket{\phi}$ by setting $d = 2\cdot2^n -1$.

Next, we bound the Lipschitz constant $\eta$ of imaginarity. For any normalized pure states $\ket{\psi}$, $\ket{\phi}$, we find
\begin{equation}\label{eq:lipschitz}
\eta =\text{max}_{\ket{\psi},\ket{\phi}} \frac{\vert \mathcal{I}(\ket{\psi})-\mathcal{I}(\ket{\phi})\vert}{\vert \ket{\psi}-\ket{\phi}\vert}=4 ,
\end{equation}
where $\vert \ket{\psi} -\ket{\phi}\vert =\sqrt{2-2\Re(\braket{\psi}{\phi})}$ indicates the Euclidean norm. 
First, we have
\begin{align*}
\vert \mathcal{I}(\ket{\psi})-\mathcal{I}(\ket{\phi})\vert&=\frac{1}{4}\left\vert\Vert  \ket{\psi}\bra{\psi}-\ket{\psi^\ast}\bra{\psi^\ast} \Vert _1^2-\Vert  \ket{\phi}\bra{\phi}-\ket{\phi^\ast}\bra{\phi^\ast} \Vert_1^2 \right\vert \\
&\le \left\vert\Vert  \ket{\psi}\bra{\psi}-\ket{\psi^\ast}\bra{\psi^\ast} \Vert _1-\Vert  \ket{\phi}\bra{\phi}-\ket{\phi^\ast}\bra{\phi^\ast} \Vert_1 \right\vert 
\end{align*}
where we used $\Vert x\Vert^2-\Vert y\Vert^2=(\vert \Vert x\Vert-\Vert y\Vert\vert)(\Vert x\Vert+\Vert y\Vert)$ and \revC{$\Vert  \ket{\psi}\bra{\psi}+\ket{\psi^\ast}\bra{\psi^\ast} \Vert _1\le 2$.}
Then, we have
\begin{align*}
\vert \mathcal{I}(\ket{\psi})-\mathcal{I}(\ket{\phi})\vert&\le
\Vert \ket{\psi}\bra{\psi}-\ket{\phi}\bra{\phi} -( \ket{\psi^\ast}\bra{\psi^\ast}-\ket{\phi^\ast}\bra{\phi^\ast}) \Vert_1 \\
&\le \Vert \ket{\psi}\bra{\psi}-\ket{\phi}\bra{\phi}\Vert_1 +\Vert \ket{\psi^\ast}\bra{\psi^\ast}-\ket{\phi^\ast}\bra{\phi^\ast} \Vert_1\\
&=4\sqrt{1-\vert \braket{\psi}{\phi}\vert^2}\le  4\vert \ket{\psi}-\ket{\phi}\vert\numberthis\label{eq:normI}
\end{align*}
where in the first two steps we used the triangle inequalities $\vert\Vert x\Vert-\Vert y\Vert \vert \le \Vert x-y\Vert$, \revC{$\Vert x-y\Vert \le \Vert x\Vert +\Vert y\Vert$}, and in the last inequality 
\begin{align*}
1-\vert \braket{\psi}{\phi}\vert^2&= 1-\Re(\braket{\psi}{\phi})^2-\Im(\braket{\psi}{\phi})^2
\le 1-\Re(\braket{\psi}{\phi})^2= (1-\Re(\braket{\psi}{\phi}))(1+\Re(\braket{\psi}{\phi})) \\
&\le2(1-\Re(\braket{\psi}{\phi}))=\vert \ket{\psi}-\ket{\phi} \vert^2\,,
\numberthis\label{eq:norms}
\end{align*}
which finally gives us $\eta=4$.

From Levy's lemma and the imaginarity of Haar random states $\mathcal{I}_\text{Haar}=\mathbb{E}_{\ket{\psi}\leftarrow S(\mathcal{H})}[\mathcal{I}(\ket{\psi})]=1-2(2^n+1)^{-1}$ (see \eqref{eq:random_imaginarity}), we have
\begin{equation}
    \mathrm{Pr}(\vert \mathcal{I}(\ket{\psi})-\mathcal{I}_\text{Haar}\vert \ge \epsilon)\le 2\exp(-\beta_0 2^n\epsilon^2 ),
\end{equation}
where $\beta_0=4C/\eta^2=(288\pi^{3})^{-1}$ and $\epsilon>0$.
We now drop the absolute value $\vert . \vert$ and define $\epsilon=\mathcal{I}_\text{Haar}-\delta$ to get
\begin{equation}
    \mathrm{Pr}(\mathcal{I}(\ket{\psi}) \le  \delta)\le 2\exp(-\beta_0 2^n(\mathcal{I}_\text{Haar}-\delta)^2),
\end{equation}
which is valid for any $\delta<\mathcal{I}_\text{Haar}$. %

\revC{We now choose $\delta$ such that we achieve exponential concentration, i.e. exponentially many quantum states have an imaginarity $\mathcal{I}>\delta$. 
We achieve this by setting 
\begin{equation}
    \delta< \mathcal{I}_\text{Haar}(n)-\beta_0^{-1/2}n2^{-n/2}=1-\frac{2}{2^n+1}-\sqrt{288\pi^3\ln(2)}n2^{-n/2}
\end{equation}
where we get
\begin{align}\label{eq:small-distance-low-probability}
P_\delta=\underset{\ket{\psi}\leftarrow S(\mathcal{H})}{\text{Pr}}\left[\mathcal{I}(\ket{\psi})< \delta\right]=O(2^{-n^2})\,.
\end{align}
We now proceed with the choice 
\begin{equation}\label{eq:choicedelta}
    \delta<1-n^22^{-n/2}\,.
\end{equation}
which satisfies the scaling of~\eqref{eq:small-distance-low-probability} for any $n\geq80$.
}

Let us now define the ensemble $\mathcal{E}_{\mathcal{I}\ge \delta}=\{\ket{\psi}:\mathcal{I}(\ket{\psi})\ge\delta, \ket{\psi}\in S(\mathcal{H})\}$ 
of states with imaginarity at least $\delta$, and the ensemble $\mathcal{E}_{\mathcal{I}< \delta}=\{\ket{\psi}:\mathcal{I}(\ket{\psi})<\delta, \ket{\psi}\in S(\mathcal{H})\}$ of states with imaginarity at most $\delta$.

Given the concentration~\eqref{eq:small-distance-low-probability} and choosing $\delta$ according to~\eqref{eq:choicedelta}, the ensemble of states with $\mathcal{I}\ge \delta$ is exponentially close to the ensemble of Haar random states $S(\mathcal{H})$ for any $t$
\begin{align}
\text{TD}\left(\underset{\ket{\psi}\leftarrow\mathcal{E}_{\mathcal{I}\ge \delta}}{\mathbb{E}}\big[\ket{\psi}\bra{\psi}^{\otimes t}\big],\underset{\ket{\phi}\leftarrow S(\mathcal{H})}{\mathbb{E}}\big[\ket{\phi}\bra{\phi}^{\otimes t}\big]\right)=O(2^{-n^2})\,,\label{eq:TDHaarImag}
\end{align}
where we have the trace distance $\text{TD}(\rho,\sigma)=\frac{1}{2}\| \rho-\sigma \|_1$.

We now prove~\eqref{eq:TDHaarImag} in the following:
\begin{align*}
&\text{TD}\left(\underset{\ket{\psi}\leftarrow\mathcal{E}_{\mathcal{I}\ge \delta}}{\mathbb{E}}\big[\ket{\psi}\bra{\psi}^{\otimes t}\big],\underset{\ket{\phi}\leftarrow S(\mathcal{H})}{\mathbb{E}}\big[\ket{\phi}\bra{\phi}^{\otimes t}\big]\right)\numberthis\label{eq:closeproofimag}\\
&=\text{TD}\left((1-P_\delta)\underset{\ket{\psi}\leftarrow\mathcal{E}_{\mathcal{I}\ge \delta}}{\mathbb{E}}\big[\ket{\psi}\bra{\psi}^{\otimes t}\big]+P_\delta\underset{\ket{\psi}\leftarrow\mathcal{E}_{\mathcal{I}< \delta}}{\mathbb{E}}\big[\ket{\psi}\bra{\psi}^{\otimes t}\big]\right.\\
&\left.\quad -P_\delta(\underset{\ket{\psi}\leftarrow\mathcal{E}_{\mathcal{I}< \delta}}{\mathbb{E}}\big[\ket{\psi}\bra{\psi}^{\otimes t}\big]-\underset{\ket{\psi}\leftarrow\mathcal{E}_{\mathcal{I}\ge \delta}}{\mathbb{E}}\big[\ket{\psi}\bra{\psi}^{\otimes t}\big]),\underset{\ket{\phi}\leftarrow S(\mathcal{H})}{\mathbb{E}}\big[\ket{\phi}\bra{\phi}^{\otimes t}\big]\right)\\
&= \text{TD}\left(\underset{\ket{\phi}\leftarrow S(\mathcal{H})}{\mathbb{E}}\big[\ket{\phi}\bra{\phi}^{\otimes t}\big]-P_\delta(\underset{\ket{\psi}\leftarrow\mathcal{E}_{\mathcal{I}< \delta}}{\mathbb{E}}\big[\ket{\psi}\bra{\psi}^{\otimes t}\big]-\underset{\ket{\psi}\leftarrow\mathcal{E}_{\mathcal{I}\ge \delta}}{\mathbb{E}}\big[\ket{\psi}\bra{\psi}^{\otimes t}\big]),\underset{\ket{\phi}\leftarrow S(\mathcal{H})}{\mathbb{E}}\big[\ket{\phi}\bra{\phi}^{\otimes t}\big]\right)\\
&= P_\delta\text{TD}\left(\underset{\ket{\psi}\leftarrow\mathcal{E}_{\mathcal{I}< \delta}}{\mathbb{E}}\big[\ket{\psi}\bra{\psi}^{\otimes t}\big],\underset{\ket{\psi}\leftarrow\mathcal{E}_{\mathcal{I}\ge \delta}}{\mathbb{E}}\big[\ket{\psi}\bra{\psi}^{\otimes t}\big]\right)=O(2^{-n^2}) ,
\end{align*}
where in the second step we used 
\begin{equation}
    (1-P_\delta)\underset{\ket{\psi}\leftarrow\mathcal{E}_{\mathcal{I}\ge \delta}}{\mathbb{E}}\big[\ket{\psi}\bra{\psi}^{\otimes t}\big]+P_\delta\underset{\ket{\psi}\leftarrow\mathcal{E}_{\mathcal{I}< \delta}}{\mathbb{E}}\big[\ket{\psi}\bra{\psi}^{\otimes t}\big]=\underset{\ket{\phi}\leftarrow S(\mathcal{H})}{\mathbb{E}}\big[\ket{\phi}\bra{\phi}^{\otimes t}\big]
\end{equation}
and in the last step $\text{TD}(\rho,\sigma)\le1$ for any valid quantum states $\rho$, $\sigma$.

\revC{Note that~\eqref{eq:TDHaarImag} is independent of the number of copies $t$. This is because the set of highly imaginary states and the set of Haar random states are nearly identical, differing only by some exponentially small fraction. Therefore, they remain close for any $t$. Note that this is in contrast to the distance between Haar random states and discrete sets of states (e.g. the $K$-subset phase states~\eqref{eq:phaseKstates}), which due to their small size become further in distance with more copies $t$.}

Next, we recall the $K$-subset phase states from~\eqref{eq:phaseKstates} or Ref.~\cite{bouland2022quantum} 
\begin{equation}\label{eq:phaseKstates_sup}
   \ket{\psi_{S,f}}=\frac{1}{\sqrt{K}}\sum_{x\in S}(-1)^{f(x)}\ket{x} , 
\end{equation}
where $S$ is a set of binary strings $\{0,1\}^{n}$ with exactly $K=\vert S\vert$ elements and $f:\{0,1 \}^n \to \{ 0, 1\}$ a binary phase function. \revC{Note that the $K$-subset phase state is real-valued, and thus has imaginarity $\mathcal{I}(\ket{\psi_{S,f}})=0$.}
Let us also define the set of all $K$-subset phase states $\mathcal{E}_\text{p}^K=\{\ket{\psi_{S,f}}\}_{S\, \text{with}\, \vert S\vert=K,\,f}$ which contains the states $\ket{\psi_{S,f}}$ with all binary phase functions $f$ and all $S$ with $K=\vert S\vert$.

The ensemble $\mathcal{E}_\text{p}$ of phase states~\eqref{eq:phaseKstates_sup} has been shown to be statistically close to the ensemble of Haar random states (Theorem 2.1 of Ref.~\cite{bouland2022quantum})
\begin{equation}\label{eq:phaseHaar}
\text{TD}\left(\underset{\ket{\psi}\leftarrow\mathcal{E}_{\text{p}}^K}{\mathbb{E}}\big[\ket{\psi}\bra{\psi}^{\otimes t}\big],\underset{\ket{\phi}\leftarrow S(\mathcal{H})}{\mathbb{E}}\big[\ket{\phi}\bra{\phi}^{\otimes t}\big]\right)= O\left(\frac{t^2}{K}\right)\,.
\end{equation}
\revC{Now, as the Haar random ensemble is exponentially close to the ensemble of highly imaginary states, we can apply the inequality $\text{TD}(\rho,\sigma)\le \text{TD}(\rho,\theta)+\text{TD}(\theta,\sigma)$ to~\eqref{eq:TDHaarImag} and~\eqref{eq:phaseHaar} to get}
\begin{align}
\text{TD}\left(\underset{\ket{\psi}\leftarrow\mathcal{E}_{\text{p}}^K}{\mathbb{E}}\big[\ket{\psi}\bra{\psi}^{\otimes t}\big],\underset{\ket{\phi}\leftarrow\mathcal{E}_{\mathcal{I}\ge \delta}}{\mathbb{E}}\big[\ket{\phi}\bra{\phi}^{\otimes t}\big]\right)= O\left(\frac{t^2}{K}\right)\,.\label{eq:TDphaseImag}
\end{align}
\eqref{eq:TDphaseImag} implies that the ensemble of phase states and the ensemble of states with imaginarity $\mathcal{I}(\ket{\psi})\ge \delta$ (where $\delta$ given by~\eqref{eq:choicedelta}) are exponentially close in trace-distance  unless $t=\Omega(\sqrt{K})$.

Two ensembles can be distinguished only if they are sufficiently far in TD distance.
In particular, for any state discrimination protocol between two ensembles $\rho$, $\sigma$, the maximal possible discrimination probability is given by the Helstrom bound~\cite{bae2015quantum}
\begin{equation}\label{eq:helstrom}
    P_\text{discr}(\rho,\sigma)=\frac{1}{2}+\text{TD}(\rho,\sigma)\,.
\end{equation}
In particular, any algorithm trying to distinguish two ensembles with $P_\text{discr}\ge2/3$ requires at least $\text{TD}(\rho ,\sigma)\ge1/6$. 

We now consider the case $K=2^n$, i.e. subset phase states with support on all $2^n$ computational basis states.
\revC{To  achieve $P_\text{discr}=\Omega(1)$ between $t$ copies of $\mathcal{E}_\text{p}^{2^n}$ and $\mathcal{E}_{\mathcal{I}\ge \delta}$, we require  $t=\Omega(2^{n/2})$ for $\delta(n)<1-n^22^{-n/2}$. Thus, testing imaginarity must also require $t=\Omega(2^{n/2})$ copies, else imaginarity testing could be used to distinguish those two ensembles. }
\end{proof}

Note that for the proof we used the subset phase states with $\mathcal{I}=0$ which is the extremal case (i.e. $\beta=0$). Thus, our result holds also for property testing of imaginarity for any $\beta$ with $0\le\beta <\delta$.

\section{Inefficient algorithm to measure imaginarity of states}\label{sec:meas_imag_ineff}
Now, we discuss a scheme to measure imaginarity for pure states. In particular, we have
\begin{equation}
\label{eq:imaginarity_pure_states}
\mathcal{I}(\ket{\psi})=1-2^{n}\bra{\psi}^{\otimes 2}(\ket{\Phi}\bra{\Phi})\ket{\psi}^{\otimes 2}\,,
\end{equation}
where $\ket{\Phi}=2^{-n/2}\sum_{k=0}^{2^n-1}\ket{k}\ket{k}$ is the maximally entangled state. To measure imaginarity, one computes the probability of the projector $\ket{\Phi}\bra{\Phi}$ for two copies of state $\ket{\psi}^{\otimes 2}$. Using the Ricochet property~\ref{eq:ricochet}, it is easy to see for $\ket{\psi}=U\ket{0}$ with arbitrary unitary $U$ via
\begin{align*}
&\bra{\psi}^{\otimes 2}\ket{\Phi}\bra{\Phi}\ket{\psi}^{\otimes 2}=
\bra{0}^{\otimes 2}U^{\dagger}\otimes U^{\dagger}\ket{\Phi}\bra{\Phi}U\otimes U\ket{0}^{\otimes 2}\\
=&
\bra{0}^{\otimes 2}U^{\dagger}{U^\text{T}}^{\dagger}\otimes I \ket{\Phi}\bra{\Phi}U^\text{T}U\otimes I\ket{0}^{\otimes 2}\\
=&2^{-n}\bra{0}{U^{\dagger} U^\text{T}}^{\dagger}\ket{0}\bra{0}U^\text{T}U\ket{0}=2^{-n}\vert\braket{\psi}{\psi^*}\vert^2 ,
\end{align*}
where we made use of~\eqref{eq:ricochet} and ${U^\text{T}}^\dagger=U^{\ast}$. This completes the proof of \eqref{eq:imaginarity_pure_states}.

We measure the observable $2^n\ket{\Phi}\bra{\Phi}$ which has two possible outcomes with eigenvalues $\lambda_1=2^n$ and $\lambda_2=0$, occurring with probability $p=\bra{\psi}^{\otimes 2}\ket{\Phi}\bra{\Phi}\ket{\psi}^{\otimes 2}$ and $q=1-p$. We can see this as Bernoulli trials. 
The Chernoff bound for the average over $M$ Bernoulli trials $\hat{X}=\frac{1}{M}\sum_{i=1}^M X_i$ with trial outcome $X_i$ is given by
\begin{equation}
    P(\vert \hat{X} - \mu \vert \ge \tilde{\epsilon}) \le \exp(-\frac{M\tilde{\epsilon}^2}{3\mu})\,,
\end{equation}
where $\tilde{\epsilon}$ is the additive estimation error.
Now we have the estimator of imaginarity $\hat{\mathcal{I}}=1-2^n \hat{X}$ and the mean value $\mu=p=2^{-n}(1-\mathcal{I})$.
From this, we get
\begin{equation}
    P(\vert \hat{\mathcal{I}} - \mathcal{I} \vert \ge \epsilon)\equiv \delta \le \exp(-\frac{M\epsilon^2}{3(1-\mathcal{I})2^n})\,.
\end{equation}
and finally
\begin{equation}
    M\le 3\epsilon^{-2}(1-\mathcal{I})2^n \log(1/\delta)\le 3\epsilon^{-2} 2^n\log(1/\delta)\,.
\end{equation}
Thus, to achieve a given accuracy $\epsilon$ with failure probability $\delta$, we require at most a number of measurements $M=O(2^n\epsilon^{-2}\log(1/\delta)$. 
This $O(2^n)$ scaling is exponential, but  better than tomography with $O(4^n)$.

\section{Efficient measurement of imaginarity for stabilizer states}\label{sec:stabMeasimag}
\begin{claimS}[Imaginarity of stabilizer states can be measured efficiently]\label{thm:meas_imag_stab_sup}
Given a stabilizer state $\ket{\psi}\in\mathrm{STAB}$, imaginarity can be computed via
\begin{equation}
\mathcal{I}_\mathrm{STAB}(\ket{\psi})=1-2^{-n}\sum_{\sigma\in\mathcal{P}}\vert\bra{\psi}\sigma\ket{\psi^*}\vert^2\vert\bra{\psi}\sigma\ket{\psi}\vert^2 ,
\end{equation}
where $\mathcal{P}$ is the set of all Pauli strings with phase $+1$.
$\mathcal{I}_\mathrm{STAB}$ can be efficiently measured within additive precision $\delta$ with a failure probability $\nu$ using at most $t=O(\delta^{-2}\log(1/\nu))$ copies.
\end{claimS}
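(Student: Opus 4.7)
My plan is to separate the two parts of the statement---the algebraic identity for $\mathcal{I}_\text{STAB}(\ket{\psi})$ and the sample-efficient measurement procedure---and to address them in turn, with the stabilizer structure providing the bridge between them.

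For the identity, I will rely on the defining property that a pure $n$-qubit stabilizer state $\ket{\psi}$ has a stabilizer group $\mathcal{S}$ of size $2^n$, whose elements are Paulis from $\mathcal{P}$ dressed with signs $\pm 1$, and no two distinct elements of $\mathcal{S}$ share the same phase-free form. Consequently, for each $\sigma\in\mathcal{P}$ one has $\bra{\psi}\sigma\ket{\psi}=s_\sigma\in\{-1,0,+1\}$, with $s_\sigma\neq 0$ for exactly the $2^n$ Paulis whose signed counterpart lies in $\mathcal{S}$. On those terms $\sigma\ket{\psi}=s_\sigma\ket{\psi}$, so $\bra{\psi}\sigma\ket{\psi^*}=s_\sigma\braket{\psi}{\psi^*}$ and $|\bra{\psi}\sigma\ket{\psi^*}|^2|\bra{\psi}\sigma\ket{\psi}|^2=s_\sigma^2|\braket{\psi}{\psi^*}|^2$; on the remaining $4^n-2^n$ terms $|\bra{\psi}\sigma\ket{\psi}|^2$ simply vanishes. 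Summing collapses the series to $2^n|\braket{\psi}{\psi^*}|^2$, and inserting \eqref{eq:imag_rho} yields the claimed formula.

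For the efficient measurement, I will construct an unbiased estimator of $|\braket{\psi}{\psi^*}|^2$ that consumes only a constant number of copies of $\ket{\psi}$ per round. Two copies feed a Bell-basis measurement on $\ket{\psi}\otimes\ket{\psi}$: a short ricochet calculation shows that the outcome is a Pauli $\sigma\in\mathcal{P}$ drawn from the distribution $p(\sigma)=2^{-n}|\bra{\psi}\sigma\ket{\psi^*}|^2$. Two further copies are used to measure $\sigma\otimes\sigma$ on a fresh $\ket{\psi}\otimes\ket{\psi}$ and report the product of the two $\pm 1$ outcomes, which is an unbiased estimator of $\bra{\psi}\sigma\ket{\psi}^2=|\bra{\psi}\sigma\ket{\psi}|^2$. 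Averaging over $\sigma$, the expectation of a single round equals $2^{-n}\sum_{\sigma\in\mathcal{P}}|\bra{\psi}\sigma\ket{\psi^*}|^2|\bra{\psi}\sigma\ket{\psi}|^2$, which by the identity above is exactly $1-\mathcal{I}(\ket{\psi})$ for any stabilizer input. Since the per-round estimator is bounded in $[-1,1]$, Hoeffding's inequality yields additive precision $\delta$ with failure probability $\nu$ after $t=O(\delta^{-2}\log(1/\nu))$ rounds, i.e.\ a matching number of copies.

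The main technical obstacle will be the ricochet bookkeeping that identifies the Bell-sampling distribution with the correct placement of complex conjugation on the second register; the rest is structural algebra on the stabilizer group together with a textbook concentration argument. I note in passing that the statistical estimator itself is well defined for arbitrary pure inputs, but its mean coincides with $1-\mathcal{I}(\ket{\psi})$ only when the input is a stabilizer state, which is precisely what the algebraic identity from the first step supplies.
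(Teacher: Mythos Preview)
Your proposal is correct and essentially identical to the paper's proof. The paper likewise collapses the sum via $\sigma\ket{\psi}=\pm\ket{\psi}$ for $\sigma$ in the stabilizer group (and $\bra{\psi}\sigma\ket{\psi}=0$ otherwise), and its Algorithm~\ref{alg:imagwitness} is precisely your four-copy-per-round procedure: Bell-sample a Pauli from $p(\sigma)=2^{-n}|\bra{\psi}\sigma\ket{\psi^*}|^2$, then multiply two independent $\pm1$ Pauli outcomes on fresh copies to unbiasedly estimate $|\bra{\psi}\sigma\ket{\psi}|^2$, with Hoeffding yielding the $O(\delta^{-2}\log(1/\nu))$ bound.
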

\begin{proof}
We denote the Pauli matrices by $\sigma_{00}=I_{2}$, $\sigma_{01}=\sigma^x$, $\sigma_{10}=\sigma^z$ and $\sigma_{11}=\sigma^y$.
We define the set of all $4^n$ Pauli strings $\mathcal{P}=\{\sigma_{\boldsymbol{r}}\}_{\boldsymbol{r}}$ with $+1$ phase by $N$-qubit tensor products of Pauli matrices given by $\sigma_{\boldsymbol{r}}=\bigotimes_{j=1}^n \sigma_{\boldsymbol{r}_{2j-1}\boldsymbol{r}_{2j}}$ with $\boldsymbol{r}\in\{0,1\}^{2n}$.

\begin{figure*}[htbp]
	\centering	
	\subfigimg[width=0.3\textwidth]{}{BellOperation.pdf}
	\caption{Bell measurement for state $\ket{\psi}$.
	}
	\label{fig:Bell}
\end{figure*}

A stabilizer state $\ket{\psi_\mathrm{STAB}}$ is defined by a commuting subgroup $G$ of $\abs{G}=2^n$ Pauli strings $\sigma$. 
We have $\bra{\psi_\mathrm{STAB}}\sigma\ket{\psi_\mathrm{STAB}}=\pm1$ for $\sigma\in G$ and  $\bra{\psi_\mathrm{STAB}}\sigma'\ket{\psi_\mathrm{STAB}}=0$ for $\sigma'\notin G$~\cite{gottesman1998heisenberg}.

\begin{algorithm}[ht]
 \SetAlgoLined
 \LinesNumbered
  \SetKwInOut{Input}{Input}
  \SetKwInOut{Output}{Output}
   \Input{  $4M$ copies of $\ket{\psi}$
   }
    \Output{$\mathcal{I}_\mathrm{STAB}$
    }

$b=0$

 \SetKwRepeat{Do}{do}{while}
    \For{$\ell=1,\dots,M$}{
    Prepare $\ket{\eta}=U_\text{Bell}^{\otimes n}\ket{\psi}\ket{\psi}$

    Sample $\boldsymbol{r}\sim \vert \braket{\boldsymbol{r}}{\eta}\vert^2$
    
     \SetKwRepeat{Do}{do}{while}
    \For{$k=1,2$}{
    Prepare $\ket{\psi}$ and measure in eigenbasis of Pauli string $\sigma_{\boldsymbol{r}}$ to get eigenvalue $\lambda_k\in\{+1,-1\}$

    }
    $b= b+\lambda_1\lambda_2$
    }
    $\mathcal{I}_\mathrm{STAB}= 1-b/M$
    
 \caption{Imaginarity of stabilizer states}
 \label{alg:imagwitness}
\end{algorithm}

We now propose Algorithm~\ref{alg:imagwitness} to efficiently measure the imaginarity of stabilizer states.
We recall the Bell transformation $U_\text{Bell}=(H\otimes I_1) \text{CNOT}$ (see also Fig.~\ref{fig:Bell}),
where $H=\frac{1}{\sqrt{2}}(\sigma^x+\sigma^z)$ is the Hadamard gate, and $\text{CNOT}=\exp(i\frac{\pi}{4}(I_1-\sigma^z)\otimes(I_1-\sigma^x))$. We apply the Bell transformation on two copies $\ket{\psi}\otimes\ket{\psi}$ and measure in the computational basis. The outcome $\boldsymbol{r}$ corresponding to computational basis state $\ket{\boldsymbol{r}}$ is sampled with a probability~\cite{montanaro2017learning}
\begin{equation}\label{eq:probBell}
P(\boldsymbol{r})=\bra{\psi}\bra{\psi}{U_\text{Bell}^{\otimes n}}^\dagger \ket{\boldsymbol{r}}\bra{\boldsymbol{r}}U_\text{Bell}^{\otimes n}\ket{\psi}\ket{\psi}=2^{-n}\vert\bra{\psi}\sigma_{\boldsymbol{r}}\ket{\psi^*}\vert^2\,.
\end{equation} 
Using a combination of Bell measurements and Pauli measurements~\cite{gross2021schur}, we propose to measure the imaginarity of stabilizer states via
\begin{equation}\label{eq:imag_stab}
\mathcal{I}_\mathrm{STAB}(\ket{\psi}):=1-
\underset{\boldsymbol{r}\sim P(\boldsymbol{r})}{\mathbb{E}}[\vert\bra{\psi}\sigma_{\boldsymbol{r}}\ket{\psi}\vert^2].
\end{equation}
In particular, given a stabilizer state $\ket{\psi}\in \mathrm{STAB}$ we find
\begin{align*}
\mathcal{I}_\mathrm{STAB}(\ket{\psi})&=1-2^{-n}\sum_{\sigma\in\mathcal{P}}\vert\bra{\psi}\sigma\ket{\psi^*}\vert^2\vert\bra{\psi}\sigma\ket{\psi}\vert^2\\
&=1-2^{-n}\sum_{\sigma\in G}\vert\bra{\psi}\sigma\ket{\psi^*}\vert^2=1-\vert\braket{\psi}{\psi^*}\vert^2\equiv\mathcal{I}(\ket{\psi})\, ,
\end{align*}
where in the second line we used $\sigma\ket{\psi}=\pm\ket{\psi}$ for $\ket{\psi}\in \mathrm{STAB}$ with corresponding $\sigma\in G(\ket{\psi})$, and $\bra{\psi}\sigma\ket{\psi}=0$ for $\sigma \notin G(\ket{\psi})$.

Note that non-real stabilizer states $\ket{\psi_\mathrm{STAB}^I}\notin\mathbb{R}^{2^n}$ are always maximally imaginary, i.e. $\mathcal{I}(\ket{\psi_\mathrm{STAB}^I})=1$ and $\vert\braket{\psi_\mathrm{STAB}^I}{{\psi^I_\mathrm{STAB}}^*}\vert^2=0$. This can be easily seen from the fact  that $\vert\bra{\psi_\mathrm{STAB}}\sigma\ket{\psi_\mathrm{STAB}}\vert^2\in\{0,1\}$ and $\ket{\psi^*_\mathrm{STAB}}=\bigotimes_{k=1}^{n}\sigma_z^{\alpha_k}\ket{\psi_\mathrm{STAB}}$  with some $\alpha_k\in\{0,1\}$~\cite{montanaro2017learning}.  Thus, it follows $\vert\braket{\psi_\mathrm{STAB}}{\psi^*_\mathrm{STAB}}\vert^2=\vert\bra{\psi_\mathrm{STAB}}\bigotimes_{k=1}^{n}\sigma_z^{\alpha_k}\ket{\psi_\mathrm{STAB}}\vert^2$. For any imaginary stabilizer, we necessarily have $\vert\braket{\psi_\mathrm{STAB}^I}{{\psi^I_\mathrm{STAB}}^*}\vert^2<1$, and therefore $\vert\braket{\psi_\mathrm{STAB}^I}{{\psi^I_\mathrm{STAB}}^*}\vert^2=0$.

\revC{Each repetition of the algorithm yields the outcome $\lambda_1=-1$ or $\lambda_2=1$ with a range of outcomes $\Delta \lambda=2$.
We can upper bound the number of repetitions $M$ needed to estimate $\mathcal{I}_\text{STAB}$ using Hoeffding's inequality:
\begin{equation}\label{eq:P_I}
    \text{Pr}(\vert \hat{\mathcal{I}}_\text{STAB} -\mathcal{I}_\text{STAB}\vert\ge \delta)\equiv\nu\le 2\exp\left(-\frac{2\delta^2 M}{\Delta\lambda^2}\right)\,,
\end{equation}
where $\hat{\mathcal{I}}_\text{STAB}$ is the estimation of exact value $\mathcal{I}_\text{STAB}$ from $M$ repetitions, $\delta$ is the allowed error, $\Delta\lambda=1$ is the range of possible measurement outcomes and $\nu$ is the failure probability of getting an error larger than $\delta$. 
By inverting we find
\begin{equation}\label{eq:hoeffding}
    M\le \frac{\Delta\lambda^2}{2\delta^2}\log\left(\frac{2}{\nu}\right)=\frac{1}{\delta^2}\log\left(\frac{2}{\nu}\right)\,,
\end{equation}
To achieve an error of at most $\delta$ with a probability of failure $\nu$, we require at most $M=O(\delta^{-2}\log(\nu^{-1}))$ repetitions in Algorithm~\ref{alg:imagwitness} and $4M$ copies of $\ket{\psi}$. The classical post-processing time scales as $O(n)$, which is asymptotically optimal.}
As real stabilizer states have $\mathcal{I}=0$, while imaginary stabilizer states must have $\mathcal{I}=1$, this directly implies that one can efficiently distinguish real and complex-valued stabilizer states with $t=O(1)$ copies of $\ket{\psi}$. 
\end{proof}

We also propose~\eqref{eq:imag_stab} as an efficient witness of non-imaginarity, which we define in analogy to entanglement witnesses~\cite{guhne2009entanglement}. When a non-imaginarity witness is zero, the state must be real. When the witness is greater than zero, then the state may be either real or imaginary.
~\eqref{eq:imag_stab} fulfills this condition as it is zero only when for real stabilizer states, and greater than zero for all other states~\cite{gross2021schur}. 

\section{Efficient measurement of imaginarity for unitaries}\label{sec:meas_imag_U}
Here, we show that measuring imaginarity $\mathcal{I}_\text{p}(U)$ of an $n$-qubit unitary $U$ is efficient. Imaginarity for unitaries can be defined via its Choi-state
\begin{align*}
\mathcal{I}_\text{p}(U) &\equiv  \mathcal{I}(U\otimes I \ket{\Phi})=1-\vert \bra{\Phi}(U^\dagger \otimes I)  (U^*\otimes I) \ket{\Phi}\vert^2= 1-4^{-n}\left\vert\tr(U^\dagger U^\ast)\right\vert^2\,.
\end{align*}
Let us restate the Claim to be proven:
\begin{claimS}[Imaginarity of unitaries can be measured efficiently]\label{clm:meas_unitary_sup}
Measuring $\mathcal{I}_\text{p}(U)$ within additive precision $\delta$ with a failure probability $\nu$ requires at most $t=O(\delta^{-2}\log(1/\nu))$ queries to $U$.
\end{claimS}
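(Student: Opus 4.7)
The plan is to express $1-\mathcal{I}_\text{p}(U) = 4^{-n}\vert\tr(U^\dagger U^\ast)\vert^2$ as the acceptance probability of a single-shot test that consumes only $O(1)$ queries to $U$, and then apply Hoeffding's inequality to lift this into the claimed sample-complexity bound $t=O(\delta^{-2}\log(1/\nu))$.

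First, I would use the ricochet identity $(U\otimes I_n)\ket{\Phi}=(I_n\otimes U^T)\ket{\Phi}$ to rewrite
\[(U^\dagger U^\ast \otimes I_n)\ket{\Phi}=(U^\dagger\otimes I_n)(U^\ast\otimes I_n)\ket{\Phi}=(U^\dagger\otimes U^\dagger)\ket{\Phi},\]
which yields the key identity
\[1-\mathcal{I}_\text{p}(U) = 4^{-n}\vert\tr(U^\dagger U^\ast)\vert^2 = \vert\bra{\Phi}(U^\dagger\otimes U^\dagger)\ket{\Phi}\vert^2.\]
This manoeuvre sidesteps the need to implement $U^\ast$ directly (the obstruction in the state-based setting, cf.~Corollary~\ref{cor:efficientstar}): both tensor factors are realized simply by applying $U^\dagger$ on the two disjoint halves of the maximally entangled state.

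Second, I would spell out the protocol: prepare $\ket{\Phi}$ on $2n$ qubits, apply $U^\dagger$ to the first $n$ qubits and $U^\dagger$ to the last $n$ qubits (two queries of $U^\dagger$ in total), then perform a Bell measurement on each matched pair $(i,n+i)$ and declare success iff every pair returns the outcome $\ket{\Phi^+}=\tfrac{1}{\sqrt{2}}(\ket{00}+\ket{11})$. Since $\ket{\Phi}=\bigotimes_{i=1}^{n}\ket{\Phi^+}_{i,n+i}$, this coincides with the projective measurement of $\ket{\Phi}\bra{\Phi}$, and by the identity above it accepts with probability exactly $p\equiv 1-\mathcal{I}_\text{p}(U)$. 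Each trial thus produces an i.i.d. Bernoulli variable $X_i\in\{0,1\}$ with mean $p$, consuming $O(1)$ copies of $U$.

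Third, I would apply Hoeffding's inequality to the empirical mean $\hat{p}=M^{-1}\sum_{i=1}^{M}X_i$,
\[\Pr[\vert\hat{p}-p\vert\ge\delta]\le 2\exp(-2M\delta^2),\]
and set the right-hand side to $\nu$, yielding $M=O(\delta^{-2}\log(1/\nu))$. Since each trial costs $O(1)$ queries of $U$, the total sample complexity is $t=O(\delta^{-2}\log(1/\nu))$, as claimed. There is no substantive obstacle here: the reduction is an algebraic application of the ricochet property, and the concentration step is a standard Hoeffding bound on a bounded Bernoulli estimator.
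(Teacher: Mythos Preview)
Your proposal is correct and essentially identical to the paper's proof: both prepare the maximally entangled state, apply two parallel copies of the unitary, project back onto $\ket{\Phi}$ via a Bell measurement, identify the acceptance probability with $1-\mathcal{I}_\text{p}(U)$ through the ricochet identity, and finish with Hoeffding's inequality. The only cosmetic difference is that the paper applies $U\otimes U$ while you apply $U^\dagger\otimes U^\dagger$, but since $\lvert\bra{\Phi}U\otimes U\ket{\Phi}\rvert^2=\lvert\bra{\Phi}U^\dagger\otimes U^\dagger\ket{\Phi}\rvert^2$ this is immaterial (and using $U$ rather than $U^\dagger$ avoids any quibble about the oracle access model).
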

\begin{proof}
First, we define the maximally entangled state $\ket{\Phi}=2^{-n/2}\sum_{k=0}^{2^n-1}\ket{k}\ket{k}$ over $2n$ qubits.
Then, we prepare the state $\ket{\eta}=U\otimes U\ket{\Phi}$. 
We then measure the expectation value $P_{\Phi}(U)=\vert\braket{\Phi}{\eta}\vert^2$ of the projector $\ket{\Phi}\bra{\Phi}$
\begin{align*}
P_\Phi(U)&=4^{-n}\Big\vert\sum_{q,k} \bra{q}\bra{q}U\otimes U\ket{k}\ket{k}\Big\vert^2= 4^{-n}\Big\vert\sum_{q,k} \bra{q}\bra{q}U U^T\otimes I\ket{k}\ket{k}\Big\vert^2= 4^{-n}\Big\vert\sum_{k} \bra{k}U U^T\ket{k}\Big\vert^2\\
&=4^{-n}\left\vert\tr(U U^T)\right\vert^2=4^{-n}\left\vert\tr(U^\dagger U^\ast)\right\vert^2 ,
\end{align*}
where we used the Ricochet property~\cite{khatri2019quantum}
\begin{equation}\label{eq:ricochet}
\sum_{k=0}^{2^n-1}A\otimes I_n\ket{k}\ket{k}=\sum_{k=0}^{2^n-1}I_n\otimes A^T\ket{k}\ket{k}\,
\end{equation}
for any $n$-qubit operator $A$.
Thus, we can write imaginarity as
\begin{equation}
    \mathcal{I}_\text{p}(U)=1-P_\Phi(U)\,.
\end{equation}
\revC{$P_\Phi(U)$ can be measured efficiently by sampling in the Bell basis and measuring the probability of sampling $\ket{\Phi}$. We upper bound the required number of measurements $M$ with Hoeffding's inequality~\eqref{eq:hoeffding}, where we have $\Delta\lambda=1$ as the range of possible measurement outcomes.}
To achieve an error of at most $\delta$ with a probability of failure $\nu$, we require at most $M=O(\delta^{-2}\log(\nu^{-1}))$ measurements and $2M$ instances of $U$. The classical post-processing time scales as $O(n)$, which is asymptotically optimal.
\end{proof}

\section{PRUs require imaginarity}\label{sec:imag_PRU}
\begin{thmS}\label{thm:imag_unitary_sup}
    PRUs require imaginarity $\mathcal{I}_\text{p}=1-\mathrm{negl}(n)$. 
\end{thmS}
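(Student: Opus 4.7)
The overall strategy is contrapositive plus efficient testing: Claim~\ref{clm:meas_unitary_sup} gives us an efficient estimator for $\mathcal{I}_\text{p}$, so if PRUs could fail to be maximally imaginary by a non-negligible amount, that estimator would immediately yield a polynomial-time distinguisher against the Haar measure. Concretely, the plan is to (i) compute (or lower bound) the average imaginarity of a Haar random unitary, (ii) invoke Claim~\ref{clm:meas_unitary_sup} to build an explicit distinguisher, and (iii) conclude by contradiction with the PRU security definition.

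First I would calculate $\mathbb{E}_{U\sim\mathrm{Haar}}[\mathcal{I}_\text{p}(U)]$ using the Weingarten calculus. Writing $|\tr(U^\dagger U^*)|^2=|\sum_{jk}U_{jk}^2|^2$, the second-order Weingarten formula on $U(d)$ with $d=2^n$ gives
\begin{equation}
\mathbb{E}_{U\sim\mathrm{Haar}}\!\left[|\tr(U^\dagger U^*)|^2\right]=\frac{2d}{d+1},
\end{equation}
hence $\mathbb{E}_{U\sim\mathrm{Haar}}[\mathcal{I}_\text{p}(U)]=1-\tfrac{2}{d(d+1)}=1-\mathrm{negl}(n)$. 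Equivalently, for the Bell measurement of Sec.~\ref{sec:meas_imag_U}, the single-shot accept probability $P_\Phi(U)=1-\mathcal{I}_\text{p}(U)$ has Haar-expectation $2/(d(d+1))=\mathrm{negl}(n)$.

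Next I would assume for contradiction that there is a PRU family $\{U_k\}_k$ and a noticeable function $\varepsilon(n)=1/\mathrm{poly}(n)$ such that, with noticeable probability over $k$, $\mathcal{I}_\text{p}(U_k)\le 1-\varepsilon(n)$, i.e.\ $P_\Phi(U_k)\ge\varepsilon(n)$. Define the distinguisher $\mathcal{A}$ that, on oracle access to $U$, invokes the estimator of Claim~\ref{clm:meas_unitary_sup} with precision $\delta=\varepsilon(n)/4$ and failure probability $\nu=1/10$; this uses $t=O(\varepsilon(n)^{-2}\log(1/\nu))=\mathrm{poly}(n)$ queries. $\mathcal{A}$ outputs $1$ iff the estimate $\hat{\mathcal{I}}_\text{p}$ exceeds $1-\varepsilon(n)/2$. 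By the concentration bound inside Claim~\ref{clm:meas_unitary_sup}, $\mathcal{A}$ accepts the Haar oracle with probability $\ge 9/10-\mathrm{negl}(n)$ (using $\mathbb{E}[\mathcal{I}_\text{p}]=1-\mathrm{negl}(n)$ together with Markov's inequality on $1-\mathcal{I}_\text{p}\ge 0$ to control the Haar tail), while it accepts the PRU oracle with probability $\le 1/10$ conditional on the assumed bad $k$. This produces a noticeable distinguishing advantage, contradicting the PRU property.

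The main obstacle I anticipate is the Haar side: I need the \emph{distribution} of $\mathcal{I}_\text{p}(U)$ under Haar to be tightly concentrated near $1$, not just the expectation. Two routes handle this cleanly: either a Markov-type bound applied to the non-negative quantity $1-\mathcal{I}_\text{p}(U)$ (which has expectation $2/(d(d+1))$), or a Levy-style concentration on $U(d)$ using that $\mathcal{I}_\text{p}$ has Lipschitz constant $O(1)$ with respect to the operator/Hilbert--Schmidt norm, analogous to~\eqref{eq:lipschitz}. The Markov route already suffices, since it shows $\Pr_{U\sim\mathrm{Haar}}[\mathcal{I}_\text{p}(U)\le 1-\varepsilon(n)/2]=\mathrm{negl}(n)$ whenever $\varepsilon(n)$ is noticeable, which is all the distinguisher analysis needs. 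The rest is bookkeeping with Hoeffding as in Claim~\ref{clm:meas_unitary_sup}.
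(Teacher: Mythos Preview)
Your argument is correct, but it takes a longer route than the paper's. The paper uses the \emph{single-shot} Bell test directly as the distinguisher: prepare $U\otimes U\ket{\Phi}$, project onto $\ket{\Phi}\bra{\Phi}$, and output $1$ on acceptance. The probability that this distinguisher outputs $1$ is exactly $P_\Phi(U)=1-\mathcal{I}_\text{p}(U)$, so by linearity its output probability averaged over the ensemble is $\mathbb{E}[P_\Phi(U)]$. On the Haar side this is $2/(d(d+1))=O(4^{-n})$ (your Weingarten computation, which matches Lemma~\ref{lem:haarU}); on a candidate PRU ensemble with $\mathbb{E}_k[\mathcal{I}_\text{p}(U_k)]=1-\Omega(n^{-c})$ it is $\Omega(n^{-c})$. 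The gap is non-negligible, done---two queries, no Hoeffding, no Markov.

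The ``obstacle'' you flag (Haar concentration of $\mathcal{I}_\text{p}$) is an artifact of your choice to first \emph{estimate} $\mathcal{I}_\text{p}$ with many shots and then threshold: thresholding a nonlinear statistic forces you to control the distribution, not just the mean. The paper avoids this entirely because the single-shot Bernoulli outcome is already an unbiased estimator of $P_\Phi$, and the PRU indistinguishability game compares \emph{expected} output probabilities, which are linear in the ensemble. Your Markov-on-$1-\mathcal{I}_\text{p}$ fix is valid (minor bookkeeping: tighten the Markov threshold so the $\varepsilon(n)/4$ precision window clears the decision boundary on both sides), and your approach has the merit of being reusable whenever one has an efficient estimator for a bounded functional; but here the single-shot route is both shorter and makes the concentration question moot.
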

\begin{proof}
We measure $P_{\ket{\Phi}}=1-\mathcal{I}_\text{p}(U)$ which was introduced in the proof for Claim~\ref{clm:meas_unitary_sup}. 
We can interpret the measurement as a tester: We prepare  $U\otimes U\ket{\Phi}$ and measure the projection onto $\ket{\Phi}\bra{\Phi}$. The tester accepts when we get $\ket{\Phi}$ as outcome, else the tester rejects. 

First, we consider a set of $n$-qubit unitaries $\mathcal{E}_c$ with imaginarity $\mathbb{E}_{U\leftarrow \mathcal{E}_c}[\mathcal{I}_\text{p}(U)]=1-\Omega(n^{-c})$. For the acceptance probability of the test, we find $\mathbb{E}_{U\leftarrow \mathcal{E}_c} [P_\Phi(U)]=\Omega(n^{-c})$. 

In contrast, for Haar random unitaries we find
$\mathbb{E}_{U\leftarrow \mathcal{U}(\mathcal{H})}[P_{\Phi}(U)]=2(2^{n}(2^{n}+1))^{-1}=O(4^{-n})$~(\SM~\ref{sec:imaginarity_Haar_unitaries}). 

Thus, one can efficiently distinguish Haar random unitaries from $\mathcal{E}_c$, which places a lower bound the imaginarity of PRUs as $\mathcal{I}_\text{p}=1-\mathrm{negl}(n)$.
\end{proof}

\section{Coherence testing for states}\label{sec:coherenceTesting}

We now show the limits on coherence testing for states. 
First, we recall the definition of relative entropy of coherence for pure states $\ket{\psi}$
\begin{equation}
    \mathcal{C}(\ket{\psi})=-\sum_k{\vert c_k\vert^2}\ln(\vert c_k\vert^2).
\end{equation}
Next, we restate the theorem of the main text:
\begin{thmS}[Lower bound on coherence testing for states]\label{eq:no_test_coh_sup}
Any tester $\mathcal{A}_{\mathcal{C}}$ for relative entropy of coherence according to Def.~\ref{def:prop-tester} for an $n$-qubit state $\ket{\psi}$ requires $t=\Omega(2^{\beta/2})$ for $\delta<n\ln(2)-1$ and any $\beta<\delta$.
\end{thmS}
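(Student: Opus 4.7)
The plan is to mimic the strategy used in the proof of Theorem~\ref{thm:no_test_imag_sup}, replacing the resource under test by relative entropy of coherence. Specifically, I would construct a low-coherence ensemble and a high-coherence ensemble that are information-theoretically indistinguishable with sub-exponential sample size, and then invoke the Helstrom bound to convert this into a lower bound on $t$.

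Step one is to identify the low-coherence ensemble. The $K$-subset phase states $\ket{\psi_{S,f}}$ of Eq.~(\ref{eq:phaseKstates_sup}) have amplitudes $\vert c_k\vert^{2} = 1/K$ on $K$ basis states and zero elsewhere, so
\begin{equation*}
\mathcal{C}(\ket{\psi_{S,f}}) = -\sum_{k\in S}\tfrac{1}{K}\ln\tfrac{1}{K} = \ln K,
\end{equation*}
independently of $S$ and $f$. Choosing $K = \lceil 2^{\beta}\rceil$ then yields $\mathcal{C}(\ket{\psi_{S,f}}) = \beta\ln 2 \le \beta$, so every element of $\mathcal{E}_\text{p}^{K}$ lies on the completeness side of the tester.

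Step two is to produce the high-coherence ensemble $\mathcal{E}_{\mathcal{C}\ge\delta} = \{\ket{\psi}\in S(\mathcal{H}) : \mathcal{C}(\ket{\psi})\ge\delta\}$ and show it is exponentially close in trace distance to the Haar ensemble. The average coherence of Haar random $n$-qubit pure states approaches $n\ln 2 - 1$ from below as $n\to\infty$~\cite{singh2016average}, and by measure concentration on the sphere the tail is super-exponentially thin, so for any fixed $\delta < n\ln 2 - 1$,
\begin{equation*}
\underset{\ket{\psi}\leftarrow S(\mathcal{H})}{\mathrm{Pr}}\big[\mathcal{C}(\ket{\psi}) < \delta\big] = 2^{-\Omega(2^{n})}.
\end{equation*}
Replaying the telescoping argument of Eq.~(\ref{eq:closeproofimag}) then gives
\begin{equation*}
\text{TD}\!\left(\underset{\ket{\psi}\leftarrow\mathcal{E}_{\mathcal{C}\ge\delta}}{\mathbb{E}}\!\big[\ket{\psi}\bra{\psi}^{\otimes t}\big],\,\underset{\ket{\phi}\leftarrow S(\mathcal{H})}{\mathbb{E}}\!\big[\ket{\phi}\bra{\phi}^{\otimes t}\big]\right)=2^{-\Omega(2^{n})}.
\end{equation*}

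Step three chains this with Eq.~(\ref{eq:phaseHaar}), $\text{TD}(\mathcal{E}_\text{p}^{K},S(\mathcal{H})) = O(t^{2}/K)$, via the triangle inequality to obtain
\begin{equation*}
\text{TD}\!\left(\underset{\ket{\psi}\leftarrow\mathcal{E}_\text{p}^{K}}{\mathbb{E}}\!\big[\ket{\psi}\bra{\psi}^{\otimes t}\big],\,\underset{\ket{\phi}\leftarrow\mathcal{E}_{\mathcal{C}\ge\delta}}{\mathbb{E}}\!\big[\ket{\phi}\bra{\phi}^{\otimes t}\big]\right) = O\!\left(\tfrac{t^{2}}{K}\right) + 2^{-\Omega(2^{n})}.
\end{equation*}
Distinguishing the two ensembles with constant bias requires, by the Helstrom bound of Eq.~(\ref{eq:helstrom}), that this trace distance be $\Omega(1)$; with $K = \lceil 2^{\beta}\rceil$ this forces $t = \Omega(2^{\beta/2})$, as claimed.

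The main obstacle is step two: unlike imaginarity, the relative entropy of coherence is not globally Lipschitz on the unit sphere because its gradient diverges near computational basis states, so Levy's lemma does not apply verbatim. I would circumvent this by either invoking the concentration result for Haar coherence from Ref.~\cite{singh2016average} directly, or applying Levy's lemma to a well-behaved surrogate such as the Hilbert-Schmidt diagonal weight $\sum_{k}\vert c_{k}\vert^{4}$, which is $O(1)$-Lipschitz, and then translating back to $\mathcal{C}$ via a standard inequality between the Shannon and collision entropies of the diagonal distribution. Either route supplies the near-maximal concentration needed to close the argument.
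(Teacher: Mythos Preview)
Your proposal is correct and follows essentially the same route as the paper: low-coherence ensemble via $K$-subset phase states with $\mathcal{C}=\ln K$, high-coherence ensemble via Haar concentration, triangle inequality with Eq.~(\ref{eq:phaseHaar}), and the Helstrom bound. The paper resolves the Lipschitz obstacle exactly as you suggest in your first option, by invoking the concentration inequality from Ref.~\cite{singh2016average} directly (their bound has an $n^2$ in the denominator of the exponent, but this is still doubly exponential in $n$); one small correction is that the Haar average is $n\ln 2 - 1 + \gamma_{\text{EM}}$, not $n\ln 2 - 1$, so the condition $\delta < n\ln 2 - 1$ is a relaxation rather than the actual mean.
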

\begin{proof}
First, we note that Haar random states concentrate around a large, but non-maximal value of coherence~\cite{singh2016average}.
The coherence of $n$-qubit Haar random states is on average
\begin{equation}
\mathcal{C}_\text{Haar}=\underset{\ket{\phi}\leftarrow S(\mathcal{H})}{\mathbb{E}}[\mathcal{C}(\ket{\phi})]=\sum_{k=2}^{2^n}\frac{1}{k}\,,
\end{equation}
where for $n\gg1$ we have $\mathcal{C}_\text{Haar}= \gamma_\text{EM} -1 +n\ln(2)+O(2^{-n})$, where $\gamma_\text{EM}=0.57721$ is the Euler–Mascheroni constant.
Haar random states concentrate around their average value with exponentially high probability~\cite{singh2016average}
\begin{equation}
    \mathrm{Pr}(\vert \mathcal{C}(\ket{\psi})-\mathcal{C}_\text{Haar}\vert \ge \epsilon)\le 2\exp\left(-\frac{2^n\epsilon^2}{36\pi^3\ln(2)(n\ln 2)^2} \right)\,.
\end{equation}
\revC{Dropping the absolute value and defining $\epsilon=\mathcal{C}_\text{Haar}-\delta$, $\delta<\mathcal{C}_\text{Haar}$, we find that
\begin{equation}
    \mathrm{Pr}( \mathcal{C}(\ket{\psi}) \leq \delta)\le 2\exp\left(-\frac{2^n(\mathcal{C}_\text{Haar}-\delta)^2}{36\pi^3\ln(2)(n\ln 2)^2} \right)\, .
\end{equation}
When we choose $\delta<\mathcal{C}_\text{Haar}-n^2 2^{-n/2}\sqrt{36\pi^3\ln(2)^3}$, we find exponential concentration, i.e. most states have a coherence $\mathcal{C}>\delta$
\begin{equation}\label{eq:coherence_concentrate}
    \mathrm{Pr}(\mathcal{C}(\ket{\psi}) < \delta)\le O(2^{-n^2})\,.
\end{equation}
We relax this condition to $\delta< n\ln(2)-1$, which is valid for any $n\geq 30$.}

We now consider the ensemble $\mathcal{E}_{\mathcal{C}\ge \delta}$ of all states with a relative entropy of coherence greater $\delta$.
Given the concentration~\eqref{eq:coherence_concentrate}, the ensemble $\mathcal{E}_{\mathcal{C}\ge \delta}$ is exponentially close to the ensemble $S(\mathcal{H})$ of random states for any $t$
\begin{align}
\text{TD}\left(\underset{\ket{\psi}\leftarrow\mathcal{E}_{\mathcal{C}\ge \delta}}{\mathbb{E}}\big[\ket{\psi}\bra{\psi}^{\otimes t}\big],\underset{\ket{\phi}\leftarrow S(\mathcal{H})}{\mathbb{E}}\big[\ket{\phi}\bra{\phi}^{\otimes t}\big]\right)=O(2^{-n^2})\,,\label{eq:TDHaarCoh}
\end{align}
where the proof follows the one for imaginarity given in~\eqref{eq:closeproofimag}.
\revC{Note that the distance between the states of high coherence and Haar random states is independent of the number of copies $t$. This is because of the concentration~\eqref{eq:coherence_concentrate}, i.e the two sets of states are very close to each other, and the size of the sets differ by only a small fraction. }

Further, we recall that Haar random states are close to $K$-subset phase states~\eqref{eq:phaseHaar}.
We recall the triangle inequality $\text{TD}(\rho,\sigma)\le \text{TD}(\rho,\theta)+\text{TD}(\theta,\sigma)$ and apply it on~\eqref{eq:TDHaarCoh} and~\eqref{eq:phaseHaar} to get for any $\delta \le n\ln(2)-1$
\begin{align}
\text{TD}\left(\underset{\ket{\psi}\leftarrow\mathcal{E}_\text{p}^K}{\mathbb{E}}\big[\ket{\psi}\bra{\psi}^{\otimes t}\big],\underset{\ket{\phi}\leftarrow\mathcal{E}_{\mathcal{C}\ge \delta}}{\mathbb{E}}\big[\ket{\phi}\bra{\phi}^{\otimes t}\big]\right)\leq O\left(\frac{t^2}{K}\right).\label{eq:TDphaseCoh}
\end{align}
With~\eqref{eq:TDphaseCoh} and Helstrom bound~\eqref{eq:helstrom}, we require $t=\Omega(\sqrt{K})$ copies to distinguish $K$-subset phase states from states drawn from the ensemble of states with $\mathcal{C}\ge\delta$ where $\delta >n\ln(n)-1$.

Note that for any $K$-subset phase state $\ket{\psi}\in\mathcal{E}_{\text{p}}^K$, the coherence is given by $\mathcal{C}(\ket{\psi})=\ln(K)$.

The complexity of distinguishing $K$-subset phase state (which have coherence $\mathcal{C}=\log(K)$) from states with coherence $\mathcal{C}>\delta$ gives us directly  a lower bound on coherence testing.
In particular, to test whether an  arbitrary state has coherence $\mathcal{C}=\beta$ or coherence $\mathcal{C}>\delta$ with $\beta<\delta$, one requires $t=\Omega(2^{\beta/2})$ copies.

\end{proof}

Next, we show that one can efficiently measure coherence as defined by the Hilbert-Schmidt norm~\cite{baumgratz2014quantifying}
\begin{equation}
    \mathcal{C}_2(\ket{\psi})=1-\sum_k{\vert c_k\vert^4}\,,
\end{equation}
where $\ket{\psi}=\sum_k c_k \ket{k}$ with amplitudes $c_k$~\cite{baumgratz2014quantifying}. 
We have $0\le\mathcal{C}_2\le 1-2^{-n}$. Further,  $\mathcal{C}_2(\ket{\psi})=0$ if and only if $\mathcal{C}(\ket{\psi})=0$.

For this measure of coherence, we have an efficient measurement protocol:
\begin{propositionS}[Efficient measurement of coherence]\label{prop:test_coh_low_sup}
Measuring $\mathcal{C}_2(\ket{\psi})$ within additive precision $\delta$ with a failure probability $\nu$ requires at most $t=O(\delta^{-2}\log(1/\nu))$ copies.
\end{propositionS}
\begin{proof}
$\mathcal{C}_2$ is measured efficiently with the projector $\Pi_\mathcal{C}= \bigotimes_{k=1}^n(\ket{00}\bra{00}+\ket{11}\bra{11})$
\begin{equation}\label{eq:coh_tester}
    \mathcal{C}_2(\ket{\psi})=1-\bra{\psi}^{\otimes 2}\Pi_\mathcal{C} \ket{\psi}^{\otimes 2}\,.
\end{equation}
The projector $\Pi_{\mathcal{C}}$ has possible eigenvalues $\lambda=0$ or $\lambda=1$, with eigenvalue range $\Delta \lambda=1$.
Using Hoeffding's inequality~\eqref{eq:hoeffding}, we can upper bound the number of copies $M=O(1)$ for reaching a fixed precision. 
\end{proof}

Now, as stated in the main text, we prove that PRSs require sufficient relative entropy of coherence:
\begin{claimS}\label{clm:pseudo_coh_sup}
PRSs require $\mathcal{C}=\omega(\log(n))$ coherence.
\end{claimS}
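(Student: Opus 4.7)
The plan is to establish the contrapositive: any efficiently preparable state with $\mathcal{C}(\ket{\psi}) = O(\log n)$ can be efficiently distinguished from a Haar random state, which would contradict the PRS property. The key ingredients are the efficient measurement of the $\ell_2$-coherence $\mathcal{C}_2$ from Proposition~\ref{prop:test_coh_low_sup} and a Jensen-type inequality linking $\mathcal{C}$ to $\mathcal{C}_2$, together with the observation that Haar random states concentrate tightly around $\mathcal{C}_2 \approx 1$.

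First I would compute $\mathcal{C}_2$ for Haar random states. The standard Weingarten-type integral $\mathbb{E}_{\ket{\psi}\leftarrow S(\mathcal{H})}[\sum_k |c_k|^4] = 2/(2^n+1)$ immediately yields $\mathbb{E}[\mathcal{C}_2(\ket{\psi})] = 1 - O(2^{-n})$. A Levy's lemma argument analogous to the imaginarity calculation leading to~\eqref{eq:normI}, with the map $\ket{\psi}\mapsto \sum_k |c_k|^4$ seen to be $O(1)$-Lipschitz on the unit sphere, then gives the concentration bound $\Pr\bigl[\mathcal{C}_2(\ket{\psi}) < 1 - O(2^{-n})\bigr] = 2^{-\Omega(2^n)}$, so Haar random states are statistically indistinguishable from an ensemble with $\mathcal{C}_2 \geq 1 - O(2^{-n})$.

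Next I would apply Jensen's inequality to the concave function $-\ln$ with probabilities $p_k = |c_k|^2$, obtaining $\mathcal{C}(\ket{\psi}) = -\sum_k p_k \ln p_k \geq -\ln \sum_k p_k^2 = -\ln(1 - \mathcal{C}_2(\ket{\psi}))$, or equivalently $\mathcal{C}_2(\ket{\psi}) \leq 1 - e^{-\mathcal{C}(\ket{\psi})}$. Hence any state with $\mathcal{C}(\ket{\psi}) \leq c\log n$ satisfies $\mathcal{C}_2(\ket{\psi}) \leq 1 - n^{-c}$, an inverse-polynomial gap from the Haar value.

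Finally, I would invoke Proposition~\ref{prop:test_coh_low_sup} to estimate $\mathcal{C}_2$ to additive precision $\Theta(n^{-c})$ with failure probability $\mathrm{negl}(n)$ using $t = \mathrm{poly}(n)$ copies; this produces an efficient distinguisher between Haar random states and any ensemble of states with $\mathcal{C} = O(\log n)$, contradicting the definition of a PRS and forcing $\mathcal{C} = \omega(\log n)$. The only mildly technical step is the Lipschitz and concentration estimate on $\mathcal{C}_2$ for Haar random states, but this is a routine variant of the imaginarity computation already used in the proof of Theorem~\ref{thm:no_test_imag_sup}; everything else reduces to Jensen's inequality and a direct application of the tester from Proposition~\ref{prop:test_coh_low_sup}.
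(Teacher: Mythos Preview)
Your proposal is correct and follows essentially the same route as the paper: compute $\mathbb{E}_{\text{Haar}}[\mathcal{C}_2]=1-2/(2^n+1)$, use Jensen's inequality $\mathcal{C}\ge -\ln(1-\mathcal{C}_2)$ to convert $\mathcal{C}=O(\log n)$ into $\mathcal{C}_2\le 1-\Omega(1/\mathrm{poly}(n))$, and then invoke the efficient $\mathcal{C}_2$ tester from Proposition~\ref{prop:test_coh_low_sup} to obtain a distinguisher. The one superfluous ingredient is your Levy-lemma concentration step: the paper avoids it by phrasing the distinguisher as the single-shot projector $\Pi_{\mathcal C}$ (then amplifying by repetition), so only the \emph{expected} acceptance probability over the Haar ensemble, namely $O(2^{-n})$, is needed, and no pointwise concentration bound on $\mathcal{C}_2$ is required.
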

\begin{proof}
We show that one can efficiently distinguish Haar random states and states with $\mathcal{C}=O(\log(n))$.

We now use the following tester: We measure the projector $\Pi_\mathcal{C}$ on the state $\ket{\psi}^{\otimes 2}$ from~\eqref{eq:coh_tester}. We say the tester accepts when the projector $\Pi_\mathcal{C}$ is successful, else we say the tester rejects the state.

Lets regard an ensemble $\mathcal{E}_c$ of states which have on average $\mathbb{E}_{\ket{\psi}\leftarrow\mathcal{E}_c}[\mathcal{C}_2(\ket{\psi})]=1-\Omega(n^{-c})$.
We have acceptance probability $\mathbb{E}_{\ket{\psi}\leftarrow\mathcal{E}_c}[\bra{\psi}^{\otimes 2}\Pi_\mathcal{C} \ket{\psi}^{\otimes 2}]=\Omega(n^{-c})$.

For Haar random states, we can calculate the average coherence by standard Haar integration
\begin{align*}
\mathbb{E}_{\ket{\phi}\leftarrow S(\mathcal{H})}[\mathcal{C}_2(\ket{\phi})] &=\int_U \mathcal{C}_2(U\ket{0})\text{d}U= 
1-\sum_k\int_U  \vert \bra{k}U\ket{0}\vert^4\text{d}U=1-2^{n}\int_U \vert \bra{0}U\ket{0}\vert^4\text{d}U\\
&=1-2^{n}\int_U  \tr(U^\dagger \ket{0}\bra{0}U\ket{0}\bra{0}U^\dagger \ket{0}\bra{0}U\ket{0}\bra{0})\text{d}U=
1-\frac{2}{2^{n}+1}
\end{align*}
where at the third equality we used the left invariance of Haar random unitaries and in the last step a standard identity for Haar random integrals.
We thus find $\mathbb{E}_{\ket{\phi}\leftarrow S(\mathcal{H})}[\mathcal{C}_2(\ket{\phi})]=1-O(2^{-n})$ and thus $\mathbb{E}_{\ket{\psi}\leftarrow S(\mathcal{H})}[\bra{\psi}^{\otimes 2}\Pi_\mathcal{C} \ket{\psi}^{\otimes 2}]=O(2^{-n})$. 

By repeating the testing a polynomial number of times, we can distinguish Haar random states and states drawn from $\mathcal{E}_c$ with $\Omega(1)$ probability.

As we can efficiently distinguish Haar random states and states with $\mathcal{C}_2=1-\Omega(n^{-c})$, PRSs must have $\mathcal{C}_2=1-2^{-\omega(\log(n))}$. 
From Jensen's inequality, we have $\mathcal{C}(\ket{\psi})\ge -\ln(1-\mathcal{C}_2(\ket{\psi}))$. Thus, PRSs have $\mathcal{C}(\ket{\psi})=\omega(\log(n))$.

\end{proof}

\section{PRUs and coherence power}\label{sec:coh_unitary}
We now show that PRUs require a sufficient amount of coherence. First, we restate the theorem from the main text:
\begin{thmS}\label{thm:pseudo_coh_unitary_sup}
PRUs require $\mathcal{C}_\text{p}=\omega(\log(n))$ relative entropy of coherence power.
\end{thmS}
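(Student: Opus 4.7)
The plan is to mirror the strategy used for Claim~\ref{clm:pseudo_coh_sup} on states, lifted to the unitary setting via the standard trick of feeding $U$ a uniformly random computational-basis input. First I would introduce the Hilbert-Schmidt version of the coherence power,
\begin{equation*}
\mathcal{C}_{2,\text{p}}(U)=1-\sum_{k,\ell}2^{-n}\vert\bra{k}U\ket{\ell}\vert^{4},
\end{equation*}
which is simply the average over $\ell$ of $\mathcal{C}_2(U\ket{\ell})$. The point is that $\mathcal{C}_{2,\text{p}}(U)$ admits an efficient estimator: classically sample $\ell\in\{0,1\}^n$ uniformly, prepare two copies of $U\ket{\ell}$, and measure the projector $\Pi_{\mathcal{C}}=\bigotimes_{j=1}^{n}(\ket{00}\bra{00}+\ket{11}\bra{11})$ from~\eqref{eq:coh_tester}. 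The acceptance probability equals $\mathbb{E}_\ell[\bra{\ell}^{\otimes 2}U^{\dagger\otimes 2}\Pi_{\mathcal{C}}U^{\otimes 2}\ket{\ell}^{\otimes 2}]=1-\mathcal{C}_{2,\text{p}}(U)$, and by Hoeffding's inequality~\eqref{eq:hoeffding} a constant number of $U$-queries suffice to estimate it to fixed additive precision (this is essentially the scheme of Ref.~\cite{zanardi2017coherence}).

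Next I would compute the Haar average. Using left-invariance and the standard identity $\int_U\vert\bra{0}U\ket{0}\vert^{4}\mathrm{d}U = 2/(2^n(2^n+1))$, one finds $\mathbb{E}_{U\leftarrow\mu}[\mathcal{C}_{2,\text{p}}(U)]=1-O(2^{-n})$, so the tester accepts Haar random unitaries with probability only $O(2^{-n})$. Consequently, any ensemble $\mathcal{G}_c$ of unitaries with $\mathbb{E}_{U\leftarrow\mathcal{G}_c}[\mathcal{C}_{2,\text{p}}(U)]=1-\Omega(n^{-c})$ is efficiently distinguishable from Haar, and hence cannot be a PRU. This forces $\mathcal{C}_{2,\text{p}}(U)=1-2^{-\omega(\log n)}$ for any PRU.

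Finally, I would transfer the bound to the relative entropy of coherence power using an entropic inequality. Define $p_{k\ell}=2^{-n}\vert\bra{k}U\ket{\ell}\vert^{2}$, which is a probability distribution on $\{0,1\}^{2n}$. A short calculation gives $\mathcal{C}_{\text{p}}(U)=H(p)-n\ln 2$ and $\sum_{k\ell}p_{k\ell}^{2}=2^{-n}(1-\mathcal{C}_{2,\text{p}}(U))$. Since the Shannon entropy dominates the collision entropy $H_{2}(p)=-\ln\sum_{k\ell}p_{k\ell}^{2}$, we obtain
\begin{equation*}
\mathcal{C}_{\text{p}}(U)\ge H_{2}(p)-n\ln 2 = -\ln\bigl(1-\mathcal{C}_{2,\text{p}}(U)\bigr),
\end{equation*}
which combined with $\mathcal{C}_{2,\text{p}}(U)=1-2^{-\omega(\log n)}$ yields $\mathcal{C}_{\text{p}}(U)=\omega(\log n)$, as claimed.

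The main obstacle I expect is only bookkeeping: one must be careful that the efficient estimator really returns $\mathcal{C}_{2,\text{p}}$ (not some other quantity built from the Choi state), and that the Haar moment computation gives the $O(2^{-n})$ separation needed for distinguishability by a polynomial-query tester. The Jensen/Rényi step and the concluding noticeable-gap argument are then routine analogs of the state-side proof.
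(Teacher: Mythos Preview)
Your proposal is correct and follows essentially the same route as the paper: estimate $2^{-n}\sum_{k,\ell}\vert\bra{k}U\ket{\ell}\vert^{4}$ efficiently, show it is $O(2^{-n})$ for Haar but $\Omega(n^{-c})$ for low-coherence ensembles, and then pass to $\mathcal{C}_{\text p}$ via the R\'enyi/Jensen inequality. The only cosmetic difference is the estimator---the paper uses the dephased Choi state plus a SWAP test (as in Ref.~\cite{zanardi2017coherence}) rather than your random-input-plus-$\Pi_{\mathcal C}$ scheme---but both return the same quantity and the remainder of the argument is identical.
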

\begin{proof}
Let us define the following efficient testing algorithm for coherence power of an $n$-qubit unitary $U$: We efficiently prepare the $2n$-qubit state 
\begin{equation}
    \chi(U)=(\mathcal{D}_n U \mathcal{D}_n U)^{\otimes 2}(\ket{\Phi}\bra{\Phi}),
\end{equation} 
where $\mathcal{D}_n(\rho)=\sum_{k=0}^{2^{n}-1}\ket{k}\bra{k}\bra{k}\rho\ket{k}$ is the $n$-qubit coherence-destroying channel. $\mathcal{D}_n$ can be efficiently implemented by measuring the state $\rho$ in the computational basis.
Next, we perform the SWAP test between the bipartition of $\chi(U)$. The SWAP test is shown in Fig.~\ref{fig:SWAPtest}. We say the SWAP test accepts when the measurement returns $0$ which occurs with probability $\text{Pr}_\text{SWAP}(U)=(1+\text{tr}(S\chi(U)))/2$ where $S=\sum_{k\ell}=\ket{k\ell}\bra{\ell k}$ is the SWAP operator. 
This measurement gives us the coherence power via~\cite{zanardi2017coherence}
\begin{equation}
    \text{tr}(S\chi(U))=2^{-n}\sum_{k,\ell} \vert \bra{k}U\ket{\ell} \vert ^4\,.
\end{equation}
Let us now consider a set of unitaries $\mathcal{E}_c$ with on average $\mathbb{E}_{U \leftarrow \mathcal{E}_c}[2^{-n}\sum_{k,\ell} \vert \bra{k}U\ket{\ell} \vert ^4]=\Omega(n^{-c})$ with $c>0$.  The SWAP test accepts with probability $\mathbb{E}_{U \leftarrow \mathcal{E}_c}[\text{Pr}_\text{SWAP}(U)]=\frac{1}{2}+\Omega(n^{-c})$.
For Haar random states, we have $\mathbb{E}_{U\leftarrow \mathcal{U}(\mathcal{H})}[2^{-n}\sum_{k,\ell} \vert \bra{k}U\ket{\ell} \vert ^4]=2^{-n+1}(1+2^{-n})^{-1}$~\cite{zanardi2017coherence} and thus
$\mathbb{E}_{U\leftarrow \mathcal{U}(\mathcal{H})}[\text{Pr}_\text{SWAP}(U)]=\frac{1}{2}+O(2^{-n})$. By repeating the test a polynomial number of times, we can distinguish Haar random unitaries and states drawn from $\mathcal{E}_c$ with $\Omega(1)$ probability.

Thus, an ensemble of PRUs $\mathcal{E}_\text{PR}$ must have $\mathbb{E}_{U \leftarrow {\mathcal{E}_\text{PR}}}[2^{-n}\sum_{k,\ell} \vert \bra{k}U\ket{\ell} \vert ^4]=2^{-\omega(\log(n))}$. 
From Jensen's inequality, we have $\mathbb{E}_{U \leftarrow {\mathcal{E}_\text{PR}}}[-\ln(2^{-n}\sum_{k,\ell} \vert \bra{k}U\ket{\ell} \vert ^4)]=\omega(\log(n))$.

Now, note that we have $2^{-n}\sum_{k,\ell} \vert \bra{k}U\ket{\ell} \vert ^2=1$, where $\xi_{k\ell}(U)=2^{-n}\vert \bra{k}U\ket{\ell} \vert ^2$ can be thought of as a probability distribution.
From the well-known inequalities of R\'enyi entropies, we have
\begin{equation}\label{eq:ineuq_ren}
    \ln(2^{-n}\text{card}(U))\ge -2^{-n}\sum_{k,\ell} \vert \bra{k}U\ket{\ell} \vert ^2\ln(\vert \bra{k}U\ket{\ell} \vert ^2) \ge -\ln(2^{-n}\sum_{k,\ell} \vert \bra{k}U\ket{\ell} \vert ^4)\,.
\end{equation}
Finally, we use $-2^{-n}\sum_{k,\ell} \vert \bra{k}U\ket{\ell} \vert ^2\ln(\vert \bra{k}U\ket{\ell} \vert ^2)\ge-\ln(2^{-n}\sum_{k,\ell} \vert \bra{k}U\ket{\ell} \vert ^4)$
to get $\mathbb{E}_{U \leftarrow {\mathcal{E}_\text{PR}}}[\mathcal{C}(U)]=\omega(\log(n))$.
\end{proof}

\section{Imaginarity for rebits}\label{sec:rebits}
While rebits are by definition real, we can define imaginarity such that it matches the definition of qubits.

We have the $n$-qubit qubit state
\begin{equation}\ket{\psi}=\sum_{k=0}^{2^n-1}(a_k+ib_k)\ket{k}
\end{equation}
and the corresponding $(n+1)$-rebit state \begin{equation}
\ket{\psi_\text{R}}=\sum_{k=0}^{2^n-1}( a_k\ket{k}\ket{R}+b_k\ket{k}\ket{I})
\end{equation}
with coefficients $a_k,b_k\in\mathbb{R}$. For normalization, we have 
\begin{equation}
\vert \braket{\psi}{\psi}\vert^2=\vert \braket{\psi_\text{R}}{\psi_\text{R}}\vert^2=\sum_k ( a_k^2+ b_k^2)=1\,.
    \end{equation}
First, we note 
\begin{equation}
    \vert \braket{\psi}{\psi^\ast}\vert^2=\left(\sum_k \left(a_k^2- b_k^2\right)\right)^2+4 \left(\sum_k a_k b_k\right)^2=\left(\sum_k a_k^2\right)^2+\left(\sum_k b_k^2\right)^2-2\sum_k a_k^2\sum_k b_k^2+4 \left(\sum_k a_k b_k\right)^2 .
\end{equation}
Then, we have from normalization 
\begin{equation}
1=\sum_k \left(a_k^2+ b_k^2\right)=\left(\sum_k a_k^2+\sum_k b_k^2\right)^2=\left(\sum_k a_k^2\right)^2+\left(\sum_k b_k^2\right)^2+2\sum_k a_k^2\sum_k b_k^2.
\end{equation}

Next, a straightforward calculation yields
\begin{equation}
\tr_{1:n}\left(\ketbra{\psi_\text{R}}{\psi_\text{R}}\right)=\sum_k a_k^2\ket{R}\bra{R}+b_k^2\ket{I}\bra{I}+a_kb_k\ket{R}\bra{I}+a_kb_k\ket{I}\bra{R}\,,
\end{equation}
where $\tr_{1:n}(\cdot)$ is the partial trace over all rebits except the (${n+1}$)-th rebit.
Then, we have 
\begin{equation}
\tr\left(\tr_{1:n}\left(\ketbra{\psi_\text{R}}{\psi_\text{R}}\right)^2\right)=\left(\sum_k a_k^2\right)^2+\left(\sum_k b_k^2\right)^2+2\left(\sum_k a_k b_k\right)^2.
\end{equation}
Now, we can easily see
\begin{equation}\tr\left(\tr_{1:n}\left(\ketbra{\psi_\text{R}}{\psi_\text{R}}\right)^2\right)=\frac{1}{2}(\vert\braket{\psi^*}{\psi}\vert^2+1)\,.
\end{equation}
From this, we immediately conclude
\begin{equation}\label{eq:rebit_imag_sup}
\mathcal{I}(\ket{\psi})\equiv 2(1-\tr\left(\tr_{1:n}(\ketbra{\psi_\text{R}}{\psi_\text{R}})^2)\right).
\end{equation}
We can measure imaginarity via the rebits efficiently by measuring the purity of the flag rebit. Purity of a single rebit can be efficiently measured, e.g. via tomography on the flag rebit which scales as $O(1)$.

\section{Testing for imaginarity, entanglement, magic, coherence and purity}\label{sec:comparison}

Besides imaginarity and coherence, entanglement and magic (also known as nonstabilizerness) are important resources for quantum information processing~\cite{chitambar2019quantum}.
Entanglement, characterized by nonclassical correlations between spatially separated quantum systems, has paved the way for advancements in quantum communication \cite{cacciapuoti2020when, xing2023fundamental}, quantum cryptography \cite{ekert1991quantum}, and quantum computation \cite{briegel2009measurement,biamonte2020entanglement}. On the other hand, magic, which measures the extent of deviation from states producible by Clifford circuits or from transformations effected by such circuits, can be used to perform tasks that are computationally challenging for classical computers \cite{jozsa2014classical,koh2015further,bouland2018complexity,haug2023efficient}. Various measures of magic have been proposed in recent years~\cite{veitch2014resource, bravyi2019simulation, bu2019efficient, seddon2021quantifying,leone2021renyi,bu2022statistical, haug2022scalable,bu2023stabilizer,haug2023efficient,bu2023magic,bu2023quantum}.

Testing for the aforementioned resources is an important task for the certification of quantum information processors. We now discuss the cost of tester $\mathcal{A}_Q$ for different  properties $Q$, where we assume $\beta=0$ 
for the tester according to Def.~\ref{def:prop-tester}. 
Here, we report on the scaling of the best known protocol for testing a representative measure of each property for the case of qubits.
We summarize our findings in Table~\ref{tab:comparison}.

\begin{enumerate}
\item Imaginarity of states cannot be tested efficiently as we have shown in this work, specifically in \SM~\ref{sec:proof_imag}.
In contrast, imaginarity of unitaries can be measured efficiently as shown in Ref.~\cite{huang2021demonstrating} or \SM~\ref{sec:meas_imag_U}.

\item For entanglement, one  can efficiently test for bipartite entanglement of pure states~\cite{ekert2002direct,bendersky2009general}. For unitaries, one can efficiently test whether the unitary is a product unitary~\cite{harrow2013testing}. 

\item The magic of states can be efficiently tested both for states~\cite{gross2021schur,haug2022scalable,haug2023efficient} and unitaries~\cite{low2009learning,wang2011property,gross2021schur}.

\item For states, one can test coherence efficiently using the protocol given in \SM~\ref{sec:coherenceTesting}.
The protocol for the coherence power of unitaries is given in~\cite{zanardi2017coherence} or \SM~\ref{sec:coh_unitary}.

\item Purity of states can be tested via the SWAP test~\cite{barenco1997stabilization}, where the protocol is shown in \SM~\ref{sec:noise}. 
Analogously, for channels one can test how close the channel is to an isometry.
For a channel $\Gamma$, one can efficiently test the property $g(\Gamma)=\text{tr}((\Gamma \otimes I_n (\ket{\Phi})^2)$, where one has $g=1$ only when the channel is an isometry, and else $g<1$. One can efficiently measure $g$ by applying the SWAP test on the Choi state.
\end{enumerate}

All mentioned properties can be efficiently measured via protocols involving Bell measurements, with the exception being the imaginarity of states.

\section{Imaginarity of Haar-random states and unitaries}

Here, we calculate the average imaginarity of both Haar-random states and unitaries \revC{using Weingarten calculus~\cite{collins2006integration}.}

\subsection{Imaginarity of Haar-random states}
\label{sec:imaginarity_Haar_states}

\begin{lemS}
Let $d\in\mathbb{Z}^{+}$ be a positive integer and let $\eta$ denote the Haar measure on $U(\mathbb{C}^{d})$. For $U \in U(\mathbb{C}^{d})$, let $\ket {\psi_U}=U\ket 0$ denote the state obtained by acting the unitary $U$ on the computational basis state $\ket 0$.
Then, 
\begin{align}
\int \d\eta(U)\left[1-\left|\langle\psi_{U}\vert\psi_{U}^{\star}\rangle\right|^{2}\right]=1-\frac{2}{d+1}.
\label{eq:random_imaginarity}
\end{align}
\end{lemS}
\begin{proof}
By definition,
\begin{align*}
\vert\psi_{U}\rangle & =U\vert0\rangle.
\end{align*} 
By transposing or complex-conjugating this state, we get
\begin{align*}
\vert\psi_{U}^{\star}\rangle &= U^{\star}\vert0\rangle, \\
\langle\psi_{U}\vert 
&= \langle0\vert U^{\dagger},\\
\langle\psi_{U}^{\star}\vert 
&= 
\langle0\vert U^{T}.
\end{align*}

These expressions allow us to expand the left-hand side of \eqref{eq:random_imaginarity} as follows:
\begin{align}
\int \d\eta(U)\left[1-\left|\langle\psi_{U}\vert\psi_{U}^{\star}\rangle\right|^{2}\right] & =1-\int \d\eta(U) \ \langle\psi_{U}\vert\psi_{U}^{\star}\rangle\langle\psi_{U}^{\star}\vert\psi_{U}\rangle\nonumber\\
 &= 1-\int \d\eta(U) \ \sum_{a=0}^{d-1}U^{\dagger}_{0a}U^{\star}_{a0}\sum_{b=0}^{d-1}U^{T}_{0b}U_{b0}\nonumber\\
 &= 1-\sum_{a,b=0}^{d-1}\left[\int \d \eta(U) \ U_{b0}U_{b0}U^{\star}_{a0}U^{\star}_{a0}\right].
\label{eq:haar_integral_before_formula}
\end{align}

The integral in \eqref{eq:haar_integral_before_formula} is an integral taken with respect to the Haar measure of a monomial of rank 4; for such an integral, nice closed-form expressions are well-known (for example, see~\cite[Eq.~10]{puchala2017symbolic} or~\cite{collins2006integration}): for $d\geq2$,
\begin{align}
    \int \d\eta(U) \  U_{i_{1}j_{1}}U_{i_{2}j_{2}}U^*_{i_{1}^{\prime}j_{1}^{\prime}}U^*_{i_{2}^{\prime}j_{2}^{\prime}} & =\frac{1}{d^{2}-1}\left(\delta_{i_{1}i_{1}^{\prime}}\delta_{i_{2}i_{2}^{\prime}}\delta_{j_{1}j_{1}^{\prime}}\delta_{j_{2}j_{2}^{\prime}}+\delta_{i_{1}i_{2}^{\prime}}\delta_{i_{2}i_{1}^{\prime}}\delta_{j_{1}j_{2}^{\prime}}\delta_{j_{2}j_{1}^{\prime}}\right)\nonumber\\
 &\quad -\frac{1}{d\left(d^{2}-1\right)}\left(\delta_{i_{1}i_{1}^{\prime}}\delta_{i_{2}i_{2}^{\prime}}\delta_{j_{1}j_{2}^{\prime}}\delta_{j_{2}j_{1}^{\prime}}+\delta_{i_{1}i_{2}^{\prime}}\delta_{i_{2}i_{1}^{\prime}}\delta_{j_{1}j_{1}^{\prime}}\delta_{j_{2}j_{2}^{\prime}}\right).\label{eq:haar_rank_4}
\end{align}

Hence, for $d\geq2$,
\begin{align*}
\int \d \eta(U) U_{b0}U_{b0}U^{\star}_{a0}U^{\star}_{a0} & =\frac{1}{d^{2}-1}\left(\delta_{ba}\delta_{ba}\delta_{00}\delta_{00}+\delta_{ba}\delta_{ba}\delta_{00}\delta_{00}\right) -\frac{1}{d\left(d^{2}-1\right)}\left(\delta_{ba}\delta_{ba}\delta_{00}\delta_{00}+\delta_{ba}\delta_{ba}\delta_{00}\delta_{00}\right)\\
 & =\frac{2}{d\left(d+1\right)}\delta_{ab},
\end{align*}
and so substituting this expression into \eqref{eq:haar_integral_before_formula} gives us
\begin{align}
\int \d\eta(U)\left[1-\left|\langle\psi_{U}\vert\psi_{U}^{\star}\rangle\right|^{2}\right] & =1-\sum_{a,b=0}^{d-1}\frac{2}{d(d+1)}\delta_{ab} =1-\frac{2}{d\left(d+1\right)}d =1-\frac{2}{d+1}.
\end{align}

Next, we show that this expression also holds for $d=1$:
\begin{align*}
\int \d\eta(U) \ \left[1-\left|\langle\psi_{U}\vert\psi_{U}^{\star}\rangle\right|^{2}\right] & =1-\int \d\eta(U) U_{00}U_{b0}U^{\star}_{a0}U^{\star}_{a0}\\
 & =1-\int \d\eta(U) \ \left|U_{00}\right|^{2} =1-\int \d\eta(U) =1-1=0=1-\frac{2}{d+1}.
\end{align*}

This completes the proof of \eqref{eq:random_imaginarity} for all $d\in\mathbb{Z}^{+}$.
\end{proof}

\subsection{Imaginarity of Haar-random unitaries}
\label{sec:imaginarity_Haar_unitaries}

\begin{lemS} \label{lem:haarU}
Let $\mathcal{I}:\mathcal{U}\left(\mathbb{C}^{d}\right)\rightarrow\mathbb{R}$,
$d \in \mathbb{Z}_{\geq 2}$ and $\mathcal{I}\left(U\right)=1-\frac{1}{d^{2}} \vert \tr\left(U^{T}U\right)\vert^{2}$,
then
\[
\int \d\eta(U) \ \mathcal{I}\left(U\right)=1-\frac{2}{d\left(d+1\right)}.
\]
\end{lemS}

\begin{proof}
    \begin{align*}
\int \d\eta(U) \ \mathcal{I}\left(U\right) & =\int \d\eta(U) \left(\frac{1}{d^{2}}\left|\tr\left(U^{T}U\right)\right|^{2}\right)\\
 & =1-\frac{1}{d^{2}}\int \d\eta(U) \ \left|\tr\left(U^{T}U\right)\right|^{2}\\
 & =1-\frac{1}{d^{2}}\int \d\eta(U) \ \tr\left(U^{T}U\right){\tr\left(U^{T}U\right)}^*\\
 & =1-\frac{1}{d^{2}}\int \d\eta(U) \ \tr\left(U^{T}U\right)\tr\left(U^{\dagger}{U}^*\right)\\
 & =1-\frac{1}{d^{2}}\int \d\eta(U) \ \sum_{w\in\mathbb{Z}_{d}}\langle w\vert U^{T}U\vert w\rangle\sum_{y\in\mathbb{Z}_{d}}\langle y\vert U^{\dagger}U^*\vert y\rangle\\
 & =1-\frac{1}{d^{2}}\sum_{w,x,y,z\in\mathbb{Z}_{d}}\int \d\eta(U) \ \langle w\vert U^{T}\vert x\rangle\langle x\vert U\vert w\rangle\langle y\vert U^{\dagger}\vert z\rangle\langle z\vert U^*\vert y\rangle\\
 & =1-\frac{1}{d^{2}}\sum_{w,x,y,z\in\mathbb{Z}_{d}}\int \d \eta (U)\ U_{xw}U_{xw}U^*_{zy}{U}_{zy}^* \\
 & =1-\frac{1}{d^{2}}\sum_{w,x,y,z\in\mathbb{Z}_{d}}\int \d\eta(U) \ U_{xw}^2(U^*_{zy})^2.\numberthis\label{eq:haar_integral_intermediate}
\end{align*}

The integral in \eqref{eq:haar_integral_intermediate} is of the form \eqref{eq:haar_rank_4}, and hence, by taking
\begin{align*}
i_{1} & =i_{2}=x\\
j_{1} & =j_{2}=w\\
i_{1}^{\prime} & =i_{2}^{\prime}=z\\
j_{1}^{\prime} & =j_{2}^{\prime}=y
\end{align*}
in \eqref{eq:haar_rank_4}, we get
\begin{align*}
\int \d\eta(U) \ U_{xw}^2(U^*_{zy})^2 & =\frac{1}{d^{2}-1}\left(\delta_{xz}\delta_{xz}\delta_{wy}\delta_{wy}+\delta_{xz}\delta_{xz}\delta_{wy}\delta_{wy}\right)  -\frac{1}{d\left(d^{2}-1\right)}\left(\delta_{xz}\delta_{xz}\delta_{wy}\delta_{wy}+\delta_{xz}\delta_{xz}\delta_{wy}\delta_{wy}\right)\\
&=  \frac{2}{d\left(d+1\right)}\delta_{xz}\delta_{wy}.
\end{align*}

Thus, \eqref{eq:haar_integral_intermediate} may be simplified as
\begin{align*}
\int \d\eta(U) \ \mathcal{I}\left(U\right) & =1-\frac{1}{d^{2}}\sum_{w,x,y,z\in\mathbb{Z}^{d}}\frac{2}{d\left(d+1\right)}\delta_{xz}\delta_{wy}\\
 & =1-\frac{1}{d^{2}}\frac{2}{d\left(d+1\right)}d^{2}\\
 & =1-\frac{2}{d\left(d+1\right)},
\end{align*}
which completes our proof.
\end{proof}

\end{document}